\let\oldnl\nl
\newcommand{\nonl}{\renewcommand{\nl}{\let\nl\oldnl}}
\definecolor{mygreen}{RGB}{10,110,230}
\definecolor{myred}{RGB}{10,110,230}
\renewcommand{\epsilon}{\varepsilon}
\DeclareMathOperator{\E}{\ensuremath{\normalfont \textbf{E}}}
\DeclareMathOperator{\cost}{\ensuremath{cost}}
\DeclareMathOperator{\opt}{\ensuremath{opt}}
\newcommand{\hiddencomment}[1]{}
\newcommand{\mc}[1]{\ensuremath{\mathcal{#1}}}
\DeclareMathOperator{\poly}{poly}
\crefname{lemma}{Lemma}{Lemmas}
\crefname{theorem}{Theorem}{Theorems}
\crefname{property}{Property}{Properties}
\crefname{claim}{Claim}{Claims}
\crefname{definition}{Definition}{Definitions}
\crefname{observation}{Observation}{Observations}
\crefname{proposition}{Proposition}{Propositions}
\crefname{assumption}{Assumption}{Assumptions}
\crefname{line}{Line}{Lines}
\crefname{figure}{Figure}{Figures}
\crefname{equation}{}{}
\crefname{section}{Section}{Sections}
\crefname{appendix}{Appendix}{Appendices}
\crefname{algCounter}{Algorithm}{Algorithms}
\Crefname{algCounter}{Algorithm}{Algorithms}
\newtheorem{theorem}{Theorem}
\newtheorem{lemma}{Lemma}[section]
\newtheorem{definition}[lemma]{Definition}
\newtheorem{claim}[lemma]{Claim}
\newtheorem{observation}[lemma]{Observation}
\newtheorem{remark}[lemma]{Remark}
\newtheorem{open}{Open Problem}
\definecolor{mylightgray}{RGB}{240,240,240}
\algnewcommand{\IIf}[2]{\textbf{if} #1 \textbf{then} #2}
\algnewcommand{\EndIIf}{\unskip\ \algorithmicend\ \algorithmicif}
\algnewcommand{\IElse}[1]{\textbf{else} #1}
\newenvironment{graytbox}{
\par\addvspace{0.1cm}
\begin{tcolorbox}[width=\textwidth,
                  boxsep=5pt,
                  left=1pt,
                  right=1pt,
                  top=2pt,
                  bottom=2pt,
                  boxrule=1pt,
                  arc=0pt,
                  colback=mylightgray,
                  colframe=black,
                  ]
}{
\end{tcolorbox}
}
\newenvironment{whitetbox}{
\par\addvspace{0.1cm}
\begin{tcolorbox}[width=\textwidth,
                  boxsep=5pt,
                  left=1pt,
                  right=1pt,
                  top=2pt,
                  bottom=2pt,
                  boxrule=1pt,
                  arc=0pt,
                  colframe=black,
                  colback=white
                  ]
}{
\end{tcolorbox}
}
\newcounter{algCounter}
\newenvironment{algenv}[2]{
    \begin{whitetbox}
        \refstepcounter{algCounter}
        \textbf{Algorithm~\thealgCounter:} #1
        \label{#2}
        
        \vspace{-0.2cm}
        \noindent\rule{\linewidth}{1pt}
        \begin{algorithm}[H]
}{
        \end{algorithm}
    \end{whitetbox}
}
\renewcommand{\paragraph}{%
  \@startsection{paragraph}{4}%
  {\z@}{10pt}{-1em}%
  {\normalfont\normalsize\bfseries}%
}
\title{Correlation Clustering Beyond the Pivot Algorithm}
\author{
Soheil Behnezhad \and 
Moses Charikar \and
Vincent Cohen-Addad \and 
Alma Ghafari \and
Weiyun Ma
}
\date{}
\begin{document}

\maketitle

\thispagestyle{empty}

\newcommand{\apxfactor}[0]{2.997}
\newcommand{\pivot}[0]{{\normalfont \textsc{Pivot}}}
\newcommand{\modifiedpivot}[0]{{\normalfont \textsc{ModifiedPivot}}}

\begin{abstract}
{\setlength{\parskip}{0.1cm}

We study the classic correlation clustering in the dynamic setting. Given $n$ objects and a complete labeling
of the object-pairs as either “similar” or “dissimilar”, the goal is to partition the objects into
arbitrarily many clusters while minimizing disagreements with 
the labels. In the dynamic setting, an update consists of a flip
of a label of an edge.

 In a breakthrough result, [BDHSS, FOCS'19] showed how to maintain a 3-approximation with polylogarithmic update time by providing a dynamic implementation of the \pivot{} algorithm of [ACN, STOC'05]. Since then, it has been a major open problem
 to determine whether the 3-approximation barrier can be broken in the fully dynamic setting.

In this paper, we resolve this problem. Our algorithm, \modifiedpivot{}, locally improves the output of \pivot{} by moving some vertices to other existing clusters or new singleton clusters. We present an analysis showing that this modification does indeed improve the approximation to below 3. We also show that its output can be maintained in polylogarithmic time per update.

}
\end{abstract}

\clearpage

\section{Introduction}

Correlation clustering is a quintessential problem in data analysis, machine learning, and network science, where the task is to cluster a set of objects based on pairwise relationships. Each pair of objects is labeled as either ``similar" or ``dissimilar," and the goal is to produce clusters that best align with these labels. Formally, given $n$ vertices and their pairwise labels, the task is to partition them into arbitrarily many clusters so as to minimize the number of dissimilar labels inside clusters plus the number of similar labels that go across clusters.

 We study correlation clustering in the fully dynamic setting where each update changes the label of a pair. The goal is to maintain a good approximation of correlation clustering at all times while spending a small time per update.

 \subsection*{Background on Correlation Clustering}

The fact that correlation clustering does not require a predetermined number of clusters and that it uses both similarity and dissimilarity of the pairs make it an attractive clustering method for various tasks. Examples include image segmentation \cite{KimYNK14}, community detection \cite{ShiDELM21},  disambiguation tasks \cite{DBLP:journals/tkde/KalashnikovCMN08}, automated labeling \cite{AgrawalHKMT09,ChakrabartiKP08}, and document clustering \cite{BansalBC-FOCS02}, among others.

The correlation clustering problem was introduced by \citet*{BansalBC-FOCS02,BansalBC-ML04}, who showed that a (large) constant approximation can be achieved in polynomial time. There has been a series of polynomial-time algorithms improving the approximation ratio \cite{CharikarGW-FOCS03,AilonCN-STOC05,AilonCN-JACM08,ChawlaMSY14,Cohen-AddadLN-FOCS22,CohenAddadLLN-FOCS23}, with the current best known being the 1.437-approximation by \citet*{CCLLNV24}. It is also known that the problem is APX-hard \cite{CharikarGW-FOCS03}.

\vspace{-0.2cm}
\subsection*{Dynamic Correlation Clustering and the 3-Approximation Barrier} 
\vspace{-0.2cm}

A particularly simple and influential algorithm for correlation clustering is the \pivot{} algorithm of \citet*{AilonCN-STOC05}. The \pivot{} algorithm is remarkably simple: it picks a random vertex $v$, clusters it with vertices that are similar to $v$, then removes this cluster and recurses on the remaining vertices. 

In \cite{AilonCN-STOC05}, it was shown that \pivot{} obtains a 3-approximation for correlation clustering. Thanks to its simplicity, variants of the \pivot{} algorithm have been efficiently implemented in various models, leading to 3- or almost 3-approximations. Examples include the fully dynamic model with polylogarithmic update-time \cite{BehnezhadDHSS-FOCS19,abs-2402-15668}, constant rounds of the strictly sublinear massively parallel computations (MPC) model  \cite{Cohen-AddadLMNP21,AssadiWang-ITCS22,BehnezhadCMT-FOCS22}, a single-pass of the semi-streaming model \cite{CambusKLPU-SODA24,ChakrabartyM-NIPS23}, distributed local and congest models \cite{BehnezhadCMT-FOCS22}, and the classic RAM model where \pivot{} takes linear-time to implement.

Unfortunately, the 3-approximation analysis of the \pivot{} algorithm is tight. That is, there are various inputs on which the \pivot{} algorithm does not obtain any better than a 3-approximation.  Because of this, and the fact that all better approximations require solving large linear programs, the 3-approximation has emerged as a barrier for correlation clustering in various settings. In the case of dynamic inputs, for example, the following problem has remained open for more than 5 years since the paper of \cite{BehnezhadDHSS-FOCS19}:

\begin{open}\label{open}
    Is it possible to maintain a $3-\Omega(1)$ approximation of correlation clustering in $\poly\log n$ update-time?
\end{open}

\clearpage

We note that the problem above has been open even if one allows a much larger update-time of, say, linear in $n$.

We also note that in a very recent work \cite{combinatorialCC}, a new combinatorial algorithm was proposed for correlation clustering that obtains a much better than 3-approximation. Unfortunately, their algorithm falls short of breaking
the 3-approximation of the \pivot{} algorithm in the 
dynamic model.

\subsection*{Our Contribution} 
We show how to break the 3 bound by introducing a new
algorithm, \modifiedpivot{}, which we formalize as \cref{alg:modified-pivot}. Our algorithm modifies the output of \pivot{} by locally moving some vertices to other existing clusters or new singleton clusters. We present an analysis showing that this modification does indeed improve the approximation to below 3. Importantly, our criteria for these local moves is extremely simple. This allows the \modifiedpivot{} algorithm to be implemented as efficiently as the pivot algorithm in the dynamic setting.

\begin{graytbox}
    \begin{theorem}[\textbf{Fully Dynamic}]\label{thm:dynamic}
        There is an algorithm that maintains a $(3-\Omega(1))$-approximate correlation clustering by spending $(\poly\log n)$ time per label update against an oblivious adversary. The bounds on the update-time and the approximation hold in expectation.
    \end{theorem}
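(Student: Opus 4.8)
The plan is to split Theorem~\ref{thm:dynamic} into a static combinatorial claim and a dynamic efficiency claim, and then compose them. The combinatorial claim is that on every instance the clustering produced by \modifiedpivot{} has expected cost at most $(3-\delta)\cdot\opt$ for an absolute constant $\delta>0$, the expectation being over the random permutation underlying \pivot{} together with any internal coins of \modifiedpivot{}. The efficiency claim is that the (random) output of \modifiedpivot{} can be maintained under single-label flips in $\poly\log n$ expected time per flip against an oblivious adversary. Given both, the theorem follows immediately: run the dynamic structure, and at every moment its clustering is simultaneously a $(3-\delta)$-approximation in expectation and carries expected update cost $\poly\log n$.

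For the approximation bound I would stay inside the bad-triangle / LP-duality framework of \citet{AilonCN-STOC05}. Recall that \pivot{}'s expected number of disagreements can be written as a sum over ``bad triangles'' (triples with exactly one dissimilar pair), each disagreement being charged to the bad triangle formed by the disagreeing pair and the pivot responsible for it, and that comparing this sum to a fractional packing of bad triangles --- a lower bound on $\opt$ --- yields the factor $3$, which is tight because a single bad triangle can be charged in up to three symmetric ways. The first sub-step is a ``moves never hurt'' lemma: since \modifiedpivot{} relocates a vertex $v$ only when this does not increase the cost relative to the fixed \pivot{} clustering $\mathcal C$ --- all relocation decisions being taken simultaneously against $\mathcal C$ rather than sequentially --- its expected cost is at most that of \pivot{}, hence at most $3\cdot\opt$, which already re-proves the known bound. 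The real work is to recover a constant fraction of the cost: I would show that on any instance a constant fraction of \pivot{}'s expected disagreements is incurred at vertices $v$ that disagree with a constant fraction of their own \pivot{}-cluster, that such vertices are relocated by the rule, and that each relocation strictly decreases the cost by a constant times that vertex's contribution --- in other words, the classical tight examples for \pivot{} (where nearly every clustered vertex disagrees with a constant fraction of its cluster) are precisely the examples where the local moves pay off. Carrying the randomized threshold of the move rule (discussed below) through this accounting, and controlling the worst-case interaction between different vertices' moves, is the most delicate part of this half.

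For the dynamic implementation I would build on top of the data structure of \citet{BehnezhadDHSS-FOCS19}, used as a black box: it maintains the \pivot{} clustering induced by a fixed random permutation $\pi$ in $\poly\log n$ expected time per flip, each flip changing the pivot of only $\poly\log n$ vertices in expectation, and it supports querying a vertex's pivot and approximately counting or sampling a vertex's neighbours grouped by the pivot of the cluster containing them. On top of this I maintain, for each vertex $v$, its \modifiedpivot{} decision --- stay in its cluster $A$, become a singleton, or join a neighbouring cluster --- which in the version I would design depends only on comparing the number of $v$'s similar pairs inside $A$, the number of its dissimilar pairs inside $A$, and the analogous counts for a small number of candidate neighbouring clusters, all compared against an independent random threshold $\tau_v$ attached to $v$. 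When a flip occurs, the pivots of a $\poly\log n$-size set $S$ of vertices change; for each $x\in S$ I recompute $x$'s own decision and then the decisions of those neighbours of $x$ whose counts moved, using the approximate counting oracle so that each single recomputation costs $\poly\log n$ rather than $\deg(x)$.

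The step I expect to be the main obstacle is exactly this cascade: when a high-degree vertex $x$ moves from cluster $A$ to cluster $A'$, it perturbs by one unit a relevant neighbour-count of every neighbour of $x$, of which there may be $\Theta(n)$, so naively a single flip could force $\Theta(n)$ decisions to be revisited. The plan to defeat this has two ingredients. First, the random per-vertex threshold smooths the decision boundary: a unit change in one of $w$'s counts flips $w$'s decision with probability only $O\!\left(1/(|N^+(w)\cap A|+|N^-(w)\cap A|)\right)$, and, using the randomness of $\pi$ exactly as in \cite{BehnezhadDHSS-FOCS19} to argue that the cluster a given vertex sits in is typically not too small, one concludes that the expected number of neighbours of $x$ whose decision actually changes is $\poly\log n$ per pivot change. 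Second, no decision is ever evaluated exactly: each is estimated through the sampling oracle of the underlying structure, so one evaluation costs $\poly\log n$; the price is that the implemented algorithm works with approximate, randomized decisions, and one must verify that running \modifiedpivot{} with these still meets the $(3-\delta)$ bound --- which is the reason the random threshold is built into the combinatorial analysis from the start. The expected update time is then $\poly\log n$ pivot changes, times $\poly\log n$ affected decisions each, times $\poly\log n$ per evaluation.
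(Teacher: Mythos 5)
Your top-level decomposition (a static ``beats $3$'' theorem for \modifiedpivot{} plus a dynamic maintenance result built on \cite{BehnezhadDHSS-FOCS19}) matches the paper, but the combinatorial half has a genuine gap. The ``moves never hurt'' lemma is false for the algorithm one actually needs: to beat $3$ you must both eject dissimilar neighbors to singletons \emph{and} pull in non-neighbors $w$ with $|N(w)\,\Delta\, C_v|\le \epsilon|C_v|$ (the complete-graph-minus-one-edge example is only fixed by the second move), and pulling in the full candidate set $A_v$ can make the cost super-constant times worse than \pivot{} (complete bipartite $V_1,V_2$ with $|V_2|\gg|V_1|$). The paper's two key ideas, which your sketch is missing, are (i) subsample the moved sets down to size $O(\delta|C_v|)$ so the cluster stays close to $C_v$, and (ii) when $|A_v|>k|C_v|$ --- precisely the regime where local moves cannot recover the loss --- charge \emph{non-local} bad triangles $(u,w,x)$ with $w,x\in A_v$ that do not contain the pivot, fractionally, to certify that $\opt$ is already large and \pivot{} is already better than $3$-approximate there. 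Your proposed argument (``a constant fraction of \pivot{}'s disagreements sit at vertices that disagree with a constant fraction of their own cluster, and relocating each such vertex strictly decreases cost'') covers neither the move-in direction nor the large-$A_v$ case, and the premise itself fails on the paper's second hard instance, where the costly vertex is a singleton that disagrees with nothing in its own cluster.

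On the dynamic side you are solving a harder problem than necessary, created by your own decision rule. The paper defines $D_v$ and $A_v$ relative to $C_v=\{v\}\cup N(v)$ in the residual graph at the moment $v$ is pivoted --- i.e., relative to the \pivot{} clustering, not the modified one --- so one vertex's relocation never perturbs another vertex's decision and the cascade you devote most of your effort to (random thresholds, smoothed decision boundaries, approximate decisions that must be threaded back through the approximation analysis) simply does not arise. With that design, $|N(u)\cap C_v|$ and $|N(u)\Delta C_v|$ can be computed \emph{exactly} in $O(\log n)$ time from the BSTs of \cite{BehnezhadDHSS-FOCS19}, and the only nontrivial point is enumerating the candidates for $A_v$ when $v$ becomes a pivot, which the paper handles by sampling $\Theta(\log n)$ vertices of $C_v$ and scanning their $N^+$ lists, using that every $w\in A_v$ is adjacent to almost all of $C_v$. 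I would redesign your algorithm so the move criteria are anchored to the \pivot{} clusters before attempting the dynamic analysis; as proposed, both the correctness of your ``$3-\delta$'' bound and the termination of your cascade argument are unestablished.
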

\end{graytbox}

\cref{thm:dynamic} resolves \Cref{open}.

\section{Our Techniques}

In this section, we describe the informal intuition behind our new \modifiedpivot{} algorithm. 

As standard, we model the input to correlation clustering as a graph $G=(V, E)$ with the vertex set $V$ corresponding to the objects and the edge-set $E$ representing the \underline{similar} labels. In particular, an edge $(u, v) \in E$ implies $u$ and $v$ are similar and a non-edge $(u, v) \not\in E$ implies $u$ and $v$ are dissimilar. 

It would be useful to start with the \pivot{} algorithm and discuss a few examples on which it only obtains a 3-approximation. We will then discuss how \modifiedpivot{} overcomes all of these examples and breaks the 3-approximation barrier.

With the graphic view discussed above, the \pivot{} algorithm works as follows. It iteratively picks a vertex $v$, clusters $v$ with its remaining neighbors, then removes this cluster from the graph. This continues until all vertices are removed.

\paragraph{Problem 1: \pivot{} Clusters Dissimilar Pairs.} Our first example shows a scenario where the \pivot{} algorithm, mistakenly, clusters together vertices that have very different neighborhoods. Such mistakes alone cause the \pivot{} algorithm to pay 3 times the optimum cost in these examples.

Consider a graph composed of two disjoint cliques each on $n/2$ vertices connected by one edge $(u, v)$. The optimal solution is to put the two cliques in disjoint clusters, paying only a cost of one for the edge $(u, v)$. In fact, this is exactly the clustering that \pivot{} reports so long as its first \pivot{} is not one of the endpoints of the edge $(u, v)$. However, if one of the endpoints of the edge $(u, v)$ is selected as the first pivot, then the algorithm puts $u$ and $v$ in the same cluster, paying a cost of $n-2$. The figure below illustrates this. On the left hand side, we have the optimal clustering. On the right hand side, we have the output of \pivot{} if one of the endpoints of the edge connecting the two cliques is picked as a pivot.

\begin{figure}[h]
    \centering
    \includegraphics[scale=0.75]{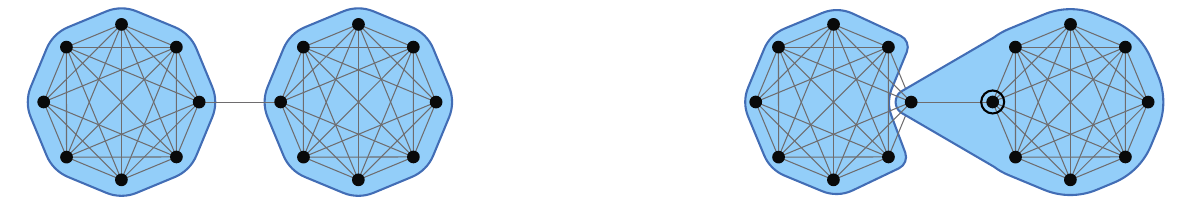}
    \label{fig:enter-label}
\end{figure}

 Note that the probability that one of $u$ or $v$ is chosen as the first pivot is $2/n$, therefore, the expected cost of \pivot{} in this example is
$$
    \Pr[\text{first pivot} \not\in \{u, v\}] \cdot 1 + \Pr[\text{first pivot} \in \{u, v\}] \times (n-2) = (1-2/n) + \frac{2}{n} (n-2) \xrightarrow[n \to \infty]{} 3,
$$
which is 3 times the optimum cost.

\paragraph{Fixing Problem 1: Moving Dissimilar Neighbors to Singleton Clusters.} Our idea for fixing Problem~1 is a natural one. Whenever our \modifiedpivot{} algorithm picks a pivot $v$, we do not necessarily put all of its remaining neighbors in the cluster of $v$. Instead, if a neighbor $u$ of $v$ has a very different neighborhood than $v$, we move it to a singleton cluster. More formally, for some small constant $\delta > 0$, we first define the set $D_v$ to include neighbors $u$ of $v$ such that $|N(u) \cap N(v)| \lesssim \delta N(v)$, where $N(x)$ denotes the neighbor-set of vertex $x$ in the current graph. Note that for sufficiently small $\delta$, a vertex $u \in D_v$ has non-edges to nearly all neighbors of $v$ -- so it can only improve the cost if we move such vertices to singleton clusters.

\begin{wrapfigure}{r}{0.35\textwidth}
  \centering
  \includegraphics[scale=0.75]{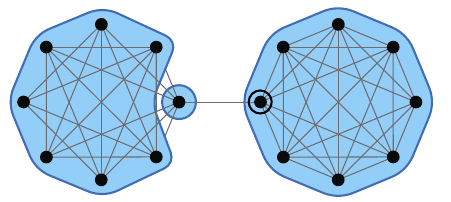}
\end{wrapfigure}
Let us now run this modified algorithm on the example of Problem~1. As before, if the first pivot is not one of the endpoints of $(u, v)$, then the algorithm returns the optimal solution with a cost of 1. But now if one of the endpoints of $(u, v)$ is picked as the first pivot, the other endpoint will move to a singleton cluster. It can be confirmed that the cost is only $n/2$ in this case. Therefore, the expected cost of the algorithm in this case will now be improved to 2 since
$$
    \Pr[\text{first pivot} \not\in \{u, v\}] \cdot 1 + \Pr[\text{first pivot} \in \{u, v\}] \times n/2 = (1-2/n) + \frac{2}{n} (n/2) \leq 2.
$$

\paragraph{Problem 2: \pivot{} Separates Similar Pairs.} It turns out that moving vertices to singleton clusters is not enough. Our next bad example for the \pivot{} algorithm shows a scenario where the \pivot{} algorithm, mistakenly, separates vertices that have to be clustered together, causing it to pay 3 times the optimum cost.

\begin{wrapfigure}{r}{0.2\textwidth}
  \centering
  \vspace{-0.5cm}
  \includegraphics[scale=0.75]{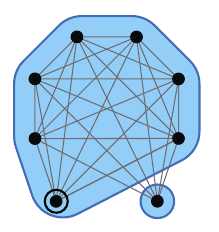}
\end{wrapfigure}
Consider a graph on $n$ vertices where all pairs are edges except one pair $(u, v)$ which is a non-edge. The optimum solution here is to put everything in the same cluster, paying only a cost of one for the non-edge. This is exactly what the \pivot{} algorithm does too, except when the first pivot chosen is one of the endpoints of the non-edge. In this case, the other endpoint of the non-edge will be put in a singleton cluster, resulting in a cost of $n-2$ as illustrated in the figure of the right hand side. 

Note that the expected cost is 3 times the optimum cost of 1 in this case too, since:
$$
    \Pr[\text{first pivot} \not\in \{u, v\}] \cdot 1 + \Pr[\text{first pivot} \in \{u, v\}] \times (n-2) = (1-2/n) + \frac{2}{n} (n-2) \xrightarrow[n \to \infty]{} 3.
$$

\paragraph{Fixing Problem 2: Moving Non-Neighbors to Pivot's Cluster.} To fix Problem~2, whenever we pick a pivot $v$, we would like to identify a set $A_v$ of non-neighbors of $v$ whose neighborhoods are similar to $N(v)$ and move them to the cluster of $v$ as well. 

The problem with doing so is that the set $A_v$ may be too large, and moving them all to the cluster of $v$ will completely change its structure. This is best described via an example. Consider a complete bipartite graph with vertex parts $V_1, V_2$ where $|V_2| \gg |V_1|$. Here the solution that puts all vertices in singleton clusters pays a cost of $|V_1| \cdot |V_2|$. Therefore, $OPT \leq |V_1| \cdot |V_2|$. But now take the first pivot $v$, which with probability $|V_2| / (|V_1| + |V_2|) = 1-o(1)$ belongs to the larger part $V_2$. Now note that all the rest of vertices in $V_2$ will have exactly the same neighborhood as $v$. Moving them all to the cluster of $v$ results in clustering all the vertices of the graph together, resulting in a cost of $\binom{|V_1|}{2} + \binom{|V_2|}{2}$ for the non-edges inside $V_1$ and $V_2$. The approximation ratio will then be at least
$$
\frac{\binom{|V_1|}{2} + \binom{|V_2|}{2}}{|V_1||V_2|} \geq \frac{\binom{|V_2|}{2}}{|V_1||V_2|} = \frac{|V_2|-1}{2|V_1|} = \omega(1).
$$
In other words, not only moving similar neighbors to the cluster of the pivot does not improve the approximation to below 3, but it worsens it to super-constant.

To fix this problem, we do not move all the vertices in $A_v$ to the cluster of $v$. Instead, we subsample some $\delta |N(v)|$ vertices in $A_v$ and only move these vertices to $v$'s cluster. It is important to note that in case $|N(v)| \ll |A_v|$, as is the case in the complete bipartite example, we only move $o(1)$ fraction of the vertices of $A_v$ to the cluster of $v$. Had this been a constant, our analysis would have been much simpler. However, we will need a much more global analysis to argue that in case $A_v$ is much larger than $N(v)$, then the output of \pivot{} is already better than 3-approximate.

\paragraph{The Final Analysis:} Up to this point, we've presented a number of instances where the approximation ratio of the \pivot{} algorithm is no better than 3. We've also explored some local improvements that would improve the approximation on these instances. What remains to show is that these local improvements do indeed beat 3-approximation on all inputs. 

Our analysis follows the standard framework of charging mistakes on {\em bad triangles}, but has an important twist. As standard, we say three vertices $\{u, v, w\}$ form a bad triangle if exactly two of the pairs $\{u, v\}$, $\{u, w\}$, $\{v, w\}$ are edges. It's important to note that regardless of how these vertices are clustered, at least one pair within a bad triangle must be incorrectly clustered. Consequently, if we can identify $\beta$ edge-disjoint bad triangles within $G$, then we can infer that the optimum cluster cost is at least $\beta$. This holds even if we identify a {\em fractional} packing of bad triangles \cite{AilonCN-STOC05}. This naturally provides a framework for analyzing the approximation ratio of correlation clustering algorithms, where the mistakes made by the algorithm are blamed on bad triangles. The crux of the analysis will then be focused on formalizing the charging scheme, i.e., which triangle to charge for each mistake and analyzing how many times each pair (edge or non-edge) is charged.

The charging scheme used for the \pivot{} algorithm by \cite{AilonCN-STOC05} is highly local, in the sense that it charges any mistake to a bad triangle involving this mistake. Our charging scheme (formalized as \cref{alg:charging}) differs from this in two crucial ways:
\begin{itemize}
    \item \textbf{Charging triangles fractionally:} Instead of charging a single bad triangle {\em integrally} for each mistake, we charge various bad triangles {\em fractionally}. In other words, there is no one-to-one mapping between our mistakes and the triangles charged. Instead, we argue that sum of the charges to the bad triangles in total is as large as the mistakes we make (\cref{cl:charge-enough}), and that sum of the charges involving each pair is not too large (\cref{clm:pair-bound}).
    \item \textbf{Charging non-local triangles:} When a pivot $v$ is picked in our \modifiedpivot{} algorithm, unlike the analysis of \cite{AilonCN-STOC05}, we do not just charge bad triangles involving the pivot. For instance, in the example of the complete bipartite graph discussed above, we charge many bad triangles that do not involve the pivot. This is the key in our analysis to show that when $A_v$ is too large compared to $C_v$, the output of \pivot{} is already good.
\end{itemize}

\section{The \modifiedpivot{} Algorithm}

Our \modifiedpivot{} algorithm is formalized below as \cref{alg:modified-pivot}. 

Let us provide some intuition about \modifiedpivot{}. Similar to \pivot{}, it iteratively picks a random pivot $v$, and based on it identifies the following sets:
\begin{itemize}
    \item $C_v$: This is the set of neighbors of $v$ still in the graph plus vertex $v$ itself. This is exactly the cluster that \pivot{} would output for $v$, but we will modify it.
    \item $D_v$: These are vertices that belong to $C_v$ but have very different neighborhood than $C_v$. Intuitively, we would like to move vertices of $D_v$ to singleton clusters instead of putting them in the cluster of $v$.
    \item $D'_v$: This is a subsample of $D_v$. Instead of moving all vertices of $D_v$ to singleton clusters, we only move vertices of $D'_v$ to singleton clusters to make sure that the cluster of $v$ does not dramatically differ from $C_v$ in size.
    \item $A_v$: These are vertices that are not adjacent to the pivot $v$, but their neighborhoods are almost the same as $C_v$. Moving each of these vertices to $C_v$ will improve our cost, provided that we do not move too many of them inside.
    \item $A'_v$: This is a subsample of $A_v$. We only move vertices of $A'_v$ to the cluster of $v$ to ensure, again, that the cluster of $v$ remains relatively close to $C_v$ in size.
    \item $A$: The set $A$ is initially empty. Whenever we pick a pivot $v$, we move all the vertices of $A_v$ to $A$. We define this set because we do not want a vertex $w$ to participate in $A_v$ and $A_u$ for two different pivots $u$ and $v$.
\end{itemize}

\begin{figure}
\begin{algenv}{The \modifiedpivot{} algorithm.}{alg:modified-pivot}
    \nonl \textbf{Parameters:} $\epsilon \in (0, \frac{1}{14}]$, $\delta \in [4\epsilon, \frac{2}{7}]$, $k \geq 1$. 

    $A \gets \emptyset$. 
    
    \While{$V \not= \emptyset$}{
        Pick a vertex $v \in V$ uniformly at random and mark it as a {\em pivot}.
        
        Let $C_v \gets \{v\} \cup N(v)$, where $N(v)$ is the set of neighbors of $v$ still in $V$.
        
        Let $D_v \gets  \{ u \mid u \in N(v) \text{ and } |N(u) \cap C_v| \leq \delta |C_v|-1\}$.

        Let $D'_v$ include $\min\{|D_v|, \lfloor \delta |C_v| \rfloor \}$ vertices of $D_v$ uniformly at random.
        
        Let $A_v := \{w \mid w \in V \setminus C_v \text{ and } w \not\in A \text{ and } |N(w) \Delta C_v| \leq \epsilon|C_v| -1 \}$.

        Let $A'_v$ include  $\min\{|A_v|, \lfloor \delta |C_v| \rfloor \}$ vertices of $A_v$ uniformly at random.

        Put each vertex of $(D'_v \setminus A) \cup (A_v \setminus A'_v)$ in a singleton cluster.

        Put all vertices of $(C_v \cup A'_v) \setminus (D'_v \cup A)$ in the same cluster.

        $A \gets A \cup A_v$. \label{line:A-update-1}
        
        Remove vertices of $C_v$ from $V$. 
        
        \nonl (We emphasize that even though vertices in $A_v$ get clustered here, they are not removed from $V$ in this step and so can be picked as pivots later on.)
    }
\end{algenv}
\end{figure}

The following observation shows that the output of \modifiedpivot{} is a valid clustering. What remains is to analyze its approximation ratio, which we do in \cref{sec:analysis}.

\begin{observation}
    The output of \cref{alg:modified-pivot} is always a valid clustering. That is, each vertex belongs to exactly one cluster of the output with probability 1.
\end{observation}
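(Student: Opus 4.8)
The plan is to prove, by induction on the iterations of the while loop, an invariant of the form: \emph{after each iteration, the clusters created so far are pairwise disjoint, and a vertex has been placed in one of them precisely when it has been removed from $V$ or added to $A$.} The observation then follows immediately: each iteration removes the set $C_v$, which contains the pivot $v$ and is therefore nonempty, so $|V|$ strictly decreases and the loop terminates with $V=\emptyset$; at termination every vertex has been removed from $V$, so by the invariant it lies in exactly one cluster. (The assertion is in fact deterministic for every fixing of the random choices; the phrase ``with probability $1$'' only reflects that \cref{alg:modified-pivot} is randomized.)

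For the inductive step, fix an iteration with pivot $v$, and let $A$ denote its value just before \cref{line:A-update-1}, i.e.\ the value used in the two clustering lines. First I would record the containments $D'_v\subseteq D_v\subseteq N(v)\subseteq C_v$ and $A'_v\subseteq A_v\subseteq V\setminus C_v$, which give $A'_v\cap D'_v=\emptyset$ and $A_v\cap C_v=\emptyset$. A short set computation then shows that the singleton set $(D'_v\setminus A)\cup(A_v\setminus A'_v)$ and the cluster $(C_v\cup A'_v)\setminus(D'_v\cup A)$ are disjoint and together equal $S_v:=(C_v\setminus A)\cup A_v$; so this iteration assigns a cluster, unique within the iteration, to exactly the vertices of $S_v$ and to no others.

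Next I would check that none of the vertices of $S_v$ was clustered in an earlier iteration, so the newly created clusters are disjoint from all previous ones and every vertex of $S_v$ receives exactly one cluster overall. By the induction hypothesis a vertex was clustered before only if it was previously removed from $V$ or previously added to $A$. Every vertex of $C_v$ is still in $V$ (hence not removed earlier), and if it were in $A$ it would be excluded from $S_v=(C_v\setminus A)\cup A_v$, using $A_v\cap A=\emptyset$ from the definition of $A_v$; similarly $A_v\subseteq V$ and $A_v\cap A=\emptyset$. Finally the bookkeeping closes: after the iteration the set of vertices clustered so far has grown by $S_v$, and $(\text{removed from }V)\cup A$ has also grown by exactly the vertices of $C_v\cup A_v$ that it did not already contain, namely $(C_v\setminus A)\cup A_v=S_v$ (since $C_v,A_v\subseteq V$ were not yet removed and $C_v\cap A$ was already in $A$). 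This re-establishes the invariant.

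The one place where validity could conceivably break — and the step I would write out most carefully — is the fact, emphasized in the algorithm, that a vertex $w\in A_v$ is clustered in iteration $v$ but is \emph{not} removed from $V$, so it may later be picked as a pivot or lie in $C_u$ for a subsequent pivot $u$. I would verify that in any such later iteration $w$ is not re-clustered: at that time $w\in A$, so $w\notin D'_u\setminus A$, and $w\notin A_u$ because the definition of $A_u$ excludes members of $A$, and $w\notin(C_u\cup A'_u)\setminus(D'_u\cup A)$ again because $w\in A$. Hence $w$ retains the unique cluster it received in iteration $v$, exactly as the induction hypothesis requires. This is the only genuinely new feature compared with the validity argument for the plain \pivot{} algorithm, and is the natural candidate for the main obstacle.
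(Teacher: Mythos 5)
Your proof is correct and follows essentially the same route as the paper's: both rest on the invariant that at any point the clustered vertices are exactly those removed from $V$ or placed in $A$, and on the fact that the algorithm never clusters a vertex that is already removed or already in $A$ (your explicit check for $w\in A_v$ being later encountered as a pivot or as a member of some $C_u$). Your version is just a more carefully written induction; no substantive difference.
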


\begin{proof}
    First, observe that for every $i$, the set of vertices removed from $V$ in the first $i$ iterations of \Cref{alg:modified-pivot} is identical to the set of vertices clustered in the first $i$ iterations of \pivot{} under the same random coin tosses. Since \Cref{alg:modified-pivot} only removes a vertex from $V$ if it has been clustered (either in the same iteration or an earlier iteration), this means that every vertex gets clustered at some point in \Cref{alg:modified-pivot}. Moreover, if a vertex is clustered in some iteration of \Cref{alg:modified-pivot}, then it is either removed from $V$ or added to the set $A$ at the end of that iteration. Since \Cref{alg:modified-pivot} never clusters a vertex that has been removed from $V$ or is already in $A$, this means that a vertex cannot be clustered more than once. Thus \cref{alg:modified-pivot} always outputs a valid clustering.
\end{proof}

\section{Analysis of {\normalfont \modifiedpivot{}}}\label{sec:analysis}

In this section, we analyze the approximation ratio of the \modifiedpivot{} algorithm, proving the following theorem:

\begin{theorem}\label{thm:approx}
    The clustering output by the \modifiedpivot{} algorithm has cost at most $\apxfactor$ times the optimal cost in expectation.
\end{theorem}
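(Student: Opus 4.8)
The plan is to use the bad-triangle charging framework, in the fractional, non-local form described in the overview above. Recall the fact we are entitled to assume: for any fractional packing $y : \mathcal{B} \to \mathbb{R}_{\ge 0}$ of the set $\mathcal{B}$ of bad triangles — i.e.\ any $y$ with $\sum_{t \ni e} y_t \le 1$ for every pair $e$ — we have $OPT \ge \sum_t y_t$, since each bad triangle forces at least one disagreement regardless of the clustering. Hence it suffices, for an arbitrary fixed graph $G$, to construct a weighting $\phi : \mathcal{B} \to \mathbb{R}_{\ge 0}$ with (i) $\sum_t \phi_t \ge \E[\cost(\modifiedpivot)]$ and (ii) $\sum_{t \ni e} \phi_t \le \apxfactor$ for every pair $e$: then $\phi/\apxfactor$ is a valid packing and $OPT \ge \sum_t \phi_t / \apxfactor \ge \E[\cost(\modifiedpivot)]/\apxfactor$, which is the claimed bound.

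I would obtain $\phi$ as the expectation of a randomized charging procedure (\cref{alg:charging}) coupled to the run of \modifiedpivot{}. In each iteration, once the pivot $v$ and the sets $C_v, D_v, D'_v, A_v, A'_v$ are fixed, the procedure counts the disagreements this iteration is ultimately responsible for — net of the savings from sending $D'_v$ to singletons and the extra cost from pulling $A'_v$ into the cluster of $v$, and including the slack from not moving all of $D_v$, respectively not excluding all of $A_v$, when these sets exceed $\lfloor \delta|C_v|\rfloor$ — and then dispenses exactly that much charge onto a carefully chosen multiset of bad triangles. When $A_v$ and $D_v$ are comparable in size to $C_v$ (the ``local'' regime), these are \pivot{}-style triangles through $v$ and two vertices of $C_v \cup A_v$, but split fractionally over many witnesses so that no pair incident to $v$ is overloaded; when $A_v$ is much larger than $C_v$ (the complete-bipartite-type regime), $A_v \cup C_v$ induces an almost-clique with $\Omega(|A_v|^2)$ edge-disjoint bad triangles which we charge instead, the point being that in that regime \pivot{} already pays comfortably below $3 \cdot OPT$ on that part, and symmetrically when $D_v$ is huge the cluster $C_v$ is so sparse that \pivot{} overpays and again there is a surplus of bad triangles to absorb the charge. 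We then set $\phi_t := \E[\text{charge placed on } t]$, a deterministic function of $G$.

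The two substantive claims are \cref{cl:charge-enough} ($\sum_t \phi_t \ge \E[\cost]$) and \cref{clm:pair-bound} ($\sum_{t \ni e}\phi_t \le \apxfactor$). For the first I would argue iteration by iteration: conditioning on the current vertex set and on the pivot, verify that the charge dispensed is at least the expected number of disagreements attributable to that iteration. The parameter constraints $\epsilon \le \tfrac1{14}$ and $4\epsilon \le \delta \le \tfrac27$ are exactly what makes the elementary per-iteration inequalities go through — e.g.\ that every $u \in D_v$ really does send almost all of its pairs with $C_v$ out of $C_v$, so moving it to a singleton is a strict local improvement, and that the $A'_v$ vertices are close enough to $C_v$ in neighborhood that pulling them in costs little. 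For \cref{clm:pair-bound} I would fix a pair $e = \{a,b\}$ and split $\sum_{t \ni e}\phi_t$ according to the iteration and the role $e$ plays in it (an edge cut, a non-edge placed inside a cluster, or one of the non-local charges), then use the randomness of the pivot and of the subsamples $D'_v, A'_v$ to show that each contribution is in expectation what it is in the ACN analysis of \pivot{}, \emph{minus} a positive constant — which is precisely the source of the $3 - \Omega(1)$ improvement. The global set $A$, which forbids any vertex $w$ from belonging to $A_v$ for two distinct pivots, is what prevents double counting across iterations here.

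I expect the crux to be the interaction between the global regimes and the pair bound: one must show simultaneously that when $A_v$ (or $D_v$) is large the quantitative surplus of \pivot{} over its own cost on the relevant part is enough to cover the dispensed charge, \emph{and} that the non-local bad triangles charged there are spread thinly enough — across the many edge-disjoint bad triangles available — that, when combined with charges arising from other iterations, no pair ever exceeds the $\apxfactor$ budget. Threading a single constant strictly below $3$ through both regimes, and through all of the subsampling slack introduced by replacing $D_v, A_v$ with $D'_v, A'_v$, is where the real difficulty lies; the remainder is careful but essentially routine bookkeeping.
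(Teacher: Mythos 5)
Your proposal follows essentially the same route as the paper: the fractional bad-triangle packing (charging scheme of width $w$) framework, a charging procedure coupled to the run of \modifiedpivot{}, the two key lemmas (\cref{cl:charge-enough} and \cref{clm:pair-bound}), and the case split between the local regime and the large-$|A_v|$ regime with non-local triangles inside $A_v$. The plan is sound and matches the paper's architecture, though of course all of the quantitative per-iteration and per-pair bounds (the content of \cref{alg:charging} and of Sections 4.3--4.4) remain to be carried out.
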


\begin{remark}
    We note that we have not tried to optimize the approximation ratio in \cref{thm:approx} as our main contribution is the qualitative result that the $3$-approximation is not the ``right'' bound for correlation clustering across various settings.
\end{remark}

The analysis still fits into the framework of charging {\em bad triangles} as in the original 3-approximation analysis of the \pivot{} algorithm \cite{AilonCN-JACM08}. However, the triangles charged in our analysis are very different from \cite{AilonCN-JACM08}. We first provide the needed background on charging bad triangles in \cref{sec:badtriangles}, then formalize our analysis using this framework in \cref{sec:chargingscheme}.

\subsection{Background on Charging Bad Triangles}\label{sec:badtriangles}

Let us first overview the framework of {\em charging bad triangles} \cite{AilonCN-JACM08}. We say three distinct vertices $\{a, b, c\}$ in $V$ form a {\em bad triangle} if exactly two of the pairs $\{a, b\}, \{a, c\}, \{b, c\}$ belongs to $E$. Let $BT$ be the set of all bad triangles in the graph.

\begin{definition}
Let $\mc{A}$ be a (possibly randomized) algorithm for correlation clustering. We say an algorithm $\mc{S}$ is a {\em charging scheme of width $w$ for $\mc{A}$} if for every given output clustering $\mc{C}$ of $\mc{A}$ and every bad triangle $t \in BT$, algorithm $\mc{S}$ specifies a real $y_t \geq 0$ such that:
\begin{enumerate}
    \item $\sum_{t} y_t \geq \cost(\mc{C})$.
    \item For every distinct $u, v \in V$ (which may or may not belong to $E$), it holds that
    $$
    \E_{\mc{A}}\left[\sum_{t \in BT: u, v \in t}  y_t\right] \leq w.
    $$
\end{enumerate}
\end{definition}

The following lemma shows why charging schemes are useful.

\begin{lemma}\label{lem:width-gives-apx}
Let $\mc{A}$ be any (possibly randomized) correlation clustering algorithm. If there exists a charging scheme of width $w$ for $\mc{A}$, then for the clustering $\mc{C}$ produced by $\mc{A}$,
$$
    \E_{\mc{A}}[\cost(\mc{C})] \leq w \cdot \opt(G).
$$
\end{lemma}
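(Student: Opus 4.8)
The plan is to pass from the random, clustering‑dependent charges $y_t$ to a \emph{deterministic} fractional packing of bad triangles, and then invoke the standard fact that a fractional bad‑triangle packing lower‑bounds the optimal correlation clustering cost.

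First I would record the packing fact. Fix any clustering $\mc{C}^\star$ of $G$, and for each unordered pair $\{u,v\}$ let $x_{uv}\in\{0,1\}$ indicate whether $\mc{C}^\star$ disagrees on that pair, so that $\cost(\mc{C}^\star)=\sum_{\{u,v\}}x_{uv}$. The key observation is that every bad triangle $t=\{a,b,c\}$ has at least one pair on which $\mc{C}^\star$ disagrees: if, say, $\{a,b\},\{a,c\}\in E$ and $\{b,c\}\notin E$ while $\mc{C}^\star$ agreed on all three pairs, then $a,b$ would be co‑clustered and $a,c$ would be co‑clustered, forcing $b,c$ to be co‑clustered despite $\{b,c\}\notin E$ — a contradiction. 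Hence $\sum_{\{u,v\}\subseteq t}x_{uv}\ge 1$ for every $t\in BT$. Consequently, for any nonnegative weights $\{z_t\}_{t\in BT}$ with $\sum_{t\ni\{u,v\}}z_t\le 1$ for every pair $\{u,v\}$, the (finite) double sum can be reordered to give
$$
\sum_{t\in BT}z_t\;\le\;\sum_{t\in BT}z_t\sum_{\{u,v\}\subseteq t}x_{uv}\;=\;\sum_{\{u,v\}}x_{uv}\sum_{t\ni\{u,v\}}z_t\;\le\;\sum_{\{u,v\}}x_{uv}\;=\;\cost(\mc{C}^\star),
$$
and taking $\mc{C}^\star$ to be optimal yields $\sum_{t\in BT}z_t\le\opt(G)$ for every such fractional packing.

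Next I would apply this with the specific choice $z_t:=\E_{\mc{A}}[y_t]/w$, noting that $BT$ and the set of pairs are finite so all sums and expectations interchange freely. Property~2 of the charging scheme says $\E_{\mc{A}}\!\big[\sum_{t\ni\{u,v\}}y_t\big]\le w$, so by linearity of expectation $\sum_{t\ni\{u,v\}}z_t\le 1$ for every pair, i.e.\ $\{z_t\}$ is a valid fractional bad‑triangle packing; the previous paragraph then gives $\sum_{t}\E_{\mc{A}}[y_t]\le w\cdot\opt(G)$. On the other hand, Property~1 gives $\sum_t y_t\ge\cost(\mc{C})$ pointwise over the randomness of $\mc{A}$, so by monotonicity and linearity of expectation $\E_{\mc{A}}[\cost(\mc{C})]\le\sum_t\E_{\mc{A}}[y_t]$. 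Chaining the two inequalities proves the lemma. There is no serious obstacle here; the only point that needs a little care is to average the charges \emph{before} comparing against $\opt(G)$ — the $y_t$ are random variables depending on the output $\mc{C}$, whereas $\opt(G)$ is deterministic, so the packing must be $\E_{\mc{A}}[y_t]/w$ rather than $y_t/w$.
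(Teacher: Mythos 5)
Your proof is correct and is precisely the standard primal--dual (fractional bad-triangle packing) argument that the paper itself invokes by reference to \cite{AilonCN-STOC05} and \cite[Appendix~C]{BehnezhadCMT-FOCS22} rather than reproving. The one point you flag — that the packing must be built from the averaged charges $\E_{\mc{A}}[y_t]/w$ because Property~2 only constrains expectations — is handled correctly, so there is nothing to add.
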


\Cref{lem:width-gives-apx} is a standard result in the literature and follows from a simple primal dual argument. See for example \cite{AilonCN-STOC05} or \cite[Appendix~C]{BehnezhadCMT-FOCS22} for its proof.

\subsection[Our Charging Scheme for Modified Pivot Algorithm]{Our Charging Scheme for\cref{alg:modified-pivot}}\label{sec:chargingscheme}

The following \cref{alg:charging} formalizes our charging scheme for \modifiedpivot{}. \cref{alg:charging} proceeds exactly like \modifiedpivot{} and defines all the sets used by \modifiedpivot{} in forming its clusters. However, instead of returning a clustering, \cref{alg:charging} returns a charge $y_{t} \geq 0$ for each bad triangle $t \in BT$.

In \cref{sec:charge-enough} we show that \cref{alg:charging} charges as many bad triangles as the cost paid by \modifiedpivot{}. We then prove in \cref{sec:width} that \cref{alg:charging} has width at most $\apxfactor$. Combining these lemmas and plugging them into \cref{lem:width-gives-apx} proves \cref{thm:approx} that \modifiedpivot{} obtains a \apxfactor{}-approximation.

\begin{figure}
\begin{algenv}{The charging scheme for analyzing \modifiedpivot{}.}{alg:charging}
    \nonl \textbf{Parameters:} $\epsilon \in (0, \frac{1}{14}]$, $\delta \in [4\epsilon, \frac{2}{7}]$, $k \geq 1$.

    $A \gets \emptyset$. 
    
    \While{$V \not= \emptyset$}{
        Pick a vertex $v \in V$ uniformly at random and mark it as a {\em pivot}.
        
        Let $C_v \gets \{v\} \cup N(v)$, where $N(v)$ is the set of neighbors of $v$ still in $V$.

        Let $D_v \gets  \{ u \mid u \in N(v) \text{ and } |N(u) \cap C_v| \leq \delta |C_v|-1\}$. 

        Let $D'_v$ include $\min\{|D_v|, \lfloor \delta |C_v| \rfloor \}$ vertices of $D_v$ uniformly at random.
        
        Let $A_v := \{w \mid w \in V \setminus C_v \text{ and } w \not\in A \text{ and } |N(w) \Delta C_v| \leq \epsilon|C_v| -1 \}$.

        Let $A'_v$ include  $\min\{|A_v|, \lfloor \delta |C_v| \rfloor \}$ vertices of $A_v$ uniformly at random.

        \For{every $(u, w) \not\in E$ such that $u, w \in C_v$}{
            \If{$u \not\in D'_v$ and $w \not\in D'_v$}{
                $y_{(v, u, w)} \gets 1$. \label{line:C1}
            }\Else{
                $y_{(v, u, w)} \gets 2 \delta/(1-\frac{3}{2}\delta)$.
                
                 \label{line:C2}
                
            }
        }

        \If{$|A_v| \leq k|C_v|$}{
            \For{every $(u, w) \in E$ where $u \in C_v$, $w \in V \setminus C_v$}{
                \If{$w \in A$}{
                    Do not charge a new triangle for $(u, w)$.
                   
                }\Else{
                    \IIf{$w \not\in A_v$}{$y_{(v, u, w)} \gets 1$.\label{line:C8}} 
                    
                    \If{$w \in A_v$}{
                        \If{$w \in A'_v$}{
                        $y_{(v, u, w)} \gets \delta.$ \label{line:C3}
                        }
                        \Else{$y_{(v, u, w)} \gets 1+\frac{\epsilon}{1 - \epsilon}$.  \label{line:C4} }
                    }
                }
            }
        }
        
        \If{$|A_v| > k|C_v|$}{
            \For{every mistake $(u, w) \in E$ where $u \in C_v$,  $w \in V \setminus C_v$}{
                \If{$w \in A$}{
                    Do not charge a new triangle for $(u, w)$.
                }\Else{
                    \IIf{$w \not\in A_v$}{$y_{(v, u, w)} \gets 1$. \label{line:C5}}
                
                    \IIf{$w \in A_v$}{$y_{(v, u, w)} \gets 1- \frac{\epsilon}{1 - \epsilon}$. \label{line:C6}}
                }
                
            }

            \For{every bad triangle $(u, w, x)$ such that $u \in N(v)$, $w \in A_v$, $x \in A_v$, $(w, x) \not\in E$, $(u, w) \in E$, and $(u, x) \in E$}{
                $y_{(u, w, x)} \gets \frac{5\epsilon / (1-\epsilon)}{|A_v| - 1}$. \label{line:C7} 
            }
            
        }

        $A \gets A \cup A_v$. \label{line:A-update-2}
        
        Remove vertices of $C_v$ from $V$.
    }

    Return $y$.
\end{algenv}
\end{figure}

\subsection[Our Charging Scheme Charges Enough Bad Triangles]{\cref{alg:charging} Charges Enough Bad Triangles}\label{sec:charge-enough}

In this section, we show that \cref{alg:charging} charges enough bad triangles.

\begin{lemma}\label{cl:charge-enough}
    Let $y$ be the vector of charges returned by \cref{alg:charging} and let $\mc{C}$ be the corresponding clustering returned by \modifiedpivot{} (\cref{alg:modified-pivot}). Then it holds that $$\sum_{t \in BT} y_t \geq \cost(\mc{C}).$$
\end{lemma}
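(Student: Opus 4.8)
The plan is to show that the total charge assigned by \cref{alg:charging} is at least the cost of the clustering $\mc{C}$ produced by \modifiedpivot{}, by accounting for the cost of $\mc{C}$ mistake-by-mistake and verifying that each mistake is paid for by the charges. The cost of $\mc{C}$ decomposes into two types of mistakes: (i) \emph{non-edges inside a cluster} of $\mc{C}$, and (ii) \emph{edges crossing between two clusters} of $\mc{C}$. I would fix the iteration $i$ in which each mistake ``occurs'' — for a non-edge $(u,w)$, the first iteration in which both $u$ and $w$ have been assigned to clusters; for a crossing edge, similarly the first iteration after which both endpoints are clustered — and use the fact (established in the validity observation) that the sets $C_v$ removed from $V$ coincide with the clusters \pivot{} would form, so every vertex is clustered exactly once and the iteration structure is well-defined.

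Next I would go through the cases by which iteration the mistake is resolved, matching them to the lines of \cref{alg:charging}. For a non-edge $(u,w)$ with $u,w \in C_v$ that ends up inside the same cluster: this happens precisely when neither $u$ nor $w$ is moved to a singleton, i.e. neither is in $D'_v$ (and neither is in $A$, but $u,w\in C_v$ so they were not previously in $A$); \cref{alg:charging} sets $y_{(v,u,w)}=1$ on \cref{line:C1}, covering the cost of $1$. Non-edges where at least one endpoint lands in $D'_v$ are \emph{not} mistakes of $\mc{C}$ — those vertices go to singleton clusters — so no charge is needed for them (the charge of $2\delta/(1-\tfrac32\delta)$ on \cref{line:C2} is slack we will spend elsewhere), and similarly the non-edges created by pulling $A'_v$ into the cluster must be checked: a vertex $w\in A'_v$ satisfies $|N(w)\Delta C_v|\le \epsilon|C_v|-1$, so it is dissimilar to at most $\epsilon|C_v|-1$ members of $C_v$, and those at most $\epsilon|C_v|-1$ new internal non-edges per vertex of $A'_v$ must be covered — this is exactly what the charges on \crefrange{line:C3}{line:C4} (and \cref{line:C7}) are for, and I would need to confirm the bookkeeping matches (e.g. a non-edge between two elements of $A'_v$, or between $A'_v$ and $C_v$, is assigned the corresponding $y$ value). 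For a crossing edge $(u,w)$ with $u\in C_v$, $w\in V\setminus C_v$: if $w$ was already clustered in an earlier iteration it is in $A$ and the mistake was already paid for then (``do not charge a new triangle''); otherwise we are in exactly one of the lines \cref{line:C8}, \cref{line:C3}/\cref{line:C4}, \cref{line:C5}, or \cref{line:C6}, each of which assigns $y_{(v,u,w)}\ge 1$ \emph{except} \cref{line:C3} ($y=\delta<1$) and \cref{line:C6} ($y=1-\tfrac{\epsilon}{1-\epsilon}<1$) — so for those two cases I must argue the deficit is made up elsewhere. For \cref{line:C3} (the subsampled $w\in A'_v$, which actually gets \emph{merged} into $v$'s cluster so $(u,w)$ is no longer a crossing mistake at all!) the edge is not a mistake, so $\delta$ is pure slack. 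The genuinely delicate case is \cref{line:C6}: when $|A_v|>k|C_v|$, a vertex $w\in A_v\setminus A'_v$ goes to a singleton, every edge $(u,w)$ with $u\in C_v$ is a crossing mistake, but we only charge $1-\tfrac{\epsilon}{1-\epsilon}$ to the triangle $(v,u,w)$; the remaining $\tfrac{\epsilon}{1-\epsilon}$ per such edge must be recovered from the bad triangles charged on \cref{line:C7}, which sit entirely among $N(v)\cup A_v$ and total $5\epsilon/(1-\epsilon)$ worth of charge spread over pairs $(w,x)$ with $w,x\in A_v$.

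The main obstacle, then, is the $|A_v|>k|C_v|$ branch: I must show that the charge deficit from \cref{line:C6} — summed over all crossing edges incident to singleton-ized vertices of $A_v$ — is covered by the \cref{line:C7} triangles. Concretely, each $w\in A_v\setminus A'_v$ has at least $(1-\epsilon)|C_v|$ neighbors in $C_v$ (since $|N(w)\Delta C_v|\le\epsilon|C_v|-1$ and $w\notin C_v$ forces almost all of $C_v\setminus\{v\}$ and $v$ itself to be neighbors — here I need $w$ adjacent to $v$? no, $w\notin C_v$ means $w$ is a \emph{non}-neighbor of $v$, so $v\in C_v\setminus N(w)$ contributes to the symmetric difference, leaving at least $|C_v|-1-(\epsilon|C_v|-1)=|C_v|-\epsilon|C_v|$ common neighbors), so the number of crossing mistakes $(u,w)$ from that $w$ is at least $(1-\epsilon)|C_v|$, contributing deficit at least $(1-\epsilon)|C_v|\cdot\tfrac{\epsilon}{1-\epsilon}=\epsilon|C_v|$ per $w$, i.e. roughly $\epsilon|C_v|\cdot|A_v|$ total. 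Meanwhile the \cref{line:C7} triangles: for each $u\in N(v)$ and each bad triangle $(u,w,x)$ with $w,x\in A_v$, $(w,x)\notin E$, $(u,w),(u,x)\in E$, we assign $\tfrac{5\epsilon/(1-\epsilon)}{|A_v|-1}$; I need to count these. Fixing $w\in A_v$, the number of $u\in N(v)\cap N(w)\cap N(x)$ and $x\in A_v$ with $(w,x)\notin E$ — since each element of $A_v$ misses at most $\epsilon|C_v|$ of $C_v$ and $|A_v|>k|C_v|$ is large, most pairs in $A_v$ are non-edges with large common neighborhood in $N(v)$ — should give on the order of $|A_v|\cdot|C_v|$ triangles per $w$ after dividing appropriately, and multiplying by $\tfrac{5\epsilon/(1-\epsilon)}{|A_v|-1}$ yields $\Theta(\epsilon|C_v|)$ of charge attributable to each $w$, enough (with the constant $5$ and the $k$-threshold providing room) to cover the $\epsilon|C_v|$ deficit. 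Getting these triangle counts right — ensuring every \cref{line:C6} deficit edge is dominated, handling the $A$-exclusion so triangles/edges already accounted for in earlier iterations are not double-counted, and checking the internal-non-edge costs from $A'_v$ against \crefrange{line:C3}{line:C4} and \cref{line:C7} — is the bulk of the work; everything else is a direct line-by-line verification that $y\ge 1$ on the relevant triangle.
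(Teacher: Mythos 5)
Your overall strategy — attribute each mistake of $\mc{C}$ to the iteration in which it is created and match it line-by-line against the charges of that iteration, with the $|A_v|>k|C_v|$ branch singled out as the place where \cref{line:C6} undercharges and \cref{line:C7} must make up the deficit — is the same decomposition the paper uses (the paper phrases it as an induction over iterations with a seven-way classification of the new mistakes). However, there are two concrete gaps in your accounting.

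First, you never pay for the edges destroyed by singletonizing $D'_v$. When $u\in D'_v$ is moved to a singleton, every edge from $u$ to $(C_v\cup A'_v)\setminus D'_v$ becomes a mistake, and these are exactly what the \cref{line:C2} charge of $2\delta/(1-\tfrac32\delta)$ is for — it is not slack. You dismiss the \cref{line:C2} triangles because the non-edges they sit on are not mistakes, but the point of that line is a \emph{global} trade: there are at most $2\delta|D'_v||C_v|$ broken edges (since each $u\in D'_v$ has $|N(u)\cap C_v|\le\delta|C_v|-1$ and $|A'_v|\le\lfloor\delta|C_v|\rfloor$), while the number of non-edges inside $C_v$ touching $D'_v$ is at least $(1-\tfrac32\delta)|D'_v||C_v|$ (again because $D'_v$ vertices have few neighbors in $C_v$), so the \cref{line:C2} charges total at least $2\delta|D'_v||C_v|$. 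This counting argument is of the same flavor as the \cref{line:C7} one and is entirely absent from your proposal; without it the lemma fails for instances where $D'_v$ is nonempty.

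Second, in the $|A_v|>k|C_v|$ branch your deficit is too small. A singletonized $w\in A_v\setminus A'_v$ incurs mistakes not only on its edges into $C_v$ (which you cover up to the $\tfrac{\epsilon}{1-\epsilon}$ shortfall) but also on its edges to vertices outside $C_v\cup A'_v$ — up to $\epsilon|C_v|-1$ per $w$, corresponding to the extra $\tfrac{\epsilon}{1-\epsilon}$ that \cref{line:C4} carries in the other branch. Accounting for these, the per-triangle deficit of \cref{line:C6} relative to what is needed is $\tfrac{2\epsilon}{1-\epsilon}$, not $\tfrac{\epsilon}{1-\epsilon}$; this is exactly why \cref{line:C7} charges $5\epsilon/(1-\epsilon)$ rather than a smaller constant (the count of charged triangles there is only $(\tfrac12-\epsilon)|C_v||A_v|(|A_v|-1)$, so one needs $5\epsilon(\tfrac12-\epsilon)\ge 2\epsilon$). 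Relatedly, the bookkeeping you defer for the $A'_v$-induced mistakes (non-edges inside $A'_v$ and $A'_v$'s symmetric-difference pairs against \cref{line:C3}) does go through, but it uses the parameter constraints $\delta\ge 4\epsilon$ in the form $\delta(1-\epsilon)\ge\tfrac{\delta}{2}+\epsilon$, so it is not automatic.
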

\begin{proof}
    We prove by induction that at the end of every iteration $i$ of the while loop, $\sum_{t \in BT} y_t$ upper bounds the number of mistakes made by \cref{alg:modified-pivot} so far. Clearly this holds for the base case $i = 0$. 
    
    Now consider iteration $i \geq 1$. The set of vertices newly clustered in this iteration is $C_v \cup A_v \setminus A$. (To avoid ambiguity, any mention of the set $A$ during iteration $i$ in this proof specifically refers to its state before it is updated by $A_v$ in \Cref{line:A-update-1} of \cref{alg:modified-pivot} or \Cref{line:A-update-2} of \cref{alg:charging}.) To prove the inductive step, it suffices to show that the number of mistakes newly made by \cref{alg:modified-pivot} in iteration $i$, which are precisely the mistakes that have at least one endpoint in $C_v \cup A_v$ and no endpoint in $A$, are upper bounded by the total amount of charge to bad triangles in \cref{line:C1,line:C2,line:C8,line:C3,line:C4,line:C5,line:C6,line:C7} in this iteration. Note that each of these mistakes $(x, z)$ satisfies \emph{exactly} one of the following conditions:

    \begin{enumerate}[label=(\arabic*), ref=\arabic*, leftmargin=*, align=left]
        \item $(x, z) \notin E$ and $x, z \in C_v \setminus D'_v$.
        \item $(x, z) \in E$, $x \in D'_v$ and $z \in C_v \cup A'_v$.
        \item $(x, z) \in E$, $x \in C_v$ and $z \in V \setminus (C_v \cup A_v \cup A)$.
        \item $(x, z) \notin E$ and $x, z \in A'_v$.
        \item Either $(x, z) \in E$, $x \in A'_v$ and $z \in V \setminus (C_v \cup A'_v)$, or $(x, z) \notin E$, $x \in A'_v$ and $z \in C_v \setminus D'_v$.
        \item $(x, z) \in E$, $x \in A_v \setminus A'_v$ and $z \in C_v$.
        \item $(x, z) \in E$, $x \in A_v \setminus A'_v$ and $z \in V \setminus 
        (C_v \cup A'_v)$.
    \end{enumerate}

    We refer to the mistakes that satisfy condition ($j$) as Type ($j$) mistakes. Let $c_j$ denote the number of mistakes of Type ($j$) and let $y_l$ denote the total amount of charge to bad triangles in Line $l$ of \cref{alg:charging} in iteration $i$. We now prove the following statements (a)-(d) one by one, which collectively imply the inductive step:
    \begin{enumerate}[label= (\alph*), ref=\alph*, leftmargin=*, align=left]
    \item $c_1 \leq y_{11}$.

    To see this holds, we observe that each Type (1) mistake $(x, z)$ where $(x, z) \notin E$ and $x, z \in C_v \setminus D'_v$ corresponds to a bad triangle $(v, x, z)$ that is charged by 1 in \cref{line:C1}.
    
    \item $c_2 \leq y_{13}$.
        
    The total number of Type (2) mistakes $(x, z)$ where $(x, z) \in E$, $x \in D'_v$ and $z \in C_v \cup A'_v$ is at most 
     $$\sum_{x\in D'_v}{(|N(x) \cap C_v| + |A'_v|)} \leq |D'_v|\left(\delta |C_v|-1 + \lfloor \delta |C_v| \rfloor\right) \leq 2\delta |D'_v| |C_v|.$$ 
    On the other hand, the number of pairs $(u, w)\notin E$ such that $u, w \in C_v$ and at least one of $u$ or $w$ is in $D'_v$, or equivalently, the number of bad triangles $(v,u,w)$ that are charged in \cref{line:C2}, is equal to 
    \begin{align*}
     & \sum_{u\in D'_v}{\left( \left |(C_v \setminus D'_v) \setminus N(u)\right | + \frac{1}{2}\left |D'_v \setminus (N(u) \cup \{u\})\right |\right)} \\
     = & \sum_{u\in D'_v}{\left( \left| C_v \setminus (N(u) \cup \{u\})\right| - \frac{1}{2}\left|D'_v \setminus (N(u) \cup \{u\})\right| \right)} \\
     \geq & \left(\sum_{u\in D'_v}{(|C_v| - |C_v \cap N(u)| - 1)}\right) - {|D'_v| \choose 2} \\
     \geq & |D'_v|\left(|C_v| - (\delta|C_v| - 1) - 1 - \genfrac{}{}{}{}{1}{2}\left( \lfloor\delta|C_v|\rfloor - 1\right) \right) \\
     \geq & \left(1 - \genfrac{}{}{}{}{3}{2}\delta\right)|D'_v||C_v|.
     \end{align*}
    Thus the total amount of charge in \cref{line:C2} is at least $$\frac{2\delta}{1-\frac{3}{2}\delta} \left(1 - \frac{3}{2}\delta\right)|D'_v||C_v| = 2\delta |D'_v||C_v|,$$
    which upper bounds the total number of Type (2) mistakes.
    
    \item \textit{If $|A_v| \leq k|C_v|$, then $c_3 \leq y_{19}$, $c_4 + c_5 \leq y_{22}$, and $c_6 + c_7 \leq y_{24}$.}

    In the case of $|A_v| \leq k|C_v|$, \Cref{alg:charging} charges in \cref{line:C8,line:C3,line:C4}. We show the three inequalities separately. 
    
    To see that $c_3 \leq y_{19}$, we observe that each Type (3) mistake $(x, z)$ where $(x, z) \in E$, $x \in C_v$ and $z \in V \setminus C_v \setminus A_v \setminus A$ corresponds to a bad triangle $(v, x, z)$ that is charged by 1 in \cref{line:C8}.

    Next, we show $c_4 + c_5 \leq y_{22}$. The total number of Type (4) mistakes $(x, z)$ where $(x, z) \notin E$ and $x, z \in A'_v$ is at most $${|A'_v|\choose 2} = \frac{1}{2}|A'_v|(|A'_v| - 1) \leq \frac{1}{2}|A'_v|(\lfloor \delta |C_v| \rfloor - 1) \leq \frac{\delta}{2}|A'_v||C_v|.$$ For type (5) mistakes $(x, z)$ where either $(x, z) \in E$, $x \in A'_v$ and $z \in V \setminus C_v \setminus A'_v$, or $(x, z) \notin E$, $x \in A'_v$ and $z \in C_v \setminus D'_v$, note that in both cases we have $z \in N(x) \Delta C_v$. Thus the total number of Type (5) mistakes is at most $$\sum_{x\in A'_v}|N(x) \Delta C_v| \leq |A'_v| (\epsilon |C_v| - 1) \leq \epsilon |A'_v||C_v|.$$
    On the other hand, the number of pairs $(u, w)\in E$ such that $u \in C_v$ and $w\in A'_v$, or equivalently, the number of bad triangles $(v,u,w)$ that are charged in \cref{line:C3}, is equal to $$\sum_{w\in A'_v}{\left | N(w) \cap C_v \right |} = \sum_{w\in A'_v}{\left | C_v \setminus (N(w) \Delta C_v) \right |} \geq |A'_v|(|C_v| - (\epsilon|C_v| - 1)) \geq (1-\epsilon)|A'_v||C_v|.$$
    Thus the total amount of charge in \cref{line:C3} is at least $$\delta (1-\epsilon)|A'_v||C_v| \geq (\delta - \epsilon)|A'_v||C_v| \geq \left(\frac{\delta}{2} + \epsilon\right)|A'_v||C_v|,$$
    where the last two inequalities follows from $\epsilon \in (0, \frac{1}{14}]$ and $\delta \in [4\epsilon, \frac{2}{7}]$. This upper bounds the total number of Type (4) and (5) mistakes.

    Last, we show $c_6 + c_7 \leq y_{24}$. Note that each Type (6) mistake $(x, z)$ where $(x, z) \in E$, $x \in A_v \setminus A'_v$ and $z \in C_v$ corresponds to a bad triangle $(v, z, x)$ that is charged by $1 + \frac{\epsilon}{1 - \epsilon}$ in \cref{line:C4}. For each such $(v, z, x)$, we allocate a charge of 1 to cover Type (6) mistakes. It remains to show that the sum of remaining charge of $\frac{\epsilon}{1 - \epsilon}$ to each of these triangles in \cref{line:C4} is sufficient to cover Type (7) mistakes as well. To that end, let us count the number of bad triangles charged in \cref{line:C4}, which is 
    \begin{flalign*}
        \sum_{w\in A_v \setminus A'_v}{|N(w) \cap C_v|} &= \sum_{w\in A_v \setminus A'_v}{|C_v \setminus (N(w) \Delta C_v)|}\\
        &\geq |A_v \setminus A'_v|(|C_v| - (\epsilon |C_v| - 1))\\
        &\geq (1 - \epsilon)|A_v \setminus A'_v||C_v|.
    \end{flalign*}
    
    Thus the total amount of remaining charge we can allocate for Type (7) mistakes is at least $$\frac{\epsilon}{1 - \epsilon}(1 - \epsilon)|A_v \setminus A'_v||C_v| = \epsilon |A_v \setminus A'_v||C_v|.$$
    We now show that the total number of Type (7) mistakes does not exceed this amount. Indeed, the total number of Type (7) mistake $(x, z)$ where $(x, z) \in E$, $x \in A_v \setminus A'_v$ and $z \in V \setminus C_v \setminus A'_v$ is at most $$\sum_{x\in A_v \setminus A'_v}{|N(x) \Delta C_v|} \leq |A_v \setminus A'_v|(\epsilon |C_v| - 1) \leq \epsilon|A_v \setminus A'_v||C_v|.$$
        
    \item \textit{If $|A_v| > k|C_v|$, then
    $c_3 \leq y_{30}$ and $c_4 + c_5 + c_6 + c_7 \leq y_{31} + y_{33}$.}

    In the case of $|A_v| > k|C_v|$, \Cref{alg:charging} charges in \cref{line:C5,line:C6,line:C7}.
   
   We first show $c_3 \leq y_{30}$. To see this holds, we observe that each Type (3) mistake $(x, z)$ where $(x, z) \in E$, $x \in C_v$ and $z \in V \setminus C_v \setminus A_v \setminus A$ corresponds to a bad triangle $(v, x, z)$ that is charged by 1 in \cref{line:C5}.

    We then show $c_4 + c_5 + c_6 + c_7 \leq y_{31} + y_{33}$. Recall that in the case of $|A_v| \leq k|C_v|$, we showed $c_4 + c_5 + c_6 + c_7 \leq y_{22} + y_{24}$. Suppose for a moment that \Cref{alg:charging} had charged each bad triangle $(v, u, w)$ in \cref{line:C6} by $\max{(\delta, 1 + \frac{\epsilon}{1 - \epsilon}}) = 1 + \frac{\epsilon}{1 - \epsilon}$. Then by the exactly same argument as we had for the case of $|A_v| \leq k|C_v|$, we could show that $c_4 + c_5 + c_6 + c_7 \leq y_{31}$ holds as well. However, in reality, \Cref{alg:charging} only charges an amount of $(1 - \frac{\epsilon}{1 - \epsilon})$ to each bad triangle $(v, u, w)$ in \cref{line:C6}. Since there are at most $|A_v|$ choices for $w \in A_v$ and at most $(|C_v| - 1)$ choices for $u \in C_v \setminus \{v\}$, this results in a total charge deficit of at most $\frac{2\epsilon}{1 - \epsilon}|A_v|(|C_v| - 1)$. 

    To cover this deficit, we show that $y_{33} \geq \frac{2\epsilon}{1 - \epsilon}|A_v|(|C_v|-1)$. To that end, we need to show that \Cref{alg:charging} charges enough bad triangles in \cref{line:C7}. The total number of triplets $(u, w, x)$ such that $u \in N(v)$ and $w,x \in A_v$ is equal to $${|A_v| \choose 2}(|C_v| - 1).$$ Note that each pair $(u, w)$ where $u \in N(v)$ and $w\in A_v$ can appear in at most $|A_v| - 1$ such triplets, and each pair $(w, x)$ where $w, x \in A_v$ can appear in at most $|C_v| - 1$ such triplets. Thus the total number of such triplets $(u, w, x)$ that do not satisfy the condition in Line 32 and are not charged in \cref{line:C7} is at most 
    \begin{align*}
    & \sum_{\substack{(u,w):(u,w)\notin E, \\ u \in N(v), \\ w\in A_v}}{(|A_v| - 1)} + \sum_{\substack{(w,x): (w,x)\in E, \\ w, x\in A_v}}{(|C_v| - 1)}\\
    = &\sum_{w\in A_v} \left( \sum_{u\in C_v \setminus N(w)}(|A_v| - 1) + \frac{1}{2}\sum_{x\in N(w) \cap A_v}(|C_v| - 1) \right) \\
    \leq & \sum_{w\in A_v}|N(w) \Delta C_v|\max\left(|A_v|-1, \frac{1}{2}(|C_v| - 1)\right) \\
    \leq & |A_v|(\epsilon|C_v| - 1)(|A_v| - 1),
    \end{align*}
    where the last inequality follows from $|A_v| > k |C_v|$ and $k \geq 1$.
    Thus the number of bad triangles charged in \cref{line:C7} is at least $${|A_v| \choose 2}(|C_v| - 1) - |A_v|(\epsilon|C_v| - 1)(|A_v| - 1) \geq (\frac{1}{2} - \epsilon)|C_v||A_v|(|A_v| - 1).$$
    Thus the total amount of charge in \cref{line:C7} is at least 
    \begin{align*}
    & \frac{5\epsilon/(1-\epsilon)}{|A_v|-1}(\frac{1}{2} - \epsilon)|C_v||A_v|(|A_v| - 1) 
    \geq \frac{5\epsilon(1/2-\epsilon)}{1-\epsilon}|A_v||C_v| 
    \geq \frac{2\epsilon}{1-\epsilon}|A_v||C_v|,
    \end{align*}
    where the last inequality follows from $\epsilon \leq \frac{1}{14}$. This is sufficient to cover the total deficit of at most $\frac{2\epsilon}{1 - \epsilon}|A_v|(|C_v| - 1)$ from \cref{line:C6}.
    \end{enumerate}
    We have proved statements (a)-(d) for iteration $i$. By induction, the proof is complete.
\end{proof}

\subsection[Our Charging Scheme Has Width Smaller than 3]{\cref{alg:charging} Has Width Smaller than 3}\label{sec:width}

In this section, we prove that \cref{alg:charging}, for any fixed pair of vertices, charges at most $\apxfactor$ bad triangles involving them in expectation. This upper bounds the width of \cref{alg:charging} by $\apxfactor$, and thus combined with \cref{cl:charge-enough} and \cref{lem:width-gives-apx} proves that \cref{alg:modified-pivot} obtains a $\apxfactor$-approximation.

Let us for every pair $(a, b)$ of the vertices use $y_{(a, b)} := \sum_{t \in BT: a, b \in t}  y_t$ to denote the total charges to the bad triangles involving both $a$ and $b$. Our main result of this section is the following lemma.

\begin{lemma}\label{clm:pair-bound}
    Let $y$ be the charges returned by \cref{alg:charging}. For every pair $(a, b)$ of vertices,
    $$
    \E_{\mc{A}}\left[y_{(a, b)}\right] \leq \apxfactor.$$
\end{lemma}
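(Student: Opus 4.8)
The plan is to fix an arbitrary pair $(a,b)$ and bound the expected total charge $y_{(a,b)}$ it receives across all iterations of \cref{alg:charging}. The key structural fact to exploit is that once $C_v$ is removed from $V$, neither $a$ nor $b$ can lie in a bad triangle charged in a later iteration unless it was added to $A$ — and membership in $A$ actually \emph{reduces} the charge (the ``Do not charge'' branches). So morally the pair $(a,b)$ is ``active'' only until the first iteration in which one of $a,b$ is touched (picked as pivot, enters some $C_v$, or enters some $A_v$), and I would condition on the random coin tosses up to the start of that decisive iteration. The bulk of the argument is then a \emph{per-iteration} case analysis: fixing the pivot $v$ and the sets $C_v, D_v, A_v$ (which are determined once $v$ and the removed history are fixed), I would bound $\E[\text{charge to }(a,b)\text{ in this iteration} \mid \text{this iteration is the decisive one for }(a,b)]$, where the only remaining randomness is the subsampling of $D'_v \subseteq D_v$ and $A'_v \subseteq A_v$.

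The case analysis should be organized by \emph{how} $a,b$ sit relative to $v$. The relevant cases are: (i) $v \in \{a,b\}$, say $v=a$; then $b$ is either in $C_v$ (contributing via \cref{line:C1,line:C2,line:C8} — at most one of these per such $b$, each of value $\le 1$ except \cref{line:C2} which has the deliberately small value $2\delta/(1-\tfrac32\delta)$), or $b\in A_v$ (via \cref{line:C3,line:C4,line:C6}), or $b\notin C_v\cup A_v$; in each sub-case the subsampling probabilities $\Pr[b\in D'_v]$ and $\Pr[b\in A'_v]$ — both at most $\lfloor \delta|C_v|\rfloor/|D_v|$ resp. $\lfloor\delta|C_v|\rfloor/|A_v|$ — are exactly what convert the large raw charges on the ``$\in D'_v$'' / ``$\notin A'_v$'' branches into an $O(\delta)$ or $O(\epsilon)$ contribution in expectation. (ii) $v \notin \{a,b\}$ but both $a,b$ appear in the triangle together with some third vertex that is $v$ or lies in $C_v\cup A_v$: here the relevant charges are the $C_v$-internal non-edge charges (\cref{line:C1,line:C2}), the mistake charges with one endpoint in $C_v$ and one in $A_v$ (\cref{line:C3,line:C4,line:C6,line:C8}), and crucially the \emph{non-local} triangles of \cref{line:C7}, each of which has the tiny value $\tfrac{5\epsilon/(1-\epsilon)}{|A_v|-1}$; since a fixed pair $(a,b)$ lies in at most $\max(|A_v|-1,|C_v|-1)$ such triangles, the total \cref{line:C7} contribution to one pair is $O(\epsilon)$. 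Summing the worst legitimate case — roughly ``$(a,b)$ is a non-edge inside some $C_v$ and gets charged $1$ in \cref{line:C1}'' — against all the lower-order $O(\delta)+O(\epsilon)$ terms from the other lines and from subsampling, and then accounting over the (at most a few) iterations in which $(a,b)$ can be charged, should give a bound of the form $1 + O(\delta) + O(\epsilon) \le 2 + O(\delta+\epsilon)$ from within a single clique-type iteration, plus possibly one additional $\lesssim 1$ from a later iteration where $a$ or $b$ survived into $A$ — landing below $3$ for the chosen parameter ranges $\epsilon\le\tfrac1{14}$, $\delta\in[4\epsilon,\tfrac27]$.

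I would carry the steps out in this order: (1) set up the conditioning on the decisive iteration and reduce to bounding expected per-iteration charge; (2) handle the $v\in\{a,b\}$ cases, using the subsampling bounds on $D'_v, A'_v$; (3) handle the $v\notin\{a,b\}$ cases, separately the ``$\le k|C_v|$'' and ``$> k|C_v|$'' branches, with the \cref{line:C7} counting bound for the latter; (4) observe that in the remaining (post-decisive) iterations the only way $(a,b)$ gets further charge is through one of them lying in $A$, and bound that residual contribution; (5) sum everything and plug in the constraints on $\epsilon,\delta,k$ to get $\apxfactor$.

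The main obstacle I expect is step (3) for the $|A_v|>k|C_v|$ branch: the \cref{line:C7} triangles do not involve $v$ at all, so a fixed pair $(a,b)$ can be the ``$(w,x)$ non-edge'' or the ``$(u,w)$ edge'' of \emph{many} charged triangles simultaneously, and one must argue carefully that $(a,b)$ is used in at most $|C_v|-1$ (resp. $|A_v|-1$) of them and that these are the only iterations in which such charges reach $(a,b)$ — this is exactly where the bookkeeping is delicate, and where the definition of $A$ (preventing $b$ from being in $A_u$ and $A_v$ for two pivots) is doing real work. A secondary subtlety is that $C_v, D_v, A_v$ are random (they depend on the pivot and on which vertices were removed earlier), so the ``fix $v$ and the sets'' step has to be phrased as a conditional expectation and one must check the subsampling of $D'_v,A'_v$ is the \emph{only} fresh randomness left once we condition — which is true by construction of the algorithm.
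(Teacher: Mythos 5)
Your skeleton matches the paper's at a high level (condition on the first ``decisive'' iteration in which one of $a,b$ is removed, split on how the pivot relates to the pair, use the subsampling probabilities of $D'_v,A'_v$, and bound the \cref{line:C7} contribution to a fixed pair by $O(\epsilon)$ via the $|A_v|-1$ resp.\ $|C_v|-1$ multiplicity count together with the role of $A$). But the quantitative core of your argument is wrong, in a way that makes the conclusion unreachable. Conditioned on $v=a$ (say, for an edge $(a,b)$), the pair does \emph{not} sit in $O(1)$ charged triangles: every third vertex $c$ adjacent to exactly one of $a,b$ produces a charged triangle $(v,\cdot,\cdot)$ containing the pair (\cref{line:C1,line:C2} for $c\in N_a$, \cref{line:C8,line:C3,line:C4,line:C5,line:C6} for $c\in N_b$), so the conditional charge is $\Theta(|N_a|+|N_b|)$, i.e.\ linear in the degrees. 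The expectation is controlled only because the decisive pivot is uniform over $N(a)\cup N(b)$, so $\Pr[v=a]=1/|N(a)\cup N(b)|$; summing the three contributions ($v=a$, $v=b$, $v$ a non-common neighbor) gives roughly $3\cdot\frac{|N_a|+|N_b|}{|N_a|+|N_b|+|N_{a,b}|+2}$ --- which is exactly where the factor $3$ of \pivot{} comes from and which tends to $3$ on the hard instances. Your accounting of ``$1+O(\delta)+O(\epsilon)$ per iteration, plus one more $\lesssim 1$'' would, if valid, show \pivot{} itself is a $2$-ish approximation; it is not, and indeed the paper's final bounds are $2.996$--$2.997$, not $2+O(\delta+\epsilon)$.

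The second, related gap is that you never identify the mechanism that pushes the bound \emph{below} $3$. The paper needs a trichotomy on the neighborhood structure of the pair: for an edge, either the degrees are small so the ``$+2$'' in the denominator gives slack (C1); or $|N(a)\cap N(b)|$ is small relative to $|N(a)\cup N(b)|$, which forces $b\in D_a$ when $v=a$, so with constant probability $b\in D'_a$ and the unit charges of \cref{line:C1} are replaced by the cheaper $2\delta/(1-\frac32\delta)$ of \cref{line:C2} (C2); or $|N(a)\cap N(b)|$ is large, so the pivot lands in $N(a)\cap N(b)$ (zero charge) with constant probability (C3). An analogous trichotomy (D1)--(D3) for non-edges uses $A_v,A'_v$ and the \cref{line:C7} surcharge of $\frac{5\epsilon/(1-\epsilon)}{k}$. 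Without some such case split --- each regime supplying its own $\Omega(1)$ saving --- no per-line bookkeeping can beat $3$ on the two-cliques-joined-by-an-edge or complete-graph-minus-an-edge instances, which is precisely what the lemma must do.
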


In order to prove \cref{clm:pair-bound}, we start with a number of useful observations. When we say a pair $(a, b)$ of vertices is charged in \cref{alg:charging}, we mean that \cref{alg:charging} charges some bad triangle involving $(a, b)$. 

\begin{observation}\label{obs:pivot-in-triangles-except-C7}
    Except for the bad triangles charged in \cref{line:C7} of \cref{alg:charging}, whenever a bad triangle $t$ is charged in \cref{alg:charging}, the pivot $v$ chosen in that iteration must be part of $t$.
\end{observation}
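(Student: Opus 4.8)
The plan is to prove \cref{obs:pivot-in-triangles-except-C7} by direct inspection of \cref{alg:charging}: I would enumerate every line that assigns a positive charge to a bad triangle and examine the shape of that triangle. There are exactly eight such lines --- \cref{line:C1,line:C2} inside the loop over non-edges within $C_v$, \cref{line:C8,line:C3,line:C4} inside the $|A_v| \le k|C_v|$ branch, and \cref{line:C5,line:C6,line:C7} inside the $|A_v| > k|C_v|$ branch --- so the whole argument is a finite case check. There is no real obstacle here; the only point requiring a little care is to confirm, along the way, that every triple charged in \cref{line:C1,line:C2,line:C8,line:C3,line:C4,line:C5,line:C6} is actually an element of $BT$, so that the statement is even meaningful.

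The heart of the proof is that in each of \cref{line:C1,line:C2,line:C8,line:C3,line:C4,line:C5,line:C6} the charged triple is written $(v, u, w)$, whose first coordinate is the pivot $v$ of the current iteration, so $v \in t$ as claimed. It then remains to check these triples are bad triangles. In \cref{line:C1,line:C2} the enclosing loop ranges over $(u,w) \notin E$ with $u, w \in C_v$; since $(v,x) \in E$ for every $x \in N(v) = C_v \setminus \{v\}$, in fact $u, w \in N(v)$, so $(v,u),(v,w) \in E$ while $(u,w) \notin E$ --- a bad triangle. In \cref{line:C8,line:C3,line:C4,line:C5,line:C6} the enclosing loop ranges over mistakes $(u,w) \in E$ with $u \in C_v$ and $w \in V \setminus C_v$; here $w \notin C_v$ forces $u \in N(v)$, so $(v,u),(u,w) \in E$ while $(v,w) \notin E$ --- again a bad triangle. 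Hence in all of these lines the charged bad triangle contains the pivot.

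The only line not yet accounted for is \cref{line:C7}, whose charged triangle $(u, w, x)$ satisfies $u \in N(v)$ and $w, x \in A_v \subseteq V \setminus C_v$; none of $u, w, x$ is forced to equal $v$, and in general none of them does. This is exactly the exception permitted by the statement, which therefore holds.
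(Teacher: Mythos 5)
Your proof is correct and is exactly the case-by-case inspection that the paper's one-line proof (``follows directly from the description of the algorithm'') leaves implicit: every charging line other than \cref{line:C7} charges a triple whose first coordinate is the pivot, and your verification that those triples really are bad triangles is a sound (if optional) addition.
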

\begin{proof}
    Follows directly from the description of \cref{alg:charging}.
\end{proof}

\begin{observation}\label{obs:edge-charged-when-deleted}
    Any edge $(a, b) \in E$ is charged in at most one iteration of \cref{alg:charging}. Any non-edge $(a, b) \notin E$ is charged in at most two iterations of \cref{alg:charging}, and in particular, is charged in at most one iteration if none of the charges involving it take place in \cref{line:C7}.
\end{observation}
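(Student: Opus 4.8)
The plan is to track, for an arbitrary pair $(a,b)$, exactly which iterations of \cref{alg:charging} can possibly charge a bad triangle containing both $a$ and $b$, and to argue that for an edge this happens in at most one iteration, while for a non-edge it can happen in at most two (and only two when \cref{line:C7} is involved). The natural way to organize this is by when $a$ and $b$ are removed from $V$ — equivalently, which iterations of \modifiedpivot{} remove them — since the sets $C_v$, $D_v$, $A_v$ (and their subsamples) in each iteration only involve vertices still present in $V$, plus vertices just placed into $A$. I would first recall, via \Cref{obs:pivot-in-triangles-except-C7}, that every charged triangle except those in \cref{line:C7} contains the pivot $v$ of its iteration; so for those lines the charged triangle containing $(a,b)$ must be of the form $(v,a,b)$, forcing both $a$ and $b$ to lie in $C_v \cup A_v$ in that iteration.

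\textbf{Case $(a,b)\in E$.} I would argue that any iteration charging a triangle $(v,a,b)$ via one of \cref{line:C1,line:C2,line:C8,line:C3,line:C4,line:C5,line:C6} must have one endpoint, say $a$, in $C_v$ — because in each of those lines the charged triple is $(v,u,w)$ with at least one of $u,w$ in $C_v$, and inspecting the cases, whenever $(a,b)$ is an edge the configuration forces one of them into $C_v$ while the other is in $C_v$ or $A_v$. Since a vertex is removed from $V$ in the (unique) iteration in which it first lands in some $C_v$, the vertex $a$ can belong to $C_v$ in exactly one iteration. Hence at most one iteration charges a triangle $(v,a,b)$ through these lines. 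For \cref{line:C7}, the charged triangle $(u,w,x)$ has $u\in N(v)$ and $w,x\in A_v$ with $(w,x)\notin E$; so if $(a,b)\in E$ is to appear as a pair inside such a triangle, the pair $\{a,b\}$ must be one of the two \emph{edges} $\{u,w\},\{u,x\}$ of that bad triangle, meaning one of $a,b$ is in $N(v)\subseteq C_v$ in that iteration — again an event that occurs in only one iteration, and moreover the same iteration as any \cref{line:C6}-type charge would not double count because the membership ``$a\in C_v$'' pins down the iteration uniquely regardless of the line. Combining, an edge is charged in at most one iteration total.

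\textbf{Case $(a,b)\notin E$.} Here the non-edge $(a,b)$ can play the role of the ``missing edge'' $\{w,x\}$ in a \cref{line:C7} bad triangle $(u,w,x)$: this requires $a,b\in A_v$, which (since vertices in $A_v$ are \emph{not} removed from $V$) can occur — but only once, because once a vertex enters $A$ it is excluded from all future $A_u$, so $a$ (and $b$) can lie in $A_v$ for at most one pivot $v$. Separately, $(a,b)$ can be charged as $(v,a,b)$ in one of the non-\cref{line:C7} lines, which as above forces one of $a,b$ into $C_v$ and hence pins that to the unique removal iteration of that endpoint. These two events — ``$a,b$ both in some $A_v$'' and ``one of $a,b$ in its removal $C_{v'}$'' — are a priori distinct iterations, giving the bound of two; and if no charge to $(a,b)$ uses \cref{line:C7}, only the second type can occur, giving the bound of one. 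The main subtlety I expect is making the endpoint-membership bookkeeping airtight across all of \cref{line:C1,line:C2,line:C8,line:C3,line:C4,line:C5,line:C6}: one must check case by case that whenever $(v,a,b)$ is charged there, the edge/non-edge status of $(a,b)$ forces the claimed containment (at least one endpoint in $C_v$, or both in $A_v$), using that $D'_v,A'_v\subseteq C_v\cup A_v$ and that $A_v\cap A=\emptyset$ at the time of processing. Once that case analysis is in place, the counting argument above closes the proof.
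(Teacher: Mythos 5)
Your proposal is correct and follows essentially the same route as the paper's proof: for all lines other than \cref{line:C7} the pivot lies in the charged triangle, so at least one endpoint of the pair is removed from $V$ in that iteration; the two edges of a \cref{line:C7} triangle both contain the removed vertex $u\in C_v$, giving one iteration for edges; and a non-edge charged in \cref{line:C7} has both endpoints placed into $A$, blocking future \cref{line:C7} charges but permitting one later non-\cref{line:C7} charge, hence the bound of two. The paper's write-up is no more detailed on the per-line case analysis than yours, so no gap.
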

\begin{proof}
    First, as shown in \cref{obs:pivot-in-triangles-except-C7}, except for when a triangle is charged in \cref{line:C7} of \cref{alg:charging}, the pivot $v$ must be part of the bad triangle. This means that either $a$ or $b$ should be chosen as the pivot $v$ or at least one of them must be adjacent to $v$. In either case, at least one of $u$ or $v$ gets removed from $V$ in iteration $i$. Note that, at least one of $(a,b)$ is corresponded to either $u$ or $v$, as a result of this at least one of the endpoints of $(a,b)$ is removed from $V$, and therefore, $(a,b)$ won't be charged again.
    
    Now, if $(a,b) \in E$, consider the case where a bad triangle $(u, w, x)$ is charged in \cref{line:C7}. In this case, $u \in C_v$ gets removed from $V$ in this iteration but $w$ and $x$ remain in $V$. Crucially, observe that the two edges of this bad triangle, which are $(u, v)$ and $(u, w)$, are both adjacent to $u$. Therefore, in this case too, any edge that is part of a charged bad triangle has at least one endpoint removed.  Note that, $(a,b)$ is corresponded to either $(w,u)$ or $(x,u)$. This means after charging $(a,b)$ in \cref{line:C7} of \cref{alg:charging}, we remove at least one of $(a,b)$ from $V$, and consequently, we will not charge $(a,b)$ in any future iterations.
    
    If $(a, b)\notin E$, then it can be involved in multiple bad triangles $(u, w, x)$ charged in \cref{line:C7} of \cref{alg:charging} in one iteration. However, we will not be charging this non-edge in \cref{line:C7} again in any future iteration of \cref{alg:charging}. This is because we will be appending $w$ and $x$ to the set $A$, which means that we will not be charging this pair as a member of $A_{v'}$ for a pivot $v'$ in a future iteration. However, we might still charge this non-edge $(a, b)$ in one more future iteration in a single line other than \cref{line:C7}.
\end{proof}

Let us group the bad triangles charged in \cref{alg:charging} in iteration $i$ based on the position of the pivot. Note that each charging line in the algorithm processes a particular kind of bad triangle. We define these sets based on whether a bad triangle includes a pivot $v$ or not, and if yes what the adjacency state of $v$ is.

\begin{definition}\label{def:triangle-types}
    Let $v$ be the pivot chosen in some iteration $i$ of \cref{alg:charging}. 
     Let $X_v$ be the set of bad triangles $t$ in the graph of iteration $i$ which involve the pivot $v$ and $v$ is adjacent to the other two vertices in $t$. Let $Y_v$ be the set of bad triangles $t$ in the graph of iteration $i$ which involve the pivot $v$ and $v$ is adjacent to exactly one other vertex of $t$. Finally, let $Z_v$ be the set of all bad triangles in the graph of iteration $i$ that are charged in this iteration but do not include the pivot $v$.
\end{definition}

Now, we investigate the charges for each type of bad triangles.
\begin{observation}\label{obs:triangle-types}
    By the assumption that pivot $v$ was picked in iteration $i$ of \cref{alg:charging} it holds that:
    \begin{enumerate}
        \item Any $t \in X_v$ is charged by either one of the \cref{line:C1,line:C2} and therefore is charged at most by 1.
        \item  Any $t \in Y_v$ is charged by either one of the \cref{line:C8,line:C3,line:C4,line:C5,line:C6} and therefore is charged at most by $1+\frac{\epsilon}{1 - \epsilon}$.
        \item Any $t \in Z_v$ is charged $\frac{5\epsilon / (1-\epsilon)}{|A_v| - 1}$  by only \cref{line:C7}.
    \end{enumerate}
\end{observation}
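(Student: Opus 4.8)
The plan is to prove all three parts of the observation together, by a single case analysis that walks through the eight charging lines of \cref{alg:charging} and, for each of them, (i) identifies the combinatorial type of the bad triangle it charges — whether the pivot $v$ of that iteration belongs to it and, if so, to how many of the other two vertices $v$ is adjacent — and (ii) reads off and bounds the charge value. The sets $X_v$, $Y_v$, $Z_v$ of \cref{def:triangle-types} are pairwise disjoint (a bad triangle containing $v$ must have $v$ adjacent to one or two of the other two vertices, since a vertex adjacent to neither can lie in at most one of the three pairs, so it is in exactly one of $X_v,Y_v$, while $Z_v$ consists of charged triangles not containing $v$). Hence, once we show that every line produces a triangle of one fixed type, we have pinned down, for any charged triangle of a given type, exactly which lines can be responsible — and therefore the worst-case charge.

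First I would dispatch the pair-loop over $(u,w)\notin E$ with $u,w\in C_v$, which feeds \cref{line:C1,line:C2}. Neither $u$ nor $w$ can equal $v$ there, since a non-neighbor of $v$ lies outside $C_v$; thus $u,w\in N(v)$ and $(v,u,w)$ is a bad triangle with $v$ adjacent to both other vertices, i.e.\ in $X_v$. Each unordered pair is visited once and assigned a single charge, $1$ (\cref{line:C1}) or $2\delta/(1-\tfrac32\delta)$ (\cref{line:C2}). Next I would handle the two edge-loops over $(u,w)\in E$ with $u\in C_v$, $w\in V\setminus C_v$, guarded respectively by $|A_v|\le k|C_v|$ (feeding \cref{line:C8,line:C3,line:C4}) and by $|A_v|>k|C_v|$ (feeding \cref{line:C5,line:C6}); exactly one branch fires. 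For such an edge, $u\in N(v)$ while $w\notin N(v)\cup\{v\}$, so $(v,u,w)$ is a bad triangle with $v$ adjacent to exactly one other vertex, i.e.\ in $Y_v$; conversely each $Y_v$ triangle determines its $C_v$-endpoint and its outside endpoint uniquely, so it is hit by at most one loop iteration and charged at most once. The available charges are $1$, $\delta$, $1+\tfrac{\epsilon}{1-\epsilon}$, and $1-\tfrac{\epsilon}{1-\epsilon}$. Finally, \cref{line:C7} — the only remaining charging line — charges triangles $(u,w,x)$ with $u\in N(v)$ and $w,x\in A_v\subseteq V\setminus C_v$; all three vertices differ from $v$, so the triangle is in $Z_v$, with charge exactly $\tfrac{5\epsilon/(1-\epsilon)}{|A_v|-1}$.

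To conclude, I would collect the bounds. For $X_v$ the worst charge is $\max\{1,\,2\delta/(1-\tfrac32\delta)\}$, and $2\delta/(1-\tfrac32\delta)\le 1$ is equivalent to $\tfrac72\delta\le 1$, which holds since $\delta\le\tfrac27$. For $Y_v$ the worst charge is $\max\{1,\,\delta,\,1+\tfrac{\epsilon}{1-\epsilon},\,1-\tfrac{\epsilon}{1-\epsilon}\}=1+\tfrac{\epsilon}{1-\epsilon}$, using $\delta\le\tfrac27<1$ and $\epsilon>0$. For $Z_v$ the charge is as stated. The argument is routine; the two points needing a moment of care are that no charging line revisits a triangle within an iteration and that the two $|A_v|$-branches are mutually exclusive — both immediate from the control flow — and the one genuine inequality to check is $2\delta/(1-\tfrac32\delta)\le 1$, i.e.\ the parameter constraint $\delta\le\tfrac27$. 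I expect this last point — namely, tracking exactly which parameter bound each charge value relies on so that the claimed maxima are correct — to be the only mildly delicate part of the proof.
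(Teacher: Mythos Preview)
Your proposal is correct and follows essentially the same approach as the paper: both arguments walk through the charging lines of \cref{alg:charging}, classify each charged triangle into $X_v$, $Y_v$, or $Z_v$ according to the pivot's adjacencies, and then bound the maximum charge using the parameter constraint $\delta\le\tfrac27$ to get $2\delta/(1-\tfrac32\delta)\le 1$. If anything, you are slightly more explicit than the paper in verifying that no triangle is charged twice within an iteration (e.g.\ noting the mutual exclusivity of the two $|A_v|$-branches and the uniqueness of the $C_v$-endpoint in a $Y_v$ triangle), but the substance is the same.
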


\begin{proof}
    We prove the three cases one by one below.
    \begin{enumerate}
         \item Note that followed by the charging scheme in \cref{line:C1,line:C2} of \cref{alg:charging} we charge bad triangles including a pivot $v$ and its neighbors $u$ and $w$ in iteration $i$ of the algorithm. That is by description, all the bad triangles in set $X_v$. Note that the charge of $t$ is bounded by maximum charge of \cref{line:C1,line:C2} that is equal to $\max ( 1, \frac{2\delta}{ 1- \frac{3}{2}\delta}) $. Note that by the choice of parameter $\delta \leq \frac{2}{7}$ in \cref{alg:modified-pivot}, we have $\frac{2\delta}{ 1- \frac{3}{2}\delta} \leq 1$, and therefore, $\max ( 1, \frac{2\delta}{ 1- \frac{3}{2}\delta}) = 1$.
        
        \item The structure of triangles in $Y_v$, is also the same as our charging cases in \cref{line:C8,line:C3,line:C4,line:C5,line:C6}. Note that we charge bad triangles in iteration $i$ including the pivot $v$, vertex $u\in C_v$ and, $w \in V\setminus C_v$. In this case, each triangle is charged at most by $\max (\delta, 1, 1-\frac{\epsilon}{1 - \epsilon} , 1+\frac{\epsilon}{1 - \epsilon}) = 1 + \frac{\epsilon}{1 - \epsilon}$.
        \item Finally, by description any bad triangle in set $Z_v$ is charged by  \cref{line:C7}, we charge each triangle in this set by $\frac{5\epsilon / (1-\epsilon)}{|A_v| - 1}$. 
        \end{enumerate}
    This completes the proof.
\end{proof}

\begin{definition}
    We define $N(a)$ in iteration $i$ of  \cref{alg:modified-pivot} as the set of the remaining neighbors of $a$ in $V$.
\end{definition}
\begin{definition}
Note that, for analyzing different bad triangles containing vertices $a$ and $b$ we need to define the sets where the third vertex $c$ is chosen from. Confirm that vertex $c$ should be in a neighborhood of $a$ or $b$. We define the following sets based on adjacency of vertex $c$ to $a$, $b$, or, both:
\begin{flalign*}
    N_a &:= N(a) \setminus (N(b) \cup b),\\
    N_b &:= N(b) \setminus (N(a)\cup a),\\
    N_{a,b} &:= (N(a) \cap N(b)) \setminus \{a,b\}.
\end{flalign*}
Note that these sets are defined based on the vertices remaining in the graph in iteration $i$ of \cref{alg:modified-pivot}.
\end{definition}
\begin{definition}
                Let us define $y_{(a, b), S}$ as the sum of the charges returned from \cref{alg:charging} for any bad triangle $t$ containing vertices $(a,b,c)$ such that $c \in S$. That is, we define
                $$y_{(a, b), S} := \sum_{t \in BT: a,b,c \in t, c \in S}y_t.$$
\end{definition}

To prove \cref{clm:pair-bound}, we need to separate the analysis into two parts. Particularly, the analysis of the edges and non-edges is different, this is because the charging scheme is not symmetric with respect to the adjacency of two vertices. 

\subsubsection{Width Analysis for Edges}\label{sec:width-e}

\begin{claim}\label{lem:edge-bound}
    For any $(a,b) \in E$ we have:
    \begin{enumerate}
        \item $\E[y_{(a, b), N_{a,b} }] =  0$,
        \item $\E[y_{(a, b)} \mid v \in (N(a) \Delta N(b) ) \setminus \{a,b\}] \leq 1+\frac{4\epsilon}{1 - \epsilon}$,
        \item $\E[y_{(a, b), N_a} \mid v =a ] \leq |N_a|$,
        \item $\E[y_{(a, b), N_b} \mid v =a ] \leq (1+\frac{\epsilon}{1-\epsilon})|N_b|$.
    \end{enumerate}
    
\end{claim}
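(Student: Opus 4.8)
The plan is to fix an edge $(a,b)\in E$ and prove the four statements by tracing which lines of \cref{alg:charging} can charge a bad triangle $(a,b,c)$, and in which iteration such a charge occurs. I would first set up the common framework: by \cref{obs:edge-charged-when-deleted} the edge $(a,b)$ is charged in at most one iteration, namely the first iteration in which a pivot $v$ is picked that is either an endpoint of $(a,b)$ or a common/individual neighbor of $a$ and $b$; call it the \emph{relevant iteration}. Conditioning on the identity of this pivot partitions the analysis, and the four parts of the claim correspond exactly to the cases $v\in N_{a,b}$ (part 1 — actually I should be careful, part 1 is about the third vertex $c\in N_{a,b}$, not the pivot), $v\in (N(a)\Delta N(b))\setminus\{a,b\}$ (part 2), and $v=a$ (parts 3 and 4, further split by whether $c\in N_a$ or $c\in N_b$). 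By symmetry $v=b$ is handled identically.

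For part 1, the point is that if $c\in N_{a,b}$ then $(a,b,c)$ is a triangle with all three pairs being edges — it is \emph{not} a bad triangle — so $y_{(a,b,c)}=0$ identically, hence $\E[y_{(a,b),N_{a,b}}]=0$; this is immediate. For part 2, suppose the relevant pivot $v$ lies in $(N(a)\Delta N(b))\setminus\{a,b\}$, say $v$ is adjacent to $a$ but not $b$ (the other case is symmetric). Then $a\in C_v$ and $b\notin C_v$, so the only bad triangles through $(a,b)$ charged in this iteration are of the form $(v,a,b)$ with $a$ playing the role of the $C_v$-vertex and $b$ the outside vertex; these get charged in one of \cref{line:C8,line:C3,line:C4,line:C5,line:C6}, each of value at most $1+\frac{\epsilon}{1-\epsilon}$. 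The subtlety is that $(a,b)$ could be charged in \emph{two} iterations when one charge is in \cref{line:C7} (per \cref{obs:edge-charged-when-deleted}) — but \cref{line:C7} only charges bad triangles of the form $(u,w,x)$ with $(w,x)\notin E$, so an \emph{edge} $(a,b)$ is never the non-edge pair there; for edges the "at most one iteration" bound of \cref{obs:edge-charged-when-deleted} is what applies, and I would invoke that. So the total charge to triangles through $(a,b)$ in the relevant iteration is at most (number of valid $c$) $\times (1+\frac{\epsilon}{1-\epsilon})$. Here I must bound the number of bad triangles $(v,a,b)$ — there is only one such triangle for a \emph{fixed} $v$, but $v$ can be either an $a$-neighbor-not-$b$-neighbor or vice versa; since we're conditioning on $v$, there's exactly one triangle and the $1+\frac{4\epsilon}{1-\epsilon}$ slack presumably accounts for possible additional non-\cref{line:C7} charges from a second iteration, or from a looser accounting — I would double-check the constant against the worst combination of \cref{line:C4} (value $1+\frac{\epsilon}{1-\epsilon}$) possibly occurring together with something else; the factor $4$ suggests summing several $\frac{\epsilon}{1-\epsilon}$ terms.

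For parts 3 and 4, condition on $v=a$. Then $C_v=\{a\}\cup N(a)$, and a bad triangle $(a,b,c)$ through the edge $(a,b)$ has $b\in C_v$ (since $(a,b)\in E$). If $c\in N_a$ then $(a,c)\in E$ and $(b,c)\notin E$ — this is a bad triangle with the non-edge $(b,c)$ inside $C_v$, charged in \cref{line:C1} or \cref{line:C2}, each of value at most $1$; summing over the at most $|N_a|$ choices of $c$ gives $\E[y_{(a,b),N_a}\mid v=a]\le |N_a|$, which is part 3. If $c\in N_b$ then $(b,c)\in E$, $(a,c)\notin E$, so $c\notin C_v$; the triangle $(a,b,c)$ has the edge $(b,c)$ with $b\in C_v$ and $c$ outside, so it is charged in one of \cref{line:C8,line:C3,line:C4,line:C5,line:C6}, value at most $1+\frac{\epsilon}{1-\epsilon}$; summing over the at most $|N_b|$ choices of $c$ gives part 4. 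The main obstacle I anticipate is \emph{part 2}, specifically nailing the constant $1+\frac{4\epsilon}{1-\epsilon}$: I need to carefully enumerate all lines that could charge a triangle through $(a,b)$ when the pivot is a neighbor of exactly one of $a,b$, account for the possibility (for the non-edge direction — but here $(a,b)$ is an edge so this shouldn't arise) of a second charging iteration, and verify that no single triangle or pair of triangles through the fixed edge $(a,b)$ accumulates more than $1+\frac{4\epsilon}{1-\epsilon}$; the slack of $4\epsilon/(1-\epsilon)$ versus the per-line bound of $\epsilon/(1-\epsilon)$ indicates that up to four such $\epsilon$-sized excesses can stack, and pinning down exactly when requires careful bookkeeping of the roles $a$ and $b$ can play (which of them is in $C_v$, whether either is in $D'_v$ or $A_v$ or $A'_v$) across the at most one relevant iteration.
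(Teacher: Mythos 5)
Parts 1, 3, and 4 of your proposal match the paper's argument: non-edges among common neighbors form no bad triangle; with $v=a$ the triangles with third vertex in $N_a$ lie in the ``pivot adjacent to both'' class and are charged at most $1$ each by \cref{line:C1,line:C2} (using $\tfrac{2\delta}{1-\frac{3}{2}\delta}\le 1$ for $\delta\le\tfrac{2}{7}$), and those with third vertex in $N_b$ are charged at most $1+\tfrac{\epsilon}{1-\epsilon}$ each.

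Part 2, however, has a genuine gap. You dismiss \cref{line:C7} on the grounds that it charges bad triangles $(u,w,x)$ whose non-edge is $(w,x)$, so an edge $(a,b)$ ``is never the non-edge pair there.'' That is true but irrelevant: $y_{(a,b)}$ sums the charges of \emph{all} bad triangles containing both $a$ and $b$, whether $(a,b)$ is the non-edge or one of the two edges of the triangle. When the pivot $v$ is adjacent to $a$ but not $b$, $|A_v|>k|C_v|$, and $b\in A_v$, the edge $(a,b)$ plays the role of $(u,w)$ in triangles $(a,b,x)$ charged in \cref{line:C7}, one for each $x\in A_v$ with $(a,x)\in E$ and $(b,x)\notin E$. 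There are at most $|A_v|-1$ such $x$, each charged $\tfrac{5\epsilon/(1-\epsilon)}{|A_v|-1}$, contributing up to $\tfrac{5\epsilon}{1-\epsilon}$; added to the charge $1-\tfrac{\epsilon}{1-\epsilon}$ that \cref{line:C6} places on the single triangle $(v,a,b)$, this gives exactly the bound $1+\tfrac{4\epsilon}{1-\epsilon}$. Your speculations about where the factor $4$ comes from (a second charging iteration, or stacking several per-line $\epsilon$-excesses) do not lead to the right accounting---by \cref{obs:edge-charged-when-deleted} an edge is charged in only one iteration, and the excess comes entirely from the \cref{line:C7} charges within that same iteration. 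Without this case analysis (split on $|A_v|\le k|C_v|$ versus $|A_v|>k|C_v|$ and on whether $b\in A_v$), part 2 is not established.
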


\begin{proof}
    Here we prove each statement separately.

    \begin{enumerate}
        \item We do not charge $t$ in \cref{alg:charging} if $c \in N(a) \cap N(b)$, as $t$ will not form a bad triangle.

        \item 
        In this case, $v$ is adjacent to exactly one of $a$ or $b$ due to the conditional event $v \in (N(a) \Delta N(b) ) \setminus \{a,b\}$. Let us assume without loss of generality that $v$ is adjacent to $a$. We consider the following three cases which cover all possibilities:
        \begin{itemize}
                \item $|A_v| \leq k|C_v|$:
                
                Confirm that, \cref{alg:charging} implies that in this setting we will only charge bad triangle $t = (a,b,v) \in Y_v$ .  The charges include \cref{line:C8,line:C3,line:C4}. The maximum charge for $t$ is $1+ \frac{\epsilon}{1 - \epsilon}$.

                \item $|A_v| > k|C_v|$ and $b \not\in A_v$: In this case if $b \notin A_v$ the only charge that applies to bad triangles $t$ including $(a,b)$ is the charge in \cref{line:C5}, this bounds the charge of $(a,b)$ by 1 for each choice of the pivot.

                \item $|A_v| > k|C_v|$ and $b \in A_v$:

                In this case, there are two types of bad triangles that involve $(a, b)$: bad triangles of type $(a,b,c) \in  Z_v$ charged in \cref{line:C7} and those of type $(a,b,v) \in Y_v$ charged in \cref{line:C6}.  Note that we charge $(a,b,v)$ in \cref{line:C6} by $1-\frac{\epsilon}{1 - \epsilon}$. In \cref{line:C7}, for any vertex $x$ such that $x \in N_a \cap A_v$ we charge $(a,b,x)$ by $\frac{5\epsilon / (1-\epsilon)}{|A_v| - 1}$. Since $x \in A_v$, there are at most $|A_v|-1$ choices of $x$ and so the total charge from such triangles involving $(a, b)$ is at most $\frac{5\epsilon / (1-\epsilon)}{|A_v| - 1} \cdot (|A_v|-1) = \frac{5\epsilon}{1 - \epsilon}$. Combined with the charge of $1-\frac{\epsilon}{1 - \epsilon}$ incurred in \cref{line:C6}, this sums up to at most a charge of $1+\frac{4\epsilon}{1 - \epsilon}$.
                \end{itemize}
            \item In this case, since $v = a$ and $a$ is adjacent to both endpoints of any bad triangle counted in $y_{(a, b), N_a}$, all such bad triangles belong to $X_v$ by \cref{def:triangle-types}. By \cref{obs:triangle-types}, any bad triangle in $X_v$ is charged at most by $1$. Since there are at most $|N_a|$ choices of the third vertex in bad triangles counted in $y_{(a, b), N_a}$ and each is charged by at most $1$ as discussed earlier, the total charges sum up to at most $|N_a|$.
            \item In this case, since $v =a $  the pivot is adjacent to $b$ and is not adjacent to any vertex $c \in N_b$, this means that all bad triangles $t$ in this form are an element in  $Y_v$ by \cref{def:triangle-types}. Note that, by \cref{obs:triangle-types} we charge any triangle in 
            $Y_v$ at most by $1+\frac{\epsilon}{1 - \epsilon}$. Confirm that, if we fix $a,b$ and the pivot, there are only $|N_b|$  choices for the third vertex of the bad triangles charged in $y_{(a,b),N_b}$ and each triangle is charged by at most $1 + \frac{\epsilon}{1 - \epsilon}$ as mentioned. Therefore, the total charge of such triangles is at most  $(1+\frac{\epsilon}{1 - \epsilon})|N_b|$.
    \end{enumerate}
    This wraps up the proof of \cref{lem:edge-bound}.
\end{proof}

\begin{claim}\label{clm:expand}
    For any $(a,b) \in E$, it holds that
    \begin{flalign*}
        \E[y_{(a, b)}] &= \Pr[v = a] \cdot \E[y_{(a, b)} \mid v = a] \\ & + \Pr[v = b] \cdot \E[y_{(a, b)} \mid v = b] \\&+ \Pr[v \in (N(a) \Delta N(b)) \setminus \{a,b\} ] \cdot \E[y_{(a, b)} \mid v \in (N(a) \Delta N(b)) \setminus \{a,b\}]  .
    \end{flalign*}
    where $v$ is the first pivot chosen at some iteration in \cref{alg:charging} that after processing $v$, at least one of $a$ or $b$ is removed. 
\end{claim}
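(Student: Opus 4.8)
The plan is to prove the identity by the law of total expectation, conditioning on which pivot $v$ plays the distinguished role in the statement. First I would note that this $v$ is well-defined with probability $1$: as in the proof that \modifiedpivot{} outputs a valid clustering, every vertex is eventually deleted from $V$, so there is a first iteration $i$ after whose pivot's processing at least one of $a,b$ is removed, and we take $v$ to be that pivot. A vertex is deleted exactly when it lies in $C_v=\{v\}\cup N(v)$, so the deletion of $a$ or $b$ forces $v\in\{a,b\}\cup N(a)\cup N(b)$, with neighborhoods taken in the graph of iteration $i$. Consequently the four events $\{v=a\}$, $\{v=b\}$, $\{v\in (N(a)\Delta N(b))\setminus\{a,b\}\}$, $\{v\in N_{a,b}\}$ partition the probability space: they are pairwise disjoint, since $N_{a,b}$ and $(N(a)\Delta N(b))\setminus\{a,b\}$ both exclude $a$ and $b$ while $N(a)\cap N(b)$ is disjoint from $N(a)\Delta N(b)$, and they are exhaustive by the containment just noted. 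The law of total expectation then writes $\E[y_{(a,b)}]$ as the sum of the four corresponding conditional contributions.

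The remaining step is to show that the contribution of the event $\{v\in N_{a,b}\}$ vanishes, which yields exactly the displayed three-term identity; I claim that on this event $y_{(a,b)}=0$ deterministically. By \cref{obs:edge-charged-when-deleted} the edge $(a,b)$ is charged in at most one iteration, and, as extracted from the proof of that observation, in that iteration at least one endpoint of $(a,b)$ is deleted. Combined with the fact that a charged bad triangle through a vertex $x$ always requires $x$ to still lie in $V$ — it is either the pivot, or a member of $C_v$, or a member of $A_v$, or one of the three vertices $u,w,x$ of a \cref{line:C7} triangle — this forces the unique charging iteration of $(a,b)$, if it exists, to be precisely iteration $i$: an earlier iteration still has $a,b\in V$ yet cannot host a charged triangle on $(a,b)$ without deleting an endpoint, and a later iteration cannot charge $(a,b)$ at all because by then an endpoint has left $V$.

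Finally, when $v\in N_{a,b}$ we have $a,b\in N(v)\subseteq C_v$, so both are deleted in iteration $i$ and in particular $a,b\notin A_v$. Hence no bad triangle charged in iteration $i$ contains both $a$ and $b$: a charged triangle containing the pivot together with $a$ and $b$ must be $(v,a,b)$, which has all three pairs in $E$ and so is not bad; and a \cref{line:C7} triangle $(u,w,x)$ has $(w,x)\notin E$ with $u\in N(v)$ and $w,x\in A_v$, which is incompatible with $\{a,b\}\subseteq\{u,w,x\}$ because $(a,b)\in E$ and $a,b\notin A_v$. Therefore $y_{(a,b)}=0$ on this event, completing the proof. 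I do not anticipate a real obstacle; the only points requiring care are verifying disjointness of the partition — it is exactly the exclusion of $a$ and $b$ that matters, since $a\in N(b)\setminus N(a)$ for $(a,b)\in E$ — and confirming that a pivot adjacent to both $a$ and $b$ deletes both of them in the same iteration.
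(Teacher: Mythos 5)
Your proposal is correct and follows essentially the same route as the paper: condition on the first iteration $i$ in which an endpoint of $(a,b)$ is removed, use \cref{obs:edge-charged-when-deleted} to localize all charges to that iteration, and then observe that the contribution from a pivot in $N(a)\cap N(b)$ vanishes, leaving the three-term expansion. Your treatment of the vanishing term is in fact slightly more explicit than the paper's (which just cites \cref{lem:edge-bound}), since you separately rule out \cref{line:C7} triangles by noting that both endpoints lie in $C_v$ and hence neither can be in $A_v$.
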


\begin{proof} 
    Let us condition on iteration $i$ of the while loop in \cref{alg:charging} being the first iteration where at least one of $a$ or $b$ gets removed from $V$. Note that conditioned on this event, the pivot $v$ of iteration $i$ must be in set $N(a) \cup N(b)$, and note that $a$ and $b$ themselves are part of this set too since $(a, b) \in E$. Moreover, $v$ is chosen uniformly from this set.
    
    By \cref{obs:edge-charged-when-deleted}, no triangle involving $(a, b)$ is charged before or after iteration $i$.  Thus, it suffices to calculate the expected charge to the triangles of $(a, b)$ exactly in iteration $i$. For the rest of the proof, we use $N(u)$ to denote the neighbors of any vertex $u$ still in $V$ in iteration $i$.
    Let us expand $\E[y_{(a, b)}]$ based on whether the pivot $v$ of iteration $i$ is chosen from the common neighbors of $a$ and $b$ or not. We have:
    \begin{flalign*}
        \E[y_{(a, b)}] &= \Pr[v \in N(a) \Delta N(b)] \cdot \E[y_{(a, b)} \mid v \in N(a) \Delta N(b)]\\
        &+ \Pr[v \in N(a) \cap N(b)] \cdot \E[y_{(a, b)} \mid v \in N(a) \cap N(b)].
    \end{flalign*}
    First, by \cref{lem:edge-bound} we have $\E[y_{(a, b)} \mid v \in N(a) \cap N(b)] = 0$. From this, we get that:
    \begin{flalign*}
        \E[y_{(a, b)}] = \Pr[v \in N(a) \Delta N(b)] \cdot \E[y_{(a, b)} \mid v \in N(a) \Delta N(b)].
    \end{flalign*} 
    Note that the structure of our analysis varies when pivot $v$ is chosen as vertex $a$, $b$, or from the set of $(N(a) \Delta N(b)) \setminus \{a,b\}$. To understand the differences we further expand $\E[y_{(a, b)}]$  conditioning each event describing whether $a$, $b$, or a vertex from the union of their neighborhood is chosen as a pivot.
    \begin{flalign*}
        \E[y_{(a, b)}] &= \Pr[v = a] \cdot \E[y_{(a, b)} \mid v = a] \\ & + \Pr[v = b] \cdot \E[y_{(a, b)} \mid v = b] \\&+ \Pr[v \in (N(a) \Delta N(b)) \setminus \{a,b\} ] \cdot \E[y_{(a, b)} \mid v \in (N(a) \Delta N(b)) \setminus \{a,b\}]  .
    \end{flalign*}
\end{proof}

\begin{claim}\label{clm:normal-upper}
    For any $e=(a,b) \in E$ the expected charge on $e$ is at most $$ \frac{(3+\frac{5\epsilon}{1-\epsilon}) \left(|N_a| + |N_b|\right)}{|N_a| + |N_b| + |N_{a,b}| +2}. $$
\end{claim}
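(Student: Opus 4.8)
The plan is to read off \cref{clm:normal-upper} as an arithmetic consequence of \cref{clm:expand} and \cref{lem:edge-bound}, the only real work being to pin down the three probabilities in the decomposition of \cref{clm:expand}. Conditioning on iteration $i$ being the one in which the first endpoint of $(a,b)$ is removed (exactly as in the proof of \cref{clm:expand}), the pivot $v$ of iteration $i$ is uniform over $N(a)\cup N(b)$. Since $(a,b)\in E$ we have $b\in N(a)$ and $a\in N(b)$, so this candidate set is the disjoint union $N_a\sqcup N_b\sqcup N_{a,b}\sqcup\{a,b\}$; writing $D:=|N_a|+|N_b|+|N_{a,b}|+2$ for its size, we get $\Pr[v=a]=\Pr[v=b]=1/D$, and since $(N(a)\Delta N(b))\setminus\{a,b\}=N_a\cup N_b$, also $\Pr[v\in (N(a)\Delta N(b))\setminus\{a,b\}]=(|N_a|+|N_b|)/D$. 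Note $D$ is precisely the denominator in the claimed bound.

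Next I would bound the three conditional expectations using \cref{lem:edge-bound}. Any bad triangle on $\{a,b,c\}$ with $(a,b)\in E$ forces $c$ to be adjacent to exactly one of $a,b$, i.e. $c\in N_a\cup N_b$ (a common neighbour $c\in N_{a,b}$ contributes nothing, which is part~1). Hence when $v=a$ I split $\E[y_{(a,b)}\mid v=a]=\E[y_{(a,b),N_a}\mid v=a]+\E[y_{(a,b),N_b}\mid v=a]$ and apply parts~3 and~4 to obtain $\E[y_{(a,b)}\mid v=a]\leq |N_a|+\big(1+\tfrac{\epsilon}{1-\epsilon}\big)|N_b|$. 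Applying the same argument with $a$ and $b$ interchanged (the four statements of \cref{lem:edge-bound} are symmetric under $a\leftrightarrow b$ together with $N_a\leftrightarrow N_b$) gives $\E[y_{(a,b)}\mid v=b]\leq |N_b|+\big(1+\tfrac{\epsilon}{1-\epsilon}\big)|N_a|$. Part~2 directly gives $\E[y_{(a,b)}\mid v\in (N(a)\Delta N(b))\setminus\{a,b\}]\leq 1+\tfrac{4\epsilon}{1-\epsilon}$.

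Plugging these into the identity of \cref{clm:expand} then yields
\begin{align*}
\E[y_{(a,b)}]
&\leq \frac{1}{D}\Big(|N_a|+\big(1+\tfrac{\epsilon}{1-\epsilon}\big)|N_b|+|N_b|+\big(1+\tfrac{\epsilon}{1-\epsilon}\big)|N_a|\Big)+\frac{|N_a|+|N_b|}{D}\Big(1+\tfrac{4\epsilon}{1-\epsilon}\Big)\\
&=\frac{|N_a|+|N_b|}{D}\Big(\big(2+\tfrac{\epsilon}{1-\epsilon}\big)+\big(1+\tfrac{4\epsilon}{1-\epsilon}\big)\Big)=\frac{\big(3+\tfrac{5\epsilon}{1-\epsilon}\big)(|N_a|+|N_b|)}{|N_a|+|N_b|+|N_{a,b}|+2},
\end{align*}
which is exactly the asserted bound (it holds conditioned on the deciding iteration $i$, consistent with the iteration-$i$ neighbourhoods in which the claim's quantities are defined).

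The step I expect to need the most care is the final combination: one must keep the two asymmetric bounds $|N_a|+\big(1+\tfrac{\epsilon}{1-\epsilon}\big)|N_b|$ (for $v=a$) and $|N_b|+\big(1+\tfrac{\epsilon}{1-\epsilon}\big)|N_a|$ (for $v=b$) separate and add them, since crudely upper-bounding each by $\big(1+\tfrac{\epsilon}{1-\epsilon}\big)(|N_a|+|N_b|)$ would produce a coefficient of $3+\tfrac{6\epsilon}{1-\epsilon}$ and miss the stated $3+\tfrac{5\epsilon}{1-\epsilon}$; it is the symmetrisation over the two pivot choices that recovers the sharper constant. Beyond that, the only thing to check carefully is the combinatorial identity $|N(a)\cup N(b)|=|N_a|+|N_b|+|N_{a,b}|+2$ for $(a,b)\in E$; there is no genuine obstacle here, as the claim is essentially bookkeeping on top of \cref{clm:expand} and \cref{lem:edge-bound}.
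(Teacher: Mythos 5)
Your proposal is correct and follows essentially the same route as the paper: expand via \cref{clm:expand}, bound the three conditional expectations using parts 1--4 of \cref{lem:edge-bound} (keeping the two asymmetric bounds for $v=a$ and $v=b$ separate so they average to the coefficient $2+\tfrac{\epsilon}{1-\epsilon}$), and use $|N(a)\cup N(b)|=|N_a|+|N_b|+|N_{a,b}|+2$ for the probabilities. No gaps.
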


\begin{proof}
    By \cref{clm:expand}, we expand  $\E[y_{(a, b)} \mid v \in N(a) \Delta N(b)]$ as follows:
    \begin{flalign*}
        \E[y_{(a, b)}] &= \Pr[v = a] \cdot \E[y_{(a, b)} \mid v = a] \\ & + \Pr[v = b] \cdot \E[y_{(a, b)} \mid v = b] \\&+ \Pr[v \in (N(a) \Delta N(b)) \setminus \{a,b\} ] \cdot \E[y_{(a, b)} \mid v \in (N(a) \Delta N(b)) \setminus \{a,b\}]  .
    \end{flalign*}

    Here we proceed with exploring each possible event for the pivot using \cref{lem:edge-bound}. In the case where $v=a$  for any bad triangle including $a,b$, we charge different values based on the third vertex. Here the charges for each choice of the third vertex $c$ are when $c \in N_a$ and $c \in N_b$: 
            \begin{flalign*}
                \E[y_{(a, b)} \mid v = a] =  & \E[y_{(a, b), N_a} \mid v = a] 
                \\ &+  \E[y_{(a, b), N_b} \mid v = a]  \leq \left(1+\frac{\epsilon}{1 - \epsilon}\right) |N_b| + |N_a|.
            \end{flalign*}

            By rewriting the above inequality for the case where $v =b $ we have:
            \begin{flalign*}
                \E[y_{(a, b)} \mid v = b] =  & \E[y_{(a, b), N_a} \mid v = b] 
                \\ &+  \E[y_{(a, b), N_b} \mid v = b]  \leq \left(1+\frac{\epsilon}{1 - \epsilon}\right) |N_a| + |N_b|.
            \end{flalign*}

            In the last case, where the pivot is not picked as any of $a$ or $b$, we have:
            \begin{flalign*}
                            \E[y_{(a, b)} \mid v \in (N(a) \Delta N(b) ) \setminus \{a,b\}] \leq 1+\frac{4\epsilon}{1 - \epsilon}.
            \end{flalign*}

            Since $\Pr[v = a]  = \Pr[v = b] = \frac{1}{|N(a) \cup N(b)|}$ and $\Pr[v \in (N(a) \Delta N(b)) \setminus \{a,b\} ] = \frac{|N_a|+ |N_b|}{|N(a) \cup N(b)|}$, combining the above inequalities we give the following upper bound for $ \E[y_{(a, b)}]$: 
        \begin{flalign*}
            \E[y_{(a, b)}] \leq & \frac{1}{|N_a| + |N_b| + |N_{a,b}| +2} \left[
            \left(3+\frac{5\epsilon}{1-\epsilon}\right) \left(|N_a| + |N_b|\right)\right]. \qedhere
        \end{flalign*}
\end{proof}

Now, we separate the analysis for three cases, $(C1)-(C3)$, and based on the properties in each case, we determine an upper bound for the expected charge of any edge. We introduce a parameter $\theta$ that will be set to minimize the charge over edges. For any of the following cases, we will use \cref{clm:expand} to expand the expected charge on each edge. To calculate the expected charge of the edge $(a,b)$ conditioned on any event representing the state of the pivot with respect to the pair of $(a,b)$, we need to determine all the bad triangles charged in \cref{alg:charging} in iteration $i$. Note that for the events where $v \in \{a,b\}$, the choices of the third vertex of a bad triangle $t$ in the form of $(a,b,c)$, determines the charges on $t$.

    \begin{enumerate}[label=$(C\arabic*)$]
        \item $\max\{|N_a|, |N_b|\} \leq \frac{\theta}{\delta}$.\label{itm:1-e}
        \item $\max \{|N_a| , N_b| \} > \frac{\theta}{\delta}$, $|N(a) \cap N(b)| + 2< \frac{\delta}{2- \delta} |N(a) \cup N(b)|$.\label{itm:2-e}
        \item $\max \{|N_a| , N_b| \} > \frac{\theta}{\delta}$, $|N(a) \cap N(b)| + 2 \geq \frac{\delta}{2- \delta} |N(a) \cup N(b)|.$\label{itm:3-e}
    \end{enumerate}

\begin{claim}\label{clm:1-e}
    In \ref{itm:1-e}, the expected charge on $(a,b)$ is at most $\left(1- \frac{\delta}{\theta + \delta}\right)\left(3+ \frac{5\epsilon}{1 - \epsilon}\right)$. 
\end{claim}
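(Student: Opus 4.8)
The plan is to obtain this bound as an almost immediate consequence of \cref{clm:normal-upper}, using only the defining inequality of case \ref{itm:1-e}. Recall that \cref{clm:normal-upper} states
$$\E[y_{(a,b)}] \leq \frac{\left(3+\frac{5\epsilon}{1-\epsilon}\right)\left(|N_a|+|N_b|\right)}{|N_a|+|N_b|+|N_{a,b}|+2}.$$

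First I would drop the nonnegative quantity $|N_{a,b}|$ from the denominator; this only weakens the bound, giving
$$\E[y_{(a,b)}] \leq \left(3+\frac{5\epsilon}{1-\epsilon}\right)\cdot\frac{|N_a|+|N_b|}{|N_a|+|N_b|+2} = \left(3+\frac{5\epsilon}{1-\epsilon}\right)\left(1-\frac{2}{|N_a|+|N_b|+2}\right).$$
Since $1-\frac{2}{|N_a|+|N_b|+2}$ is increasing in $|N_a|+|N_b|$, it suffices to substitute an upper bound for $|N_a|+|N_b|$. Exactly such a bound is supplied by case \ref{itm:1-e}: from $\max\{|N_a|, |N_b|\} \leq \theta/\delta$ we get $|N_a|+|N_b| \leq 2\theta/\delta$, whence
$$1-\frac{2}{|N_a|+|N_b|+2} \leq 1 - \frac{2}{2\theta/\delta+2} = 1-\frac{\delta}{\theta+\delta}.$$

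Combining the two displays yields $\E[y_{(a,b)}] \leq \left(1-\frac{\delta}{\theta+\delta}\right)\left(3+\frac{5\epsilon}{1-\epsilon}\right)$, as claimed. There is no genuine obstacle in this step; the only point requiring a little care is the direction of the monotonicity — because the quantity being bounded increases with $|N_a|+|N_b|$, it is precisely an \emph{upper} bound on $|N_a|+|N_b|$ that is needed, which is why this case is carved out by the threshold $\theta/\delta$ (the complementary regimes \ref{itm:2-e} and \ref{itm:3-e}, where $|N_a|$ or $|N_b|$ is large, will instead require a more refined argument that leverages the subsampled sets $D'_v$ and $A'_v$).
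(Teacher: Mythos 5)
Your proposal is correct and follows exactly the paper's own argument: apply \cref{clm:normal-upper}, drop $|N_{a,b}|$ from the denominator to obtain the factor $1-\frac{2}{|N_a|+|N_b|+2}$, and then use $|N_a|+|N_b|\leq 2\theta/\delta$ from the case condition to conclude. The monotonicity remark and the computation $1-\frac{2}{2\theta/\delta+2}=1-\frac{\delta}{\theta+\delta}$ are both accurate.
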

\begin{proof}

    To prove the claim, we use the upper bound from \cref{clm:normal-upper} and the condition in \ref{itm:1-e}: 
        \begin{flalign*}
            \E[y_{(a, b)}] \leq & \frac{1}{|N_a| + |N_b| + |N_{a,b}| +2} \left[
            \left(3+\frac{5\epsilon}{1-\epsilon}\right) (|N_a| + |N_b|)\right] 
            \\& \leq \left(1 - \frac{2}{|N_a| + |N_b| + 2}\right)\left(3+\frac{5\epsilon}{1 - \epsilon}\right)
            \leq \left(1- \frac{\delta}{\theta + \delta}\right)\left(3+ \frac{5\epsilon}{1 - \epsilon}\right).\qedhere
        \end{flalign*}
    \end{proof}

    \begin{claim}\label{clm:2-e}
    In \ref{itm:2-e}, the expected charge on $(a,b)$ is at most $3 +\frac{5\epsilon}{1 - \epsilon} - \frac{\theta\delta +\delta^2-\delta}{2(\theta+\delta )}\cdot \frac{2-7\delta}{2-3\delta}$. 
\end{claim}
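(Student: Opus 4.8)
The plan is to sharpen the crude estimate $\E[y_{(a,b),N_a}\mid v=a]\le|N_a|$ of \cref{lem:edge-bound}; this is the only term in the decomposition of \cref{clm:expand} that we improve, and for the other two terms we reuse the bounds of \cref{lem:edge-bound} exactly as in the proof of \cref{clm:normal-upper}. First I would fix the orientation by assuming without loss of generality that $|N_a|\ge|N_b|$, so the hypothesis $\max\{|N_a|,|N_b|\}>\theta/\delta$ of \ref{itm:2-e} reads $|N_a|>\theta/\delta$; and I would dispose of the degenerate regime $\theta+\delta<1$ right away, since there the claimed bound already exceeds $3+\tfrac{5\epsilon}{1-\epsilon}$ and the statement follows from \cref{clm:normal-upper}. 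So assume $\theta+\delta\ge1$.

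The heart of the argument is the observation that in the iteration with pivot $v=a$, the vertex $b$ necessarily lies in $D_a$. Using $(a,b)\in E$ we have $N(b)\cap C_a=\{a\}\cup N_{a,b}$ and $|C_a|=|N_a|+|N_{a,b}|+2$, so ``$b\in D_a$'' is exactly the inequality $|N_{a,b}|+1\le\delta|C_a|-1$, i.e.\ $|N_{a,b}|+2\le\tfrac{\delta}{1-\delta}|N_a|$. On the other hand, rewriting the density hypothesis of \ref{itm:2-e} gives $|N_{a,b}|+2<\tfrac{\delta}{2(1-\delta)}(|N_a|+|N_b|)\le\tfrac{\delta}{1-\delta}|N_a|$, the last step by $|N_b|\le|N_a|$; hence $b\in D_a$. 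Now any bad triangle $(a,b,c)$ with $c\in N_a$ has $b,c\in C_a$ and $(b,c)\notin E$, so in this iteration it can only be charged in \cref{line:C1,line:C2}, by $1$ when $b\notin D'_a$ and $c\notin D'_a$, and by $\gamma:=\tfrac{2\delta}{1-\tfrac32\delta}$ otherwise. Thus on the event $\{b\in D'_a\}$ every such triangle is charged only $\gamma$ instead of $1$, a saving of $1-\gamma=\tfrac{2-7\delta}{2-3\delta}\ge0$ (nonnegative since $\delta\le\tfrac27$) per triangle.

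Next I would lower-bound $\Pr[b\in D'_a\mid v=a]$. As $D'_a$ is a uniformly random subset of $D_a$ of size $\min\{|D_a|,\lfloor\delta|C_a|\rfloor\}$ and $|D_a|\le|N(a)|=|C_a|-1$, this probability is at least $\tfrac{\delta|C_a|-1}{|C_a|-1}$; and since $|C_a|\ge|N_a|+2>\theta/\delta+2$, a direct computation gives $\delta(|C_a|-1)>(1-\delta)(\theta+\delta)$, hence $\Pr[b\in D'_a\mid v=a]\ge\tfrac{\delta(\theta+\delta-1)}{\theta+\delta}$. Summing the per-triangle charge over the $|N_a|$ choices of $c\in N_a$,
$$\E\big[y_{(a,b),N_a}\mid v=a\big]\ \le\ |N_a|\Big(1-\tfrac{2-7\delta}{2-3\delta}\cdot\tfrac{\delta(\theta+\delta-1)}{\theta+\delta}\Big).$$
Feeding this into \cref{clm:expand} in place of $\E[y_{(a,b),N_a}\mid v=a]\le|N_a|$, and using the crude bounds of \cref{lem:edge-bound} on the $v=b$ term and the $v\in(N(a)\Delta N(b))\setminus\{a,b\}$ term, the identical bookkeeping to \cref{clm:normal-upper} gives, with $p:=|N_a|+|N_b|$ and $q:=|N_{a,b}|+2$,
$$\E[y_{(a,b)}]\ \le\ \tfrac{p}{p+q}\Big(3+\tfrac{5\epsilon}{1-\epsilon}\Big)-\tfrac{|N_a|}{p+q}\cdot\tfrac{2-7\delta}{2-3\delta}\cdot\tfrac{\delta(\theta+\delta-1)}{\theta+\delta}.$$
Using $|N_a|\ge p/2$, and then $\tfrac{p}{p+q}\le1$ together with the fact that the bracketed quantity is positive (the subtracted term is at most $\delta/2\le\tfrac17$), this collapses to $3+\tfrac{5\epsilon}{1-\epsilon}-\tfrac{\theta\delta+\delta^2-\delta}{2(\theta+\delta)}\cdot\tfrac{2-7\delta}{2-3\delta}$, which is exactly the claimed bound.

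I expect the main obstacle to be the probability estimate $\Pr[b\in D'_a\mid v=a]\ge\tfrac{\delta(\theta+\delta-1)}{\theta+\delta}$: one must absorb the floor in $\lfloor\delta|C_a|\rfloor$, the slack between $|D_a|$ and $|C_a|-1$, and check that the rough bound $|C_a|>\theta/\delta+2$ — all the hypothesis of \ref{itm:2-e} gives us directly — is enough; this is where the $\theta$-dependence in the statement comes from. A minor bookkeeping point is that every conditional expectation above is really conditioned on the graph at the first iteration that removes $a$ or $b$; since no bad triangle through $(a,b)$ is charged in any other iteration (\cref{obs:edge-charged-when-deleted}) and each bound holds pointwise over such graphs, averaging over that state is harmless.
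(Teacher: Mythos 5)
Your proposal is correct and follows essentially the same route as the paper's proof: the key step in both is observing that the density condition of \ref{itm:2-e} together with $|N_b|\le|N_a|$ forces $b\in D_a$ when $v=a$, lower-bounding $\Pr[b\in D'_a]$ by $\delta-\tfrac{\delta}{\theta+\delta}$, crediting a per-triangle saving of $\tfrac{2-7\delta}{2-3\delta}$ over the $|N_a|$ triangles, and then redistributing the saving between $|N_a|$ and $|N_b|$ via $|N_a|\ge(|N_a|+|N_b|)/2$. Your explicit treatment of the floor in $\lfloor\delta|C_a|\rfloor$ and of the degenerate regime $\theta+\delta<1$ are minor refinements the paper elides.
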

\begin{proof} 

    Let us assume that  $N(b) \leq N(a)$, by this distinction between $a$ and $b$, we investigate each event representing different states for pivot:
    \begin{enumerate}
        
    \item $v =a$:

        In this event, we charge the pair $(a,b)$ for any remaining vertex $c$ in the union of the neighborhood of $a$ and $b$, this is because, any bad triangle has $2$ adjacent vertices, and since we are charging all the bad triangles involving $a,b$, the third vertex should be either adjacent to $a$ or $b$.
        Now, by investigating any choice of vertex $c \in N(a) \cup N(b)$ that creates a bad triangle with $a,b$, we compute the total charges on $a,b$. Note that, the different cases affecting the analysis, are related to whether $c$ is picked from $N_a$, $N_b$, or $N_{a,b}$, we expand $\E[y_{(a, b)} \mid v = a]$ based on these choices for the third vertex:
            \begin{flalign*}
                \E[y_{(a, b)}  \mid v = a ] &=  \E[y_{(a, b),N_a} \mid v = a ] 
                \\ & + \E[y_{(a, b), N_{a,b}} \mid v = a] 
                \\&+ \E[y_{(a, b),N_b} \mid v =a]  
                \\& \leq  \E[y_{(a, b),N_a} \mid v = a ]  + \left(1 + \frac{\epsilon}{1 - \epsilon}\right)|N_b|.
            \end{flalign*}
            
            Note that the inequality is resulted from \cref{lem:edge-bound}. Now we explore $\E[y_{(a, b),N_a} \mid v = a ]$.  

               Since $N(b) \leq N(a)$, and based on the assumption of this claim,  we have
    $$|N(a) \cap N(b)| +2 < \frac{\delta}{2 - 2\delta} (|N_a| + |N_b| ) \leq \frac{\delta}{1- \delta} |N_a|. $$
    Moving the terms, this implies
               $$(1- \delta)(|N(a) \cap N(b)| +2) < \delta (|N_a|),$$
               which using the fact that $N_a = N(a) \setminus (N(b) \cup b)$ it holds that:
               $$|N(a) \cap N(b)| +1 < \delta (|N(a)| + 2) - 1  < \delta|C_v|.$$

                Note that the above inequality implies that $|N(a) \cap N(b)| +1 < \delta |C_v|$ by \cref{alg:modified-pivot}, we have $b \in D_a$. Thus, vertex $b$ joins $D_v'$ with probability $\frac{\min\{|D_v|, \lfloor \delta |C_v| \rfloor \} } { |D_v|}$. Here we find a lower bound for this probability using the condition in \ref{itm:2-e}:
                \begin{flalign*}
                    \frac{\min\{|D_v|, \lfloor \delta |C_v| \rfloor \} } { |D_v|} 
                    \geq  \frac{\delta|C_v| - 1}{|D_v|} 
                     \geq \delta - \frac{\delta}{\theta + \delta}
                \end{flalign*}

                Note that by \cref{obs:edge-charged-when-deleted} any edge is charged once, and then at least one of its endpoints is removed from the graph.
                 The only choices of $c$ that change the charging of $t$ depending on whether $D_a'$ contains $b$ or not, are the vertices in $N_a$. At this step, we can expand $\E[y_{(a, b),N_a} \mid v = a ] $ conditioning on state of $b$ with respect to $D_v'$:\begin{flalign*}
                     \E[y_{(a, b),N_a} \mid v = a ] &= \Pr[b \notin D_v'|v=a] \cdot \E[y_{(a, b), N_a} \mid v = a, b \notin D_v'] \\
                    &+ \Pr[b \in D_v'|v=a ] \cdot \E[y_{(a, b),N_a} \mid v = a, b \in D_v'].
                \end{flalign*}
                 
                 In the first case, if $b \notin D_a'$: if $c \notin D_a'$ we charge $t$ by \cref{line:C1}, otherwise we charge it by \cref{line:C2}. Therefore in this case for each choice of $c$, we charge $t$ at most 1, and since we have $|N_a|$ such bad triangles then:
                 \begin{flalign*}
                    \E[y_{(a, b),N_a} \mid v = a, b \notin D_v'] \leq |N_a|.
                \end{flalign*}
                
                In the case where $b \in D_a'$ we always charge $t$ by \cref{line:C2}. This implies the following:
                \begin{flalign*}
                    \E[y_{(a, b),N_a} \mid v = a, b \in D_v'] = \frac{2\delta}{1-\frac{3}{2}\delta}|N_a| .
                \end{flalign*}
                
                Based on the bounds above, we get: 
                \begin{flalign*}
                     \E[y_{(a, b),N_a} \mid v = a ] \leq
                     &  \left( \left(1-\frac{\min\{|D_v|, \lfloor \delta |C_v| \rfloor \} } { |D_v|} \right) 
                    + \frac{\min\{|D_v|, \lfloor \delta |C_v| \rfloor \} } { |D_v|} \cdot \frac{2\delta}{1-\frac{3}{2}\delta} \right) |N_a|
                    \\& \leq \left(1- \frac{\min\{|D_v|, \lfloor \delta |C_v| \rfloor \} } { |D_v|} \left(1 -  \frac{2\delta}{1-\frac{3}{2}\delta} \right)\right) |N_a|
                    \\& \leq \left(1-  \frac{\theta\delta +\delta^2 -\delta}{\theta +\delta}\cdot \frac{2-7\delta}{2-3\delta}\right)|N_a|
                \end{flalign*}

            \item $v=b$: 

            As explored in event $v=a$, we differentiate between triangles by choices of the third vertex in $t$. Following this we expand $ \E[y_{(a, b)}  \mid v = b ]$:
            \begin{flalign*}
            \E[y_{(a, b)}  \mid v = b ] &=  \E[y_{(a, b),N_b} \mid v = b ] 
             + \E[y_{(a, b), N_{a,b}} \mid v = b] 
            + \E[y_{(a, b),N_a} \mid v =b]
            \\ &\leq \left(1+\frac{\epsilon}{1 - \epsilon}\right)|N_a| +  |N_b|.
            \end{flalign*}

            Confirm that the above inequality is simply resulted from \cref{lem:edge-bound}.

            \item  $v \in (N(a) \Delta N(b) ) \setminus \{a,b\}$: 
            
                Directly by \cref{lem:edge-bound} we have:
                \begin{flalign*}
                \E[y_{(a, b)} \mid v \in (N(a) \Delta N(b) ) \setminus \{a,b\}] \leq \left(1 + \frac{4\epsilon}{1 - \epsilon}\right).
            \end{flalign*}

        \end{enumerate}
        
         Finally, we have:
         \begin{flalign*}
        \E[y_{(a, b)}] &= \Pr[v = a] \cdot\left( \left(1-  \frac{\theta\delta +\delta^2 -\delta}{\theta +\delta}\cdot \frac{2-7\delta}{2-3\delta}\right) |N_a| + \left(1+\frac{\epsilon}{1 - \epsilon}\right) |N_b|\right)
        \\ & + \Pr[v = b] \cdot \left[\left(1+ \frac{\epsilon}{1 - \epsilon}\right)|N_a| + |N_b| \right]
        \\& + \Pr[v \in (N(a) \Delta N(b)) \setminus \{a,b\} ] \cdot \left(1+\frac{4\epsilon}{1 - \epsilon}\right)
        \\ &=\frac{\left(3 + \frac{5\epsilon}{1 - \epsilon} - \frac{\theta\delta + \delta ^2 -\delta}{\theta+ \delta}\cdot \frac{2-7\delta}{2-3\delta}\right) |N_a| + \left(3+\frac{5\epsilon}{1 - \epsilon}\right) |N_b|}{|N_a| + |N_b| + |N_{a,b}| + 2 }.
        \end{flalign*}

        Let $\alpha = \frac{ \frac{\theta\delta + \delta ^2 -\delta}{\theta+ \delta}\cdot \frac{2-7\delta}{2-3\delta}}{3+ \frac{5\epsilon}{1 - \epsilon}}$. Now, we give an upper bound on $E[y_{(a, b)}]$ based on $\alpha$:
        \begin{flalign*}
        E[y_{(a, b)}] &\leq \frac{3+\frac{5\epsilon}{1 - \epsilon}}{|N_a| + |N_b| + |N_{a,b}| + 2}
        \left[ (1-\alpha) |N_a|  + (1-\frac{\alpha}{2})|N_b| + \frac{\alpha}{2}|N_b| \right] 
        \\ & \leq  \frac{3+\frac{5\epsilon}{1 - \epsilon}}{|N_a| + |N_b| + |N_{a,b}| + 2}
        \left[ (1-\frac{\alpha}{2}) |N_a|  + (1-\frac{\alpha}{2})|N_b| \right]
        \\ &\leq \left(3+\frac{5\epsilon}{1 - \epsilon}\right)\left(1-\frac{\alpha}{2}\right)
        \\&  = 3 +\frac{5\epsilon}{1 - \epsilon} - \frac{\theta\delta +\delta^2-\delta}{2(\theta+\delta )}\cdot \frac{2-7\delta}{2-3\delta}.\qedhere
        \end{flalign*}
        
\end{proof}
 \begin{claim}\label{clm:3-e}
    In \ref{itm:3-e}, the expected charge on $(a,b)$ is at most $\left(1- \frac{\delta}{2-\delta}\right)\left(3+\frac{5\epsilon}{1 - \epsilon}\right)$. 
\end{claim}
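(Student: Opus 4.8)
The plan is to derive the bound directly from the generic estimate of \cref{clm:normal-upper} together with the structural hypothesis defining case \ref{itm:3-e}; among the three edge cases \ref{itm:1-e}--\ref{itm:3-e} this is the one that needs no probabilistic refinement at all, only an elementary counting identity.

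First I would record the set identities that hold precisely because $(a,b)\in E$. Since $b\in N(a)$ and $a\in N(b)$ while $a\notin N(a)$ and $b\notin N(b)$, the three sets $N_a$, $N_b$, $N_{a,b}$ are pairwise disjoint, none of them contains $a$ or $b$, and in the relevant iteration we have $N(a)\cup N(b)=\{a,b\}\sqcup N_a\sqcup N_b\sqcup N_{a,b}$, whereas $N(a)\cap N(b)=N_{a,b}$. Hence $|N(a)\cup N(b)|=|N_a|+|N_b|+|N_{a,b}|+2$ and $|N(a)\cap N(b)|=|N_{a,b}|$, which are exactly the quantities appearing in the denominator of \cref{clm:normal-upper} and in the hypothesis of \ref{itm:3-e}.

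Next I would rewrite the hypothesis $|N(a)\cap N(b)|+2\ge\frac{\delta}{2-\delta}\,|N(a)\cup N(b)|$ of \ref{itm:3-e} using these identities as $|N_{a,b}|+2\ge\frac{\delta}{2-\delta}\bigl(|N_a|+|N_b|+|N_{a,b}|+2\bigr)$, which rearranges to
\[
\frac{|N_a|+|N_b|}{|N_a|+|N_b|+|N_{a,b}|+2}=1-\frac{|N_{a,b}|+2}{|N_a|+|N_b|+|N_{a,b}|+2}\le 1-\frac{\delta}{2-\delta}.
\]
Plugging this into the bound $\E[y_{(a,b)}]\le\frac{(3+\frac{5\epsilon}{1-\epsilon})(|N_a|+|N_b|)}{|N_a|+|N_b|+|N_{a,b}|+2}$ from \cref{clm:normal-upper} immediately yields $\E[y_{(a,b)}]\le\bigl(1-\frac{\delta}{2-\delta}\bigr)\bigl(3+\frac{5\epsilon}{1-\epsilon}\bigr)$, which is the claim.

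There is no real obstacle here; the only thing that requires care is bookkeeping the ``$+2$'' correctly, which stems exactly from the fact that $(a,b)$ is an edge (so $a$ and $b$ themselves sit in $N(b)$ and $N(a)$, respectively, and thus in $N(a)\cup N(b)$ but in none of $N_a,N_b,N_{a,b}$). With \cref{clm:1-e}, \cref{clm:2-e}, and this claim in hand, the expected charge on every edge is at most $\apxfactor$ once $\theta$ (and $\epsilon,\delta,k$) are set appropriately, completing the width analysis for edges.
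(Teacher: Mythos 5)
Your proposal is correct and follows essentially the same route as the paper: both derive the bound by combining the generic estimate of \cref{clm:normal-upper} with the case hypothesis of \ref{itm:3-e}, rewritten via the identity $|N(a)\cup N(b)|=|N_a|+|N_b|+|N_{a,b}|+2$ (the paper states this step as $|N_a|+|N_b|\le(1-\frac{\delta}{2-\delta})|N(a)\cup N(b)|$, which is the same rearrangement). Your explicit bookkeeping of the ``$+2$'' and the disjointness of $N_a,N_b,N_{a,b}$ is just a more careful write-up of the identical argument.
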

   \begin{proof}
       Note that by the condition in \ref{itm:3-e}, we have: $$|N(a) \cap N(b)| + 2 \geq \frac{\delta}{2- \delta} |N(a) \cup N(b)|,$$ this implies that:

       $$|N_a| + |N_b| \leq \left( 1- \frac{\delta}{2-\delta}\right) |N(a) \cup N(b)|.$$ 
       Using the inequality on the sum of $|N_a|$ and $|N_b|$, and also the upper bound from \cref{clm:normal-upper} we have:
        \begin{flalign*}
            \E[y_{(a, b)}] 
            \leq \frac{1}{|N_a| + |N_b| + |N_{a,b}| +2} \left[
            \left(3+\frac{5\epsilon}{1 - \epsilon}\right) (|N_a| + |N_b|)\right]  \leq \left(1- \frac{\delta}{2-\delta}\right)\left(3+\frac{5\epsilon}{1 - \epsilon}\right).\qedhere
        \end{flalign*}
\end{proof}


\subsubsection{Width Analysis for Non-edges}\label{sec:width-ne}

\begin{claim}\label{lem:non-edge-bound}
    For any $(a,b) \notin E$ we have:
    \begin{enumerate}
        \item $\E[y_{(a, b), N_a\cup N_b }] =  0$ .
        \item $\E[y_{(a, b)} \mid v \in N_{a,b} ] \leq 1$ .
        \item $\E[y_{(a, b), N_{a,b}} \mid v =a ] \leq (1+\frac{\epsilon}{1 - \epsilon})|N_{a,b}|$  .
        
    \end{enumerate}
\end{claim}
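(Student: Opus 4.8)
The plan is to follow the skeleton of the proof of \cref{lem:edge-bound}, using one structural fact special to non-edges: if $(a,b)\notin E$ then $\{a,b,c\}$ is a bad triangle only when \emph{both} $(a,c)\in E$ and $(b,c)\in E$, i.e.\ only when $c\in N_{a,b}$. Item~(1) is then immediate: for $c\in N_a\cup N_b$ exactly one of $(a,c),(b,c)$ is an edge, so $\{a,b,c\}$ has exactly one edge among its three pairs and is not a bad triangle; hence $y_{(a,b,c)}=0$ with probability $1$, giving $\E[y_{(a,b),N_a\cup N_b}]=0$. For items~(2) and~(3) I would condition on where the pivot $v$ of the relevant iteration sits relative to $\{a,b\}$ and read the per-triangle charge directly off the corresponding lines of \cref{alg:charging}, invoking the parameter ranges $\epsilon\le\frac1{14}$, $\delta\in[4\epsilon,\frac27]$.

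For item~(2), condition on the pivot being a common neighbor, $v\in N_{a,b}$, so that $a,b\in C_v$. The non-edge $(a,b)$ then lies inside $C_v$, and the only bad triangle through $a$ and $b$ this iteration can charge is $(v,a,b)$, handled by the first \texttt{for} loop: it receives $1$ (\cref{line:C1}) or $2\delta/(1-\tfrac32\delta)$ (\cref{line:C2}), and $\delta\le\tfrac27$ forces $2\delta/(1-\tfrac32\delta)\le1$. No other triangle $(a,b,c)$ with $c\in N_{a,b}$ is charged here, as such a triangle omits the pivot and the only pivot-free charges are in \cref{line:C7}, which would need $a,b\in A_v$ — impossible since $A_v\cap C_v=\emptyset$. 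Hence the charge on $(a,b)$ in this iteration is at most $1$.

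For item~(3), condition on $v=a$. As $(a,b)\notin E$ we have $b\notin C_a$, while every $c\in N_{a,b}$ is a neighbor of $a$ and hence lies in $C_a$. So each bad triangle $(a,b,c)$ with $c\in N_{a,b}$ is of the form handled by the ``$(u,w)\in E$, $u\in C_v$, $w\in V\setminus C_v$'' loops with $(u,w)=(c,b)$ — equivalently it lies in the set $Y_v$ of \cref{def:triangle-types} — so by \cref{obs:triangle-types} it is charged at most $\max\{1,\delta,1-\tfrac{\epsilon}{1-\epsilon},1+\tfrac{\epsilon}{1-\epsilon}\}=1+\tfrac{\epsilon}{1-\epsilon}$, and it is not charged at all if $b\in A$. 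Summing over the at most $|N_{a,b}|$ choices of $c$ (each giving exactly one such triangle) yields $\E[y_{(a,b),N_{a,b}}\mid v=a]\le(1+\tfrac{\epsilon}{1-\epsilon})|N_{a,b}|$.

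The main obstacle is the cross-iteration bookkeeping: unlike an edge, which by \cref{obs:edge-charged-when-deleted} is touched in a single iteration, a non-edge may be charged in two iterations — at most once in \cref{line:C7} and at most once elsewhere. The plan is to argue, as in items~(2) and~(3), that conditioned on each pivot-location event the non-\cref{line:C7} charge on $(a,b)$ is confined to the single iteration described, and to let the eventual decomposition of $\E[y_{(a,b)}]$ (the non-edge analogue of \cref{clm:expand}) absorb the separate \cref{line:C7} contribution. The only other routine check is that, under the stated parameter ranges, $2\delta/(1-\tfrac32\delta)\le1$ and the maximum $Y_v$-charge really is $1+\tfrac{\epsilon}{1-\epsilon}$, which legitimizes the per-triangle bounds above.
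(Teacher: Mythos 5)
Your proof is correct and follows essentially the same route as the paper's: item (1) is the structural observation that a non-edge plus a vertex in $N_a\cup N_b$ yields only one edge and hence no bad triangle, item (2) places the unique charged triangle $(v,a,b)$ in $X_v$ and bounds its charge by $\max\{1,2\delta/(1-\tfrac32\delta)\}=1$, and item (3) places each triangle $(a,b,c)$ with $c\in N_{a,b}$ in $Y_v$ and sums the per-triangle bound $1+\tfrac{\epsilon}{1-\epsilon}$ over the at most $|N_{a,b}|$ choices of $c$. Your extra remarks ruling out \cref{line:C7} charges within the iteration and deferring the cross-iteration \cref{line:C7} contribution to the later decomposition match how the paper handles these points in \cref{clm:expand-n}.
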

\begin{proof}
    We prove the three parts one by one.
    \begin{enumerate}
        \item Note that, the triangle $t =(a,b,c)$ such that $c \in N_a\cup N_b$ does not form a bad triangle as there exists only one edge in $t$.
        
        \item In this case, we have $v \in N_{a,b}$ that means the pivot $v$ is adjacent to both $a$ and $b$. By \cref{def:triangle-types} any bad triangle of this structure belongs to the set $X_v$. By \cref{obs:triangle-types} we charge such bad triangles at most by $1$. Note that, for any fixed pair of vertices given the pivot, we have one such bad triangle, and therefore the total charge is bounded by $1$.
        
        \item Note that any triangle charged in this case is in $Y_v$. This is because for any fixed pair of non-edge $(a,b)$, any bad triangle charged in $y_{(a,b),N_{a,b}}$ with the condition that $v=a$, we have $v$ is not adjacent to $b$ but it is adjacent to the third vertex $c$ chosen from the set $N_{a,b}$. By \cref{def:triangle-types} any such bad triangle is in $Y_v$ and is charged at most by $1 + \frac{\epsilon}{1 - \epsilon}$ as we discussed in \cref{obs:triangle-types}.  Summing up over choices of the third vertex, we get an upper bound of  $(1+\frac{\epsilon}{1 - \epsilon})|N_{a,b}|$ over charges to all such bad triangles.
    \end{enumerate}
    The proof is complete.
\end{proof}

\begin{claim}\label{clm:expand-n}
    The expected charge over a pair of vertices $(a,b) \notin E$ is expandable as follows in case the pair does not belong to the set $E$:
     \begin{flalign*}
        \E[y_{(a, b)}] &= \Pr[v = a] \cdot \E[y_{(a, b)} \mid v = a] \\ & + \Pr[v = b] \cdot \E[y_{(a, b)} \mid v = b] \\&+ \Pr[v \in N_{a,b} ] \cdot \E[y_{(a, b)} \mid v \in N_{a,b} ]  + \frac{5}{k} \cdot \frac{\epsilon}{1-\epsilon}.
    \end{flalign*}
    where $v$ is the first pivot chosen at some iteration in \cref{alg:charging} that after processing $v$, at least one of $a$ or $b$ is removed. 
\end{claim}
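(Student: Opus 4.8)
The plan is to follow the proof of \cref{clm:expand} for edges, adding one ingredient that accounts for the extra iteration in which a non-edge can be charged (\cref{obs:edge-charged-when-deleted}). First I would note that since $(a,b)\notin E$, every bad triangle $t$ containing both $a$ and $b$ has its third vertex $c\in N_{a,b}$, and that inspecting the charging lines of \cref{alg:charging} shows $t=(a,b,c)$ can be charged only in one of three ways: (i) in the iteration whose pivot is $c$, via \cref{line:C1,line:C2}; (ii) in an iteration whose pivot is $a$ or $b$, via \cref{line:C8,line:C3,line:C4,line:C5,line:C6}; or (iii) in an iteration whose pivot is some $v'\notin t$ with $a,b\in A_{v'}$ and $c$ a common neighbor of $a,b$ with $c\in N(v')$, via \cref{line:C7}. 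By \cref{obs:edge-charged-when-deleted}, all charges of types (i)--(ii) occur in a single iteration, all type-(iii) charges occur in a single, different iteration, and there is at most one of each.

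Second I would bound the type-(iii) contribution. If $(a,b)$ receives any \cref{line:C7} charge it happens in an iteration with pivot $v'$ satisfying $|A_{v'}|>k|C_{v'}|$; the charged triangles through $(a,b)$ are the $(u,a,b)$ with $u$ a common neighbor of $a,b$ lying in $N(v')$, of which there are at most $|C_{v'}|-1$, each charged $\tfrac{5\epsilon/(1-\epsilon)}{|A_{v'}|-1}$. Since $k\ge 1$ gives $|A_{v'}|-1\ge k(|C_{v'}|-1)$, the total such charge to $(a,b)$ is at most $\tfrac{5}{k}\cdot\tfrac{\epsilon}{1-\epsilon}$, unconditionally.

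Third I would handle the type-(i)--(ii) contribution exactly as in \cref{clm:expand}. Condition on iteration $i$, the first iteration in which at least one of $a,b$ leaves $V$; every type-(i)/(ii) charge to $(a,b)$ must fall in iteration $i$, since each such charge forces the pivot into $\{a,b\}\cup N_{a,b}$, which forces $a$ or $b$ into $C_v$, and once one of $a,b$ is removed no further type-(i)/(ii) charge is possible. Conditioned on $i$ being this first iteration, its pivot $v$ is uniform over $\{a,b\}\cup N(a)\cup N(b)$ by the same reasoning as in the proof of \cref{clm:expand}. Partitioning this set into $\{a\}$, $\{b\}$, $N_{a,b}$, and $N(a)\Delta N(b)$, and observing that when $v\in N(a)\Delta N(b)$ the pivot lies in no bad triangle with $a$ and $b$ so the type-(i)/(ii) charge is $0$ and that case drops, we are left with the three conditional terms in the statement; adding the type-(iii) contribution of at most $\tfrac{5}{k}\cdot\tfrac{\epsilon}{1-\epsilon}$ establishes the claim.

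The main work here is bookkeeping rather than a new idea: one must check that the type-(iii) iteration and iteration $i$ are genuinely distinct (they are, since $a,b\in A_{v'}\subseteq V\setminus C_{v'}$ are not removed in the \cref{line:C7} iteration, which therefore precedes $i$); that $a$ or $b$ possibly already lying in $A$ from an earlier \cref{line:C7} iteration does not disturb the uniformity of the pivot in iteration $i$ (membership in $A$ neither prevents a vertex from being a pivot nor changes which selections remove $a$ or $b$); and that within any single iteration no triangle through $(a,b)$ is charged both in \cref{line:C7} and in another line (a \cref{line:C7} charge needs $a,b\in A_{v'}$, which is incompatible with the pivot being $a$, $b$, or a vertex whose cluster contains $a$ or $b$). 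Each of these follows from \cref{obs:pivot-in-triangles-except-C7} and \cref{obs:edge-charged-when-deleted} together with the definitions of $A_v$ and $C_v$.
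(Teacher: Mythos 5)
Your proposal is correct and follows essentially the same route as the paper: condition on the first iteration $i$ in which $a$ or $b$ leaves $V$, note the pivot is uniform over $\{a,b\}\cup N(a)\cup N(b)$ and that the symmetric-difference case contributes nothing, and account separately for the single earlier iteration that can charge $(a,b)$ via \cref{line:C7}, bounding that contribution by $\frac{5\epsilon/(1-\epsilon)}{k}$ exactly as the paper does. Your bookkeeping is if anything slightly tighter (using $|C_{v'}|-1$ choices for $u$ and $|A_{v'}|-1\ge k(|C_{v'}|-1)$, and explicitly verifying the \cref{line:C7} iteration precedes iteration $i$), but the decomposition and key steps coincide with the paper's proof.
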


\begin{proof} 
    Let us condition on iteration $i$ of the while loop in \cref{alg:charging} being the first iteration where at least one of $a$ or $b$ gets removed from $V$. Note that conditioned on this event, the pivot $v$ of iteration $i$ must be in set $N(a) \cup N(b) \cup \{a,b\}$. Moreover, $v$ is chosen uniformly from this set.

    By \cref{obs:edge-charged-when-deleted}, any triangle involving $(a, b)$ is charged in at most two iterations.  We consider the charge from the iteration that results in removing at least one of the endpoints of this pair (iteration $i$), and sum it up with the maximum possible charge that could have happened in \cref{line:C7} of an earlier iteration in \cref{alg:charging}.   For the rest of the proof, we use $N(u)$ to denote the neighbors of any vertex $u$ still in $V$ in iteration $i$.
    
    Let us expand $\E[y_{(a, b)}]$ based on whether the pivot $v$ of iteration $i$ is chosen from the common neighbors of $a$ and $b$ or not. We use $\E[y_{(a,b)} \mid v']$ to denote the additive expected charge for $(a,b)$ resulted from the case where $(a,b)$ is charged once before iteration $i$, and we use $v'$ to denote the pivot picked at that earlier iteration. Taking this charge into account, it holds that: 
    \begin{flalign*}
        \E[y_{(a, b)}] &= \Pr[v \in N(a) \Delta N(b)] \cdot \E[y_{(a, b)} \mid v \in N(a) \Delta N(b)]\\
        &+ \Pr[v \in N_{a,b} \cup \{a,b\}] \cdot \E[y_{(a, b)} \mid v \in N_{a,b} \cup \{a,b\}] + \E[y_{(a,b)} \mid v'] 
    \end{flalign*}

    First, note that by \cref{lem:non-edge-bound}, $\E[y_{(a, b)} \mid v \in N(a) \Delta N(b)] = 0$. From this, we get that
    \begin{flalign*}
        \E[y_{(a, b)}] = \Pr[v \in N_{a,b} \cup \{a,b\}] \cdot \E[y_{(a, b)} \mid v \in N_{a,b} \cup \{a,b\}] + \E[y_{(a,b)} \mid v'].
    \end{flalign*} 
    Note that the structure of our analysis varies when pivot $v$ is chosen as vertex $a$, $b$, or from the set of $N_{a,b} $. To understand the differences we further expand $\E[y_{(a, b)}]$  conditioning on each event describing whether $a$, $b$, or a vertex from the intersection of their neighborhood is chosen as a pivot.
    \begin{flalign*}
        \E[y_{(a, b)}] &= \Pr[v = a] \cdot \E[y_{(a, b)} \mid v = a] \\ & + \Pr[v = b] \cdot \E[y_{(a, b)} \mid v = b] \\&+ \Pr[v \in N_{a,b} ] \cdot \E[y_{(a, b)} \mid v \in N_{a,b} ] + \E[y_{(a,b)} \mid v']  .
    \end{flalign*}

    Now, it only remains to prove that $\E[y_{(a,b)} \mid v'] \leq \frac{5}{k} \cdot \frac{\epsilon}{1-\epsilon} $. Note that, there exists at most one pivot $v'$ charging any non-edge by \cref{line:C7} in \cref{alg:charging}, however, for any third vertex $c$ holding the properties of vertex $u$ in \cref{line:C7} in iteration where we remove $v'$, we charge $(a,b,c)$ by $\frac{\frac{5\epsilon}{1-\epsilon}}{|A_{v'}|-1}$. Since there are most $|C_{v'}|$ choices for $c$, this gives an upper bound of $\frac{\frac{5\epsilon}{1-\epsilon}}{|A_{v'}|-1} \cdot |C_{v'}|$ for this particular charges on $(a,b)$. Since we only charge such bad triangles if $|A_{v'}| > k |C_{v'}|$, this implies 
    $$\E[y_{(a,b)} \mid v'] \leq \frac{\frac{5\epsilon}{1-\epsilon}}{k}. \qedhere$$
\end{proof}

\begin{claim}\label{clm:non-normal-upper}
    For any $e=(a,b) \notin E$  the expected charges over $e$ is at most $$ \frac{(3+\frac{2\epsilon}{1 - \epsilon}) |N_{a,b}|}{|N_a| + |N_b| + |N_{a,b}| +2} + \frac{\frac{5\epsilon}{1-\epsilon}}{k}
              .$$
\end{claim}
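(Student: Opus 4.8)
The plan is to obtain this bound by feeding the conditional estimates of \cref{lem:non-edge-bound} into the expansion of \cref{clm:expand-n}. The conceptual point that makes everything go through is that, since $(a,b)\notin E$, a triple $(a,b,c)$ is a bad triangle \emph{only} when both $(a,c)\in E$ and $(b,c)\in E$, i.e.\ only when $c\in N(a)\cap N(b)=N_{a,b}$. Hence, no matter which pivot is chosen, every bad triangle through $(a,b)$ that ever receives a charge has its third vertex in $N_{a,b}$; this is exactly \cref{lem:non-edge-bound}(1), and it lets us replace each conditional expectation $\E[y_{(a,b)}\mid \cdot]$ by its restriction $\E[y_{(a,b),N_{a,b}}\mid \cdot]$.

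Carrying this out: for $v=a$ we get $\E[y_{(a,b)}\mid v=a]=\E[y_{(a,b),N_{a,b}}\mid v=a]\le\bigl(1+\tfrac{\epsilon}{1-\epsilon}\bigr)|N_{a,b}|$ by \cref{lem:non-edge-bound}(3); by the symmetry of the charging scheme under swapping $a$ and $b$, the same bound holds for $v=b$. For $v\in N_{a,b}$, \cref{lem:non-edge-bound}(2) gives $\E[y_{(a,b)}\mid v\in N_{a,b}]\le 1$. Next, since $(a,b)\notin E$ we have $a\notin N(b)$ and $b\notin N(a)$, so the pivot of the decisive iteration (iteration $i$ in \cref{clm:expand-n}) is drawn uniformly from a set of size $|N(a)\cup N(b)\cup\{a,b\}|=|N_a|+|N_b|+|N_{a,b}|+2$; therefore $\Pr[v=a]=\Pr[v=b]=\tfrac{1}{|N_a|+|N_b|+|N_{a,b}|+2}$ and $\Pr[v\in N_{a,b}]=\tfrac{|N_{a,b}|}{|N_a|+|N_b|+|N_{a,b}|+2}$. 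Plugging these into the identity of \cref{clm:expand-n} and collecting terms, the numerator becomes $\bigl[2\bigl(1+\tfrac{\epsilon}{1-\epsilon}\bigr)+1\bigr]|N_{a,b}|=\bigl(3+\tfrac{2\epsilon}{1-\epsilon}\bigr)|N_{a,b}|$, and the additive $\tfrac{5}{k}\cdot\tfrac{\epsilon}{1-\epsilon}$ term that \cref{clm:expand-n} already isolated (accounting for a possible \cref{line:C7} charge in an earlier iteration) is simply carried along. This yields exactly the claimed inequality.

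There is no genuine obstacle here — it is a bookkeeping step. The only place needing a line of care is justifying $\E[y_{(a,b)}\mid v=a]=\E[y_{(a,b),N_{a,b}}\mid v=a]$ and, for $v\in N_{a,b}$, that the bound of $1$ applies with no extra contribution: when $v$ is a common neighbor of $a$ and $b$ we have $a,b\in C_v$, hence $a,b\notin A_v$, so no bad triangle charged in \cref{line:C7} during iteration $i$ can contain the non-edge $(a,b)$, and by \cref{obs:pivot-in-triangles-except-C7} every other charged bad triangle through $(a,b)$ contains the pivot. Combined with the observation that triples with third vertex in $N_a\cup N_b$ are not bad triangles at all, this confirms that the only charged bad triangle in that case is $(a,b,v)\in X_v$, with charge at most $1$.
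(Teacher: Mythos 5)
Your proposal is correct and follows essentially the same route as the paper: expand via \cref{clm:expand-n}, bound the conditional terms for $v=a$, $v=b$, and $v\in N_{a,b}$ using \cref{lem:non-edge-bound}, plug in the uniform-pivot probabilities over $N(a)\cup N(b)\cup\{a,b\}$, and carry the additive $\frac{5\epsilon/(1-\epsilon)}{k}$ term. Your extra paragraph justifying that every charged bad triangle through a non-edge has its third vertex in $N_{a,b}$ is a welcome (if implicit in the paper) clarification, but it does not change the argument.
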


\begin{proof}
    Note that by \cref{clm:expand-n} we have:
    \begin{flalign*}
        \E[y_{(a, b)}] &= \Pr[v = a] \cdot \E[y_{(a, b)} \mid v = a] \\ & + \Pr[v = b] \cdot \E[y_{(a, b)} \mid v = b] \\&+ \Pr[v \in N_{a,b} ] \cdot \E[y_{(a, b)} \mid v \in  ]  + \frac{5}{k} \cdot \frac{\epsilon}{1-\epsilon}.
    \end{flalign*}

    Here we proceed with exploring each event using \cref{lem:non-edge-bound}. In the case where $v=a$ for any bad triangle including $a,b$, we charge different values based on the third vertex. Here the charges for each choice of the third vertex $c$ are when $c \in N_{a,b}$: 
            \begin{flalign*}
                &\E[y_{(a, b)} \mid v \in \{a,b\}] \\ &=   \E[y_{(a, b), N_{a,b}} \mid v = a] + \E[y_{(a, b), N_{a,b}} \mid v = b]
                \\  &\leq 2(1+\frac{\epsilon}{1 - \epsilon}) |N_{a,b}|.
            \end{flalign*}
For the case that the pivot is picked from the common neighbors of $a$ and $b$, we get:
            \begin{flalign*}
                            \E[y_{(a, b)} \mid v \in N_{a,b}] \leq 1.
            \end{flalign*}
        
        Since $\Pr[v = a]  = \Pr[v = b] = \frac{1}{|N(a) \cup N(b) \cup \{a,b\}|}$ and $\Pr[v \in N_{a,b} ]  = \frac{|N_{a,b}|}{|N(a) \cup N(b) \cup \{a,b\}|}$ , 
        combining the above inequalities we give the following upper bound for $ \E[y_{(a, b)}]$: 
        \begin{flalign*}
            \E[y_{(a, b)}] & \leq \frac{1}{|N_a| + |N_b| + |N_{a,b}| +2} \left[
            \left(3+\frac{2\epsilon}{1 - \epsilon}\right) |N_{a,b}| \right]  + \frac{\frac{5\epsilon}{1-\epsilon}}{k}.\qedhere
            \end{flalign*}

\end{proof}

Now, we separate the analysis for three cases, $(D1)-(D3)$, and based on the properties in each case, we determine an upper bound for the expected charge of any edge. We introduce a parameter $\lambda$ that will be set to minimize the charge over non-edges. For any of the following cases, we will use \cref{clm:expand-n} to expand the expected charge on each edge. To calculate the expected charge of the non-edge $(a,b)$ conditioned on any event representing the state of the pivot with respect to the pair of $(a,b)$, we need to determine all the bad triangles charged in \cref{alg:charging} in iteration $i$. Note that for the events where $v \in \{a,b\}$, the choices of the third vertex of a bad triangle $t$ in the form of $(a,b,c)$, determines the charges on $t$.

    \begin{enumerate}[label=$(D\arabic*)$]
        \item $\min\{|N(a)|, |N(b)| \}  \leq \frac{\lambda}{\delta}$.\label{itm:1-ne}
        \item $\min\{|N(a)|, |N(b)| \}  > \frac{\lambda}{\delta}$, $|N(a) \Delta N(b)|+2 < \frac{\epsilon}{1+\epsilon} |N(a) \cup N(b)|$.\label{itm:2-ne} 
        \item $\min\{|N(a)|, |N(b)| \}  > \frac{\lambda}{\delta}$, $|N(a) \Delta N(b)|+2 \geq \frac{\epsilon}{1+\epsilon} |N(a) \cup N(b)|.$\label{itm:3-ne}
    \end{enumerate}

\begin{claim}\label{clm:1-ne}
     Let us assume that $|N_a| \geq |N_b|$ w.l.o.g. In \ref{itm:1-ne}, the expected charge on $(a,b)$ is at most $\frac{\lambda+\delta}{(2\delta+\lambda)}\left(3+\frac{2\epsilon}{1 - \epsilon}\right) + \frac{\frac{5\epsilon}{1-\epsilon}}{k}.$ 
\end{claim}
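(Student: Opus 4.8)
The plan is to obtain the bound as a short consequence of \cref{clm:non-normal-upper}, which gives
\[
\E[y_{(a, b)}] \;\le\; \frac{\bigl(3+\tfrac{2\epsilon}{1-\epsilon}\bigr)\,|N_{a,b}|}{|N_a|+|N_b|+|N_{a,b}|+2} \;+\; \frac{5\epsilon/(1-\epsilon)}{k}.
\]
Since the additive term $5\epsilon/((1-\epsilon)k)$ already coincides with the one in the target, it suffices to show
\[
\frac{|N_{a,b}|}{|N_a|+|N_b|+|N_{a,b}|+2} \;\le\; \frac{\lambda+\delta}{2\delta+\lambda}.
\]

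The first step is to turn the hypothesis of \ref{itm:1-ne} into a bound on $|N_{a,b}|$. Because $(a,b)\notin E$ we have $b\notin N(a)$ and $a\notin N(b)$, so $N(a)$ is the disjoint union of $N_a$ and $N_{a,b}$, and likewise $N(b)$ is the disjoint union of $N_b$ and $N_{a,b}$. In particular $|N_{a,b}|\le|N(a)|$ and $|N_{a,b}|\le|N(b)|$, hence $|N_{a,b}|\le\min\{|N(a)|,|N(b)|\}\le\lambda/\delta$ by the case assumption. (The w.l.o.g.\ ordering $|N_a|\ge|N_b|$ only names which of the two degrees attains the minimum and is not otherwise needed for this particular inequality.)

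The second step is an elementary monotonicity calculation. Dropping the nonnegative quantities $|N_a|$ and $|N_b|$ from the denominator,
\[
\frac{|N_{a,b}|}{|N_a|+|N_b|+|N_{a,b}|+2} \;\le\; \frac{|N_{a,b}|}{|N_{a,b}|+2},
\]
and since $t\mapsto t/(t+2)$ is increasing on $[0,\infty)$ and $|N_{a,b}|\le\lambda/\delta$, the right-hand side is at most $\tfrac{\lambda/\delta}{\lambda/\delta+2}=\tfrac{\lambda}{\lambda+2\delta}\le\tfrac{\lambda+\delta}{\lambda+2\delta}$, where the last inequality uses $\delta>0$. Multiplying through by $3+\tfrac{2\epsilon}{1-\epsilon}$ and adding back $5\epsilon/((1-\epsilon)k)$ recovers exactly the claimed upper bound $\tfrac{\lambda+\delta}{2\delta+\lambda}\bigl(3+\tfrac{2\epsilon}{1-\epsilon}\bigr)+5\epsilon/((1-\epsilon)k)$.

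I do not anticipate any genuine obstacle here: the argument is pure bookkeeping on top of \cref{clm:non-normal-upper}. The only point worth stating carefully is that the numerator of the main fraction is governed by $|N_{a,b}|$ rather than by $|N(a)|$ or $|N(b)|$, and that — precisely because $(a,b)$ is a non-edge — the case threshold on the \emph{smaller} of the two degrees upper-bounds this common-neighborhood size; everything else is a monotone elementary inequality.
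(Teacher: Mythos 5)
Your proof is correct and follows essentially the same route as the paper: both start from \cref{clm:non-normal-upper} and reduce to bounding $\frac{|N_{a,b}|}{|N_a|+|N_b|+|N_{a,b}|+2}$ via the case hypothesis of \ref{itm:1-ne}, differing only in that you bound the fraction through $|N_{a,b}|\le\lambda/\delta$ directly (yielding the slightly sharper $\tfrac{\lambda}{\lambda+2\delta}$ before relaxing), whereas the paper passes through $\tfrac{|N(b)|}{|N(b)|+2}\le 1-\tfrac{1}{|N(b)|+2}$. Your observation that the w.l.o.g.\ ordering is not actually needed for this step is also accurate.
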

\begin{proof}
By \cref{clm:non-normal-upper} and the condition in \ref{itm:1-ne} we have:
        \begin{flalign*}
            \E[y_{(a, b)}] &\leq \frac{1}{|N_a| + |N_b| + |N_{a,b}| +2} \left[
            \left(3+\frac{2\epsilon}{1 - \epsilon}\right) |N_{a,b}| \right] + \frac{\frac{5\epsilon}{1-\epsilon}}{k} 
            \\& \leq \left(\frac{|N(b)|}{|N(b)| + 2 }\right)\left(3+\frac{2\epsilon}{1 - \epsilon}\right) + \frac{\frac{5\epsilon}{1-\epsilon}}{k} 
            \\ & \leq   \left(1 - \frac{1}{|(N(b)| + 2}\right)\left(3+\frac{2\epsilon}{1 - \epsilon}\right) +\frac{\frac{5\epsilon}{1-\epsilon}}{k}
            \\ & \leq \frac{\lambda + \delta}{(2\delta+\lambda)}\left(3+\frac{2\epsilon}{1 - \epsilon}\right) +\frac{\frac{5\epsilon}{1-\epsilon}}{k}.\qedhere
        \end{flalign*}

        Note that the second inequality holds since we have $|N(b)| \geq |N_{a,b}|$ and the claim assumption implies $|N(b)|+2 \leq |N_a| + |N_b| + |N_{a,b}| +2 $. Also, the last inequality holds since we have $|N(b)| \leq \frac{\lambda}{ \delta}$, this concludes that $1 - \frac{1}{|N(b)| + 2} \leq 1 - \frac{1}{\lambda/\delta +2} = \frac{\lambda + \delta}{\lambda + 2\delta}$. 
        
\end{proof}

\begin{claim}\label{clm:2-ne}
    In \ref{itm:2-ne}, the expected charge on $(a,b)$ is at most $$ \max  \left[ 3 +\frac{2\epsilon}{1 - \epsilon}+ 2\left( - \frac{\delta}{k} + \frac{\delta/k}{\delta + \lambda}\right)\cdot \left(1+\frac{\epsilon}{1 - \epsilon} - \delta\right) , 3- \frac{2\epsilon}{1 - \epsilon} \right] + \frac{\frac{5\epsilon}{1-\epsilon}}{k}.$$
\end{claim}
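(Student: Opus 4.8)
The plan is to mirror the proof of the edge-case analog \cref{clm:2-e}, with the role that set $D_a$ played there (a vertex being moved to a singleton) now taken by $A_a$ (a non-neighbor being absorbed into the pivot's cluster). I would start from the decomposition of \cref{clm:expand-n},
\[
\E[y_{(a,b)}] = \Pr[v=a]\,\E[y_{(a,b)}\mid v=a] + \Pr[v=b]\,\E[y_{(a,b)}\mid v=b] + \Pr[v\in N_{a,b}]\,\E[y_{(a,b)}\mid v\in N_{a,b}] + \E[y_{(a,b)}\mid v'],
\]
where the last term is at most $\tfrac{5\epsilon/(1-\epsilon)}{k}$, the $v\in N_{a,b}$ term is at most $1$ by \cref{lem:non-edge-bound}, and $\Pr[v=a]=\Pr[v=b]=\tfrac1D$, $\Pr[v\in N_{a,b}]=\tfrac{|N_{a,b}|}{D}$ with $D:=|N_a|+|N_b|+|N_{a,b}|+2$. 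All the work is in beating the trivial estimate $\E[y_{(a,b)}\mid v=a]\le (1+\tfrac{\epsilon}{1-\epsilon})|N_{a,b}|$ of \cref{lem:non-edge-bound} using the hypotheses of \ref{itm:2-ne}; since both $\min\{|N(a)|,|N(b)|\}>\lambda/\delta$ and $|N(a)\,\Delta\,N(b)|+2<\tfrac{\epsilon}{1+\epsilon}|N(a)\cup N(b)|$ are symmetric in $a,b$, the same improvement applies to $\E[y_{(a,b)}\mid v=b]$.

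The key structural step: condition on the iteration $i$ in which $v=a$ is chosen. Since $(a,b)\notin E$ one has $|N(b)\,\Delta\,C_v|=1+|N(a)\,\Delta\,N(b)|$, and combining $|N(a)\cup N(b)|\le |N(a)|+|N(a)\,\Delta\,N(b)|$ with the \ref{itm:2-ne} inequality gives, after an elementary manipulation, $|N(b)\,\Delta\,C_v|\le\epsilon|C_v|-1$; hence $b\in A_v$ whenever $b\notin A$ (and if $b\in A$, \cref{alg:charging} charges no triangle for the pair $(a,b)$ in iteration $i$, which only helps). For $v=a$, every bad triangle containing $(a,b)$ charged in iteration $i$ has the form $(a,b,c)$ with $c\in N_{a,b}$, charged via its edge $(c,b)$ with $c\in C_v$, $b\in V\setminus C_v$; in particular, when $|A_v|>k|C_v|$ \emph{no} \cref{line:C7} triangle involves the non-edge $(a,b)$, since $a=v\notin A_v$. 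Now split on $|A_v|$. If $|A_v|\le k|C_v|$, each such triangle is charged $\delta$ if $b\in A'_v$ (\cref{line:C3}) and $1+\tfrac{\epsilon}{1-\epsilon}$ otherwise (\cref{line:C4}), and $b$ joins $A'_v$ with probability $p:=\tfrac{\min\{|A_v|,\lfloor\delta|C_v|\rfloor\}}{|A_v|}\ge\tfrac{\delta|C_v|-1}{k|C_v|}$, which by $|C_v|=|N(a)|+1>\tfrac\lambda\delta+1$ is at least $\tfrac\delta k-\tfrac{\delta/k}{\delta+\lambda}$; this yields an expected per-triangle saving of at least $\big(\tfrac\delta k-\tfrac{\delta/k}{\delta+\lambda}\big)\big(1+\tfrac{\epsilon}{1-\epsilon}-\delta\big)$ over $1+\tfrac{\epsilon}{1-\epsilon}$. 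If $|A_v|>k|C_v|$, each such triangle is charged $1-\tfrac{\epsilon}{1-\epsilon}$ (\cref{line:C6}), a saving of $\tfrac{2\epsilon}{1-\epsilon}$. Taking the worse branch and multiplying by the at most $|N_{a,b}|$ triangles (and using that the charge is $0$ when $b\in A$) gives $\E[y_{(a,b)}\mid v=a]\le |N_{a,b}|\big[(1+\tfrac{\epsilon}{1-\epsilon})-\min\big((\tfrac\delta k-\tfrac{\delta/k}{\delta+\lambda})(1+\tfrac{\epsilon}{1-\epsilon}-\delta),\,\tfrac{2\epsilon}{1-\epsilon}\big)\big]$, and likewise for $v=b$.

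Finally I would substitute back: the three main terms sum to $\tfrac{|N_{a,b}|}{D}\big(3+\tfrac{2\epsilon}{1-\epsilon}-2\min(\cdot)\big)$; since $\min(\cdot)\le\tfrac{2\epsilon}{1-\epsilon}$ forces $3+\tfrac{2\epsilon}{1-\epsilon}-2\min(\cdot)\ge 3-\tfrac{2\epsilon}{1-\epsilon}>0$ and $\tfrac{|N_{a,b}|}{D}\le 1$, we may drop the fraction, and rewriting $3+\tfrac{2\epsilon}{1-\epsilon}-2\min(X,Y)=\max\big(3+\tfrac{2\epsilon}{1-\epsilon}-2X,\,3+\tfrac{2\epsilon}{1-\epsilon}-2Y\big)$ and adding $\tfrac{5\epsilon/(1-\epsilon)}{k}$ lands exactly on the claimed bound. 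The main obstacle is the structural step: carefully unwinding the \ref{itm:2-ne} inequality (with the rounding in $A_v$, $A'_v$, $\lfloor\delta|C_v|\rfloor$) to certify $b\in A_v$ and to get the lower bound on $p$, and then keeping the two nested case splits ($b\in A$ or not; $|A_v|\lessgtr k|C_v|$) organized so the minimum of the two savings is what survives in the final bound.
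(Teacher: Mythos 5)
Your proposal is correct and follows essentially the same route as the paper: the same decomposition via \cref{clm:expand-n}, the same use of the \ref{itm:2-ne} inequality to certify $b\in A_v$ when $v\in\{a,b\}$, the same case split on $|A_v|\lessgtr k|C_v|$ with the probability bound $p\ge \frac{\delta}{k}-\frac{\delta/k}{\delta+\lambda}$, and the same final combination (your $\min$-of-savings formulation is algebraically identical to the paper's $\max$). If anything, you are slightly more careful than the paper in explicitly dispatching the $b\in A$ subcase and in noting that \cref{line:C7} cannot charge $(a,b)$ in the iteration where $a$ or $b$ is the pivot.
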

\begin{proof}
        Here the analysis varies when pivot $v$ is chosen as vertex $a$, $b$, or from the set of $N_{a,b}$. To understand the differences we further expand $\E[y_{(a, b)}]$ by \cref{clm:expand-n} conditioning on whether $a$ or $b$ is chosen as a pivot or not:
            \begin{flalign*}
                \E[y_{(a, b)}] &\leq \Pr[v = a] \cdot \E[y_{(a, b)} \mid v = a] 
                \\ & + \Pr[v = b] \cdot \E[y_{(a, b)} \mid v = b] 
                \\&+ \Pr[v \in  N_{a,b} ] \cdot \E[y_{(a, b)} \mid v \in N_{a,b}] +\frac{\frac{5\epsilon}{1-\epsilon}}{k}.
            \end{flalign*}
        
             We determine all the bad triangles charged in \cref{alg:charging} in iteration $i$ by investigating each event based on the pivot separately:
        \begin{enumerate}
        
            \item $v \in \{a,b\}$:

                Now, by checking any vertex $c \in N_{a,b}$, we find about each charging in \cref{alg:charging} that charges triangle $t = (a,b,c)$. We explore $\E[y_{(a, b)} \mid v = a ]$, and note that the analysis for the case where $v=b$ is the same as that for $v=a$.  Now, the condition in \ref{itm:2-ne} implies 
                $$(1+ \epsilon) (|N_a| + |N_b|) +2 < \epsilon |N(a) \cup N(b)|,$$
                 which in turn, results in 
                 $$|N_a| + |N_b| +2 < \epsilon (|N_{a,b}| +2) < \epsilon |N_{a,b}|+1 .$$
                 Note that we have $$N(a) \Delta N(b) = N_a \cup N_b.$$
                 This implies that:
               \begin{align*}
                    |N(a) \Delta N(b)| +1 \leq \epsilon|N_{a,b}| \leq \epsilon |N(a)|.
                \end{align*}

                Note that the above inequality implies that $|N(a) \Delta N(b)| < \epsilon(|N(a)| + 1) -1 = \epsilon |C_v| -1$ and therefore we can conclude $b \in A_v$. Observe that by \cref{alg:charging}, the vertex $b$ joins $A_v'$ with probability $\frac{\min\{|A_v|, \lfloor \delta |C_v| \rfloor\} } { |A_v|}$.Note that $t \in Y_i$, and therefore the charges on different triangles vary whether of $b \in A_v'$ or not.  We also have two different charging schemes based on the size of $A_v$.

                \begin{itemize}
                    
                    \item $|A_v| \leq k|C_v|$:
                        In this case, by the condition in \cref{clm:2-ne}, we have $-\frac{1}{k|C_v|} \geq -\frac{1/k}{1 + \lambda/\delta}$. Thus we have:
                        \begin{flalign*}
                    &\frac{\min\{|A_v|, \lfloor \delta |C_v| \rfloor\} } { |A_v|} \geq \frac{\delta |C_v|-1}{k|C_v|} \geq \frac{\delta}{k}-\frac{\delta/k}{\delta + \lambda}.
                \end{flalign*}
                        When the size of $A_v$ is not too large compared to that of $C_v$, we charge any triangle $t$ by $\delta$ if $b \in A_v'$ and $1+\frac{\epsilon}{1 - \epsilon}$ otherwise. Based on the probability that $b$ is chosen as a member of $A_v'$, the expected number of triangles charged containing $(a,b)$ can be written as follows:
                        \begin{flalign*}
                     \E[y_{(a, b),N_{a,b}} \mid v = a ] = & 
                      \Pr[b \notin A_v' | v=a] \cdot \E[y_{(a, b), N_{a,b}} \mid v = a, b \notin A_v'] 
                    \\+& \Pr[b \in A_v' |v=a] \cdot \E[y_{(a, b),N_{a,b}} \mid v = a, b \in A_v'].
                \end{flalign*}

                    In the first case, if $b \notin A_v'$ we charge $t$ by \cref{line:C4}, Therefore in this case for each choice of $c$, we charge $t$ at most $1+\frac{\epsilon}{1 - \epsilon}$, precisely we have:
                 \begin{flalign*}
                    \E[y_{(a, b),N_{a,b}} \mid v = a, b \notin A_v'] = \left(1+\frac{\epsilon}{1 - \epsilon}\right)|N_{a,b}|.
                \end{flalign*}

                 In the case where $b \in A_v'$ we always charge $t$ by \cref{line:C3}. This implies the following:
                \begin{flalign*}
                    \E[y_{(a, b),N_{a,b}} \mid v = a, b \in A_v'] = \delta |N_{a,b}|.
                \end{flalign*}

                Using the expected charges above the following equality holds: 
                     \begin{flalign*}
                     \E[y_{(a, b),N_{a,b}} \mid v = a ] =
                     &  \left(1-\frac{\min\{|A_v|, \lfloor \delta |C_v| \rfloor \} } { |A_v|}\right) \left(1+\frac{\epsilon}{1 - \epsilon}\right)|N_{a,b}|
                    \\&+\left( \frac{\min\{|A_v|, \lfloor \delta |C_v| \rfloor \} } { |A_v|} \cdot \delta \right) |N_{a,b}|
                    \\&= \left(1+\frac{\epsilon}{1 - \epsilon} - \frac{\min\{|A_v|, \lfloor \delta |C_v| \rfloor \}\left(1+\frac{\epsilon}{1 - \epsilon} - \delta\right) } { |A_v|} \right)|N_{a,b}| 
                    \\& \leq \left(1+\frac{\epsilon}{1 - \epsilon}+ \left( - \frac{\delta}{k} + \frac{\delta/k}{\delta + \lambda}\right)\cdot \left(1+\frac{\epsilon}{1 - \epsilon} - \delta\right)\right) |N_{a,b}|.
                \end{flalign*}

            \item $ |A_v| > k|C_v|$:
                    When the size of $A_v$ is significantly larger than that of $C_v$, we always charge triangle $t$ by $1 - \frac{\epsilon}{1-\epsilon}$ in \cref{line:C6}: 
                \begin{flalign*}
                     \E[y_{(a, b),N_{a,b}} \mid v = a ] = \left(1-\frac{\epsilon}{1 - \epsilon}\right) |N_{a,b}|.
                \end{flalign*}

                \end{itemize}

            \item $v \in N_{a,b}:$
                Directly by \cref{lem:non-edge-bound} we have:
                \begin{flalign*}
                \E[y_{(a, b)} \mid v \in N_{a,b}] \leq 1 .
            \end{flalign*}
            \end{enumerate}

            Finally, we can give an upper bound for the expected charges on $(a,b)$ by the maximum charge in the above cases:
            \begin{flalign*} 
        \nonumber \E[y_{(a, b)}] &\leq  \Pr[v \in \{a,b\}] \cdot \max  \left[\left(1+\frac{\epsilon}{1 - \epsilon}+ \left( - \frac{\delta}{k} + \frac{\delta/k}{\delta + \lambda}\right)\cdot \left(1+\frac{\epsilon}{1 - \epsilon} - \delta\right)\right) , 1 - \frac{\epsilon}{1 - \epsilon}\right] |N_{a,b}|  
        \\
        \nonumber&+ \Pr[v \in N_{a,b} ] + \frac{\frac{5\epsilon}{1-\epsilon}}{k}
        \\
        \nonumber& = \frac{ \max  \left[ 3 +\frac{2\epsilon}{1 - \epsilon}+ 2\left( - \frac{\delta}{k} + \frac{\delta/k}{\delta + \lambda}\right)\cdot \left(1+\frac{\epsilon}{1 - \epsilon} - \delta\right) , 3- \frac{2\epsilon}{1 - \epsilon} \right]  }{|N_a| + |N_b| + |N_{a,b}| + 2 }|N_{a,b}|  + \frac{\frac{5\epsilon}{1-\epsilon}}{k}
        \\
        &\leq  \max  \left[ 3 +\frac{2\epsilon}{1 - \epsilon}+ 2\left( - \frac{\delta}{k} + \frac{\delta/k}{\delta + \lambda}\right)\cdot \left(1+\frac{\epsilon}{1 - \epsilon} - \delta\right) , 3- \frac{2\epsilon}{1 - \epsilon} \right] + \frac{\frac{5\epsilon}{1-\epsilon}}{k}. \qedhere
        \end{flalign*}

\end{proof}
\begin{claim}\label{clm:3-ne}
    In \ref{itm:3-ne}, the expected charge on $(a,b)$ is at most $\left(1- \frac{\epsilon}{1+\epsilon}\right)\left(3+\frac{2\epsilon}{1 - \epsilon}\right) + \frac{\frac{5\epsilon}{1-\epsilon}}{k}.$ 
\end{claim}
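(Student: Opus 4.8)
The plan is to prove \cref{clm:3-ne} in exact analogy with \cref{clm:3-e}, the corresponding ``$C3$'' case for edges: feed the case hypothesis \ref{itm:3-ne} into the generic bound that \cref{clm:non-normal-upper} already establishes for a single non-edge. Recall that for every $(a,b)\notin E$,
\[
\E[y_{(a,b)}]\ \le\ \frac{\bigl(3+\tfrac{2\epsilon}{1-\epsilon}\bigr)\,|N_{a,b}|}{|N_a|+|N_b|+|N_{a,b}|+2}\ +\ \frac{5\epsilon/(1-\epsilon)}{k}.
\]
So it suffices to show that under \ref{itm:3-ne} the fraction $\dfrac{|N_{a,b}|}{|N_a|+|N_b|+|N_{a,b}|+2}$ is at most $1-\dfrac{\epsilon}{1+\epsilon}$; substituting this into the display immediately gives the claimed value $\bigl(1-\tfrac{\epsilon}{1+\epsilon}\bigr)\bigl(3+\tfrac{2\epsilon}{1-\epsilon}\bigr)+\tfrac{5\epsilon/(1-\epsilon)}{k}$.

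For that fraction bound I would argue purely algebraically. For a non-edge, the sets $N_a,N_b,N_{a,b}$ partition $N(a)\cup N(b)$, so $|N_a|+|N_b|=|N(a)\Delta N(b)|$, and the denominator $|N_a|+|N_b|+|N_{a,b}|+2$ is exactly the number $|N(a)\cup N(b)\cup\{a,b\}|$ of candidate pivots appearing in \cref{clm:expand-n}. Cross-multiplying, $\dfrac{|N_{a,b}|}{|N_a|+|N_b|+|N_{a,b}|+2}\le\dfrac{1}{1+\epsilon}$ is equivalent to $(1+\epsilon)|N_{a,b}|\le |N_a|+|N_b|+|N_{a,b}|+2$, i.e.\ to $\epsilon|N_{a,b}|\le |N(a)\Delta N(b)|+2$. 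Now the hypothesis of \ref{itm:3-ne}, read with this candidate-pivot count on the right-hand side, says $|N(a)\Delta N(b)|+2\ge\tfrac{\epsilon}{1+\epsilon}\bigl(|N_a|+|N_b|+|N_{a,b}|+2\bigr)$; multiplying by $1+\epsilon$, using $|N(a)\Delta N(b)|=|N_a|+|N_b|$, and cancelling the term $\epsilon(|N_a|+|N_b|+2)$ from both sides leaves exactly $\epsilon|N_{a,b}|\le |N(a)\Delta N(b)|+2$, as needed.

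The only point that needs care is the bookkeeping of the additive ``$+2$'': for an \emph{edge} $(a,b)$ the endpoints $a,b$ already lie in $N(a)\cup N(b)$ (so the ``$+2$'' is built in, as exploited in \cref{clm:3-e}), whereas for a \emph{non-edge} they do not, so the quantity written ``$|N(a)\cup N(b)|$'' in \ref{itm:3-ne} must be interpreted as the candidate-pivot set $N(a)\cup N(b)\cup\{a,b\}$, matching the denominator of \cref{clm:non-normal-upper}, for the cancellation above to be exact; this is the one subtlety, and there is no deeper obstacle --- the argument is a single substitution, and (just as $\max\{|N_a|,|N_b|\}>\tfrac{\theta}{\delta}$ is unused in \cref{clm:3-e}) the hypothesis $\min\{|N(a)|,|N(b)|\}>\tfrac{\lambda}{\delta}$ is not needed here. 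Together with \cref{clm:1-ne}, \cref{clm:2-ne} and the edge-side claims, \cref{clm:3-ne} lets one take the maximum over the cases \ref{itm:1-ne}--\ref{itm:3-ne} and, after choosing $\epsilon,\delta,k,\theta,\lambda$ so that every case bound is at most $\apxfactor$, conclude $\E_{\mc{A}}[y_{(a,b)}]\le\apxfactor$ as in \cref{clm:pair-bound}, which yields the width bound and hence \cref{thm:approx}.
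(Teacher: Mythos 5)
Your proof is correct and follows the same route as the paper's: both simply substitute the case hypothesis \ref{itm:3-ne} into the bound of \cref{clm:non-normal-upper} to control the fraction $|N_{a,b}|/(|N_a|+|N_b|+|N_{a,b}|+2)$ by $1-\frac{\epsilon}{1+\epsilon}$. Your remark about reading $|N(a)\cup N(b)|$ in \ref{itm:3-ne} as the candidate-pivot set including $\{a,b\}$ so that the cancellation is exact is a fair point of care that the paper's one-line derivation glosses over, but it does not change the argument.
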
   
\begin{proof}

    By \cref{clm:non-normal-upper} and the condition in \ref{itm:3-ne} we have:
        \begin{flalign*}
            \E[y_{(a, b)}] = \frac{1}{|N_a| + |N_b| + |N_{a,b}| +2} \left[
            \left(3+\frac{2\epsilon}{1 - \epsilon}\right) |N_{a,b}| \right]  \leq \left(1- \frac{\epsilon}{1+\epsilon}\right)\left(3+\frac{2\epsilon}{1 - \epsilon}\right) + \frac{\frac{5\epsilon}{1-\epsilon}}{k}.\qedhere
        \end{flalign*}

\end{proof}

Finally, we are ready to wrap up the proof of \cref{clm:pair-bound}:
\begin{proof}[Proof of \cref{clm:pair-bound} for any pair $(a, b)$]
Now, looking through the width analysis for edges and non-edges, to prove \cref{clm:pair-bound}, for any case described in \cref{sec:width-e} and \cref{sec:width-ne}, we introduce a set of values for parameters $\epsilon, \delta, \lambda, $ and $\theta$ that imply a \apxfactor-approximation.
 We set $\epsilon =  0.007$,
$\delta = 0.179$,
$\lambda = 7.613$,
$ \theta = 7.055$, and
$ k = 12.295$.

For any edge in $E$, we investigate the three cases $(C1)-(C3)$. For each case, we prove that $E[Y_{a,b}] < \apxfactor$.  

\begin{itemize}
    \item In \ref{itm:1-e}, by the upper bound in \cref{clm:1-e} and plugging in the parameters with introduced values we get:
    $$E[Y_{a,b}]  \leq \left(1- \frac{\delta}{\theta + \delta}\right)\left(3+ \frac{5\epsilon}{1 - \epsilon}\right)  < 2.961.$$

    \item In \ref{itm:2-e}, by the upper bound in \cref{clm:2-e} and plugging in the parameters with introduced values we get: 
    $$E[Y_{a,b}]  \leq 3 +\frac{5\epsilon}{1 - \epsilon} - \frac{\theta\delta +\delta^2-\delta}{2(\theta+\delta )}\cdot \frac{2-7\delta}{2-3\delta} < 2.996.$$

    \item In \ref{itm:3-e}, by the upper bound in \cref{clm:3-e} and plugging in the parameters with introduced values we get: 
    $$E[Y_{a,b}]  \leq 3 +\left(1- \frac{\delta}{2-\delta}\right)\left(3+\frac{5\epsilon}{1 - \epsilon}\right) < 2.737.$$
    
\end{itemize}

 For any non-edge in $E$, we investigate the three cases $(D1)-(D3)$. For each case, we prove that $E[Y_{a,b}] < \apxfactor$.  

 \begin{itemize}

    \item In \ref{itm:1-ne}, by the upper bound in \cref{clm:1-ne} and plugging in the parameters with introduced values we get: 
    $$E[Y_{a,b}]  \leq \frac{\lambda+\delta}{2\delta+\lambda}\left(3+\frac{2\epsilon}{1 - \epsilon}\right) + \frac{\frac{5\epsilon}{1-\epsilon}}{k} <2.95.$$
    
     \item In \ref{itm:2-ne}, by the upper bound in \cref{clm:2-ne} and plugging in the parameters with introduced values we get: 
    $$E[Y_{a,b}]  \leq \max  \left[ 3 +\frac{2\epsilon}{1 - \epsilon}+ 2\left( - \frac{\delta}{k} + \frac{\delta/k}{\delta + \lambda}\right)\cdot \left(1+\frac{\epsilon}{1 - \epsilon} - \delta\right) , 3- \frac{2\epsilon}{1 - \epsilon} \right] + \frac{\frac{5\epsilon}{1-\epsilon}}{k} <  2.996.$$

     \item In \ref{itm:3-ne}, by the upper bound in \cref{clm:3-ne} and plugging in the parameters with introduced values we get: 
    $$E[Y_{a,b}]  \leq \left(1- \frac{\epsilon}{1+\epsilon}\right) \left(3+\frac{2\epsilon}{1 - \epsilon}\right) + \frac{\frac{5\epsilon}{1-\epsilon}}{k} < 2.997.$$
 \end{itemize}

This concludes the proof of \cref{clm:pair-bound}.
\end{proof}

\section{Implementation in the Fully Dynamic Model}

In this section, we prove \cref{thm:dynamic} that a $(3-\Omega(1))$-approximation of correlation clustering can be maintained by spending polylogarithmic time per update.

\begin{proof}[Proof of \cref{thm:dynamic}]
    Our starting point is the algorithm of \citet*{BehnezhadDHSS-FOCS19} which maintains a randomized greedy maximal independent set, or equivalently, the output of the \pivot{} algorithm in polylogarithmic time. 

    For any vertex $v$, we draw a real $\pi(v)$ from $[0, 1]$ uniformly and independently. We say $\pi(v)$ is the {\em rank} of $v$. Recall that the \pivot{} algorithm iteratively picks a pivot uniformly from the unclustered vertices and clusters it with its unclustered neighbors. Instead of doing this, we can process the vertices in the increasing order of their ranks, discarding vertices encountered that are already clustered. The resulting clustering is equivalent. We can do the same for \modifiedpivot{} as well. Namely, each iteration of the while loop in \cref{alg:modified-pivot} picks the vertex in $V$ with the smallest rank. Again, the resulting clustering is equivalent.

    \paragraph{Background on the algorithm of \cite{BehnezhadDHSS-FOCS19}:} The algorithm of \cite{BehnezhadDHSS-FOCS19}, for each vertex $v$, maintains the following data structures dynamically:
    \begin{itemize}
        \item $elim(v)$: This represents the  pivot by which vertex $v$ is clustered. If $v$ itself is a pivot, then $elim(v) = v$.
        \item $N^-{(v)} := \{ u \in N(v) \mid \pi(elim(u)) \leq \pi(elim(v))\}$: Intuitively, these are the neighbors of $v$ clustered no later than $v$. The algorithm stores $N^-(v)$ in a balanced binary search tree where each vertex $u$ is indexed by $\pi(elim(u))$.
        \item $N^+(v) := \{ u \in N(v) \mid \pi(elim(u)) \geq \pi(elim(v))\}$:  These are neighbors of $v$ clustered no sooner than $v$. The algorithm stores $N^+(v)$ in a BST indexed by the static vertex IDs.
    \end{itemize}

    \begin{lemma}[Lemma~4.1 of \cite{BehnezhadDHSS-FOCS19}]\label{lem:parametrized-ut}
        Let $\mc{A}$ be the set of vertices whose pivot changes after inserting or deleting an edge $(a, b)$.
        There is an algorithm to update all the data structures above in time
        $$
        \widetilde{O}\left(|\mc{A}| \cdot \min \left\{\Delta, \frac{1}{\min\{\pi(a), \pi(b)\}} \right\}\right).
        $$
    \end{lemma}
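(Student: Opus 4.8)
\textbf{Proof proposal for \Cref{lem:parametrized-ut}.}

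The plan is to argue that the only data-structure changes triggered by inserting or deleting an edge $(a,b)$ are (i) the $O(1)$ direct bookkeeping for $(a,b)$ itself, and (ii) the recomputation of $elim(\cdot)$, $N^-(\cdot)$, $N^+(\cdot)$ for each vertex in $\mc{A}$, together with the induced updates that each such vertex forces on its neighbors' BSTs. The heart of the matter is to control the cost of the second kind: a vertex $u \in \mc{A}$ whose pivot changes must be removed from and reinserted into the $N^-/N^+$ trees of every neighbor $w$ of $u$, at cost $\widetilde O(1)$ per such neighbor. Naively this is $\widetilde O(|\mc{A}| \cdot \Delta)$. To get the $1/\min\{\pi(a),\pi(b)\}$ refinement, I would use the standard observation that if a vertex $u$ changes its pivot as a consequence of flipping $(a,b)$, then in the greedy-by-rank process $u$ must have been ``reached'' during a prefix of the rank order that includes one of $a,b$; in particular the new or old pivot of $u$ has rank at most $\max\{\pi(a),\pi(b)\}$, and more usefully, the relevant exploration only ever touches vertices through edges to vertices of rank below that threshold. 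One then bounds the number of BST operations by the number of $(u,w)$ pairs with $u \in \mc{A}$ and $\pi(elim(w))$ or $\pi(elim(u))$ small; a counting/expectation argument shows each affected vertex contributes only $\widetilde O(1/\min\{\pi(a),\pi(b)\})$ neighbor-updates on average, because a vertex of small $\pi(elim)$ has, in expectation over the ranks, only $\widetilde O(1/\pi)$ neighbors that are ``active'' in the recomputation.

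Concretely, the steps I would carry out are: (1) recall the local-correction procedure of \cite{BehnezhadDHSS-FOCS19} — after flipping $(a,b)$, re-run the greedy process restricted to the set of vertices whose clustering could change, which is exactly $\mc{A}$ plus a constant amount of boundary information, and observe that determining $\mc{A}$ and the new $elim$ values can be done by a BFS/priority-queue exploration whose work is dominated by edge-scans out of $\mc{A}$; (2) show each vertex $u \in \mc{A}$ requires recomputing $N^-(u), N^+(u)$ from $N(u)$, which costs $\widetilde O(\deg(u))$ in the worst case but can be charged instead to $\widetilde O(|N^-(u)|)$ plus the number of neighbors whose membership actually flips — and here invoke that only neighbors $w$ with $\pi(elim(w)) \le \pi(elim(u))$ at some point during the change matter, so that the relevant degree is effectively truncated at rank $\approx \min\{\pi(a),\pi(b)\}$; (3) for each $u \in \mc{A}$ and each neighbor $w$ whose classification into $N^-(u)$ vs.\ $N^+(u)$ changes, perform the $O(\log n)$ delete-and-reinsert on $w$'s trees; (4) sum these costs, using $\min\{\Delta, 1/\min\{\pi(a),\pi(b)\}\}$ as the per-vertex bound — the $\Delta$ bound is trivial, and the $1/\min\{\pi(a),\pi(b)\}$ bound follows because the exploration never descends below rank $\min\{\pi(a),\pi(b)\}$ in the priority order, combined with the standard fact that the expected size of the ``still-unclustered'' neighborhood of a vertex at the moment it is processed at rank $\pi$ is $\widetilde O(1/\pi)$.

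I expect the main obstacle to be step (4): cleanly establishing the $1/\min\{\pi(a),\pi(b)\}$ truncation. The subtlety is that $|\mc{A}|$ already absorbs the combinatorial count of affected vertices, so the per-vertex factor must genuinely be an \emph{additional} bound on how much each affected vertex costs, and one has to be careful that the recomputation exploration does not spuriously visit high-rank regions of the graph. The right way around this is to note that $elim(a)$ and $elim(b)$ each have rank at most $\min\{\pi(a),\pi(b)\}$ (a vertex is clustered by a pivot of rank no larger than its own), so every vertex that changes status does so because of a chain of dependencies rooted at a pivot of rank $\le \min\{\pi(a),\pi(b)\}$; propagating this, every $N^-/N^+$ tree we touch is indexed at a value $\le \min\{\pi(a),\pi(b)\}$, and the expected number of such entries per vertex is $\widetilde O(1/\min\{\pi(a),\pi(b)\})$ by a union bound over ranks. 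Once this truncation is in place, the remaining arithmetic — multiplying $|\mc{A}|$, the per-vertex neighbor bound, and the $O(\log n)$ BST cost — is routine, and taking the minimum with the trivial $\widetilde O(|\mc{A}| \cdot \Delta)$ bound yields the stated expression. Since the lemma is quoted verbatim from \cite{BehnezhadDHSS-FOCS19}, I would in the paper simply cite it; the above is the reconstruction I would give if a self-contained argument were required.
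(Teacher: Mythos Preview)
The paper does not prove this lemma at all --- it is quoted as a black box from \cite{BehnezhadDHSS-FOCS19}, exactly as you anticipate in your last sentence. So there is no ``paper's own proof'' to compare against; citing it is the intended approach.

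Your reconstruction is broadly in the right spirit, but one step is stated incorrectly and would not go through as written. You claim that ``$elim(a)$ and $elim(b)$ each have rank at most $\min\{\pi(a),\pi(b)\}$,'' but this is false: $elim(b)$ is only guaranteed to have rank at most $\pi(b)$, which may exceed $\min\{\pi(a),\pi(b)\}$. The correct anchoring observation is the opposite inequality: the rank-ordered greedy process is \emph{identical} in the two graphs (before and after the edge flip) up until rank $\min\{\pi(a),\pi(b)\}$, since neither endpoint has been processed yet and the edge $(a,b)$ is irrelevant. Consequently every $u\in\mc{A}$ has both its old and its new eliminator rank \emph{at least} $\min\{\pi(a),\pi(b)\}$. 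This is what lets you invoke the $|N^+(x)|=O(\log n/\pi(x))$ bound (Proposition~3.1 of \cite{BehnezhadDHSS-FOCS19}, which the present paper also uses) with $\pi(x)\ge\min\{\pi(a),\pi(b)\}$ for each affected vertex, yielding the per-vertex factor $\widetilde O(1/\min\{\pi(a),\pi(b)\})$. With that fix, the remainder of your outline --- charge each $u\in\mc{A}$ for rebuilding its own trees and for the $O(\log n)$ delete/reinsert in each relevant neighbor's tree, then take the minimum with the trivial $\Delta$ bound --- matches the argument in \cite{BehnezhadDHSS-FOCS19}.
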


    Combined with the following lemma also proved in \cite{BehnezhadDHSS-FOCS19}, this implies that all the data structures can be updated in polylogarithmic time.

    \begin{lemma}[Lemma~5.1 of \cite{BehnezhadDHSS-FOCS19}]
        Let $\mc{A}$ be as in \cref{lem:parametrized-ut}. It holds for every $\lambda \in (0, 1]$ that
        $$
            \E\left[|\mc{A}| \mid \frac{1}{\min\{\pi(a), \pi(b)\}} = \lambda\right] = O(\log n).
        $$
    \end{lemma}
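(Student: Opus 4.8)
I would first condition on everything that pins down the rank-ordered pivot process below rank $p := \min\{\pi(a),\pi(b)\}=1/\lambda$: the graph, all ranks strictly smaller than $p$, and which of $a,b$ attains the minimum (say $\pi(a)=p<\pi(b)=:q$, ties having probability $0$; the other orientation and insertion-vs-deletion are symmetric, and I will take $G$ to contain the edge and $G'$ not to). Since the only structural difference between $G$ and $G'$ is the pair $(a,b)$ and neither endpoint is processed before rank $p$, both runs make \emph{identical} decisions — the same pivots of rank $<p$ and the same $elim(\cdot)$ on rank-$<p$ vertices — until the moment $a$ is processed. From this one reads off a dichotomy: if $a$ is not a pivot, the edge never affects any decision and $\mc{A}=\emptyset$; if $a$ is a pivot but $b$ was already removed by a rank-$<p$ pivot, again $\mc{A}=\emptyset$; if $a$ is a pivot, $b$ survives to rank $q$, but $b$ has a pivot neighbour of rank in $(p,q)$, then only $elim(b)$ flips and $b$ never becomes a pivot, so $\mc{A}=\{b\}$. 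The only nontrivial case is that $a$ is a pivot, $b$ survives to rank $q$, and $b$ has no pivot neighbour of rank $<q$; then $elim_G(b)=a$ while in $G'$ the vertex $b$ itself becomes a pivot when it is processed.

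\textbf{Step 2 (The influence tree).} In the nontrivial case I would prove the structural claim that every vertex of $\mc{A}$ is reachable from $b$ along a path $b=w_0,w_1,\dots,w_\ell$ using edges common to $G$ and $G'$, with strictly increasing ranks $\pi(w_0)<\pi(w_1)<\dots<\pi(w_\ell)$, and with every internal $w_i$ ($i<\ell$) a \emph{pivot-status flipper} (a pivot in exactly one of $G,G'$). The proof is a rank-ordered comparison of the two runs past rank $q$: $b$ becoming a pivot in $G'$ ``grabs'' its still-present neighbours (necessarily of higher rank); each grabbed vertex that \emph{was} a pivot in $G$ now ``releases'' its $G$-dependents (again higher rank); each released vertex that becomes a pivot in $G'$ grabs its higher-rank neighbours; and any vertex that is a non-pivot in both runs is a leaf, since it eliminates nobody and so triggers no further change (elim-changes do not propagate, only pivot-status changes do). Establishing that no affected vertex is ``orphaned'' is an induction on rank: the lowest-rank affected vertex is $b$ itself (ranks $\le p$ are unaffected by the coupling and ranks in $(p,q)$ are unaffected because they cannot see $b$), and any other affected vertex must have a strictly lower-rank flipper neighbour. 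This also shows $\mc{A}$ sits inside a single tree rooted at $b$ whose depth is at most the depth of the elimination forest of the greedy MIS, which is $O(\log n)$ with overwhelming probability. Each ``flip'' moreover carries a local rank-window condition at the flipper — ``no pivot neighbour with rank in a prescribed interval'' — which is the lever used in Step~3.

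\textbf{Step 3 (First-moment bound).} To bound $\E[|\mc{A}|\mid p]$ I would write it as $\sum_v \Pr[v\in\mc{A}\mid p]$ and union-bound over all candidate influence paths $P:b\rightsquigarrow v$. For a fixed vertex-sequence of length $\ell$, the probability that ranks increase along it is at most $(1-q)^{\ell}/\ell!$ conditioned on $\pi(b)=q$ (with $q$ itself uniform on $[p,1]$); the decisive extra savings come from the flipper conditions at the internal vertices — at an internal vertex with $m$ neighbours of smaller rank, the ``no pivot neighbour in the prescribed window'' event contributes essentially a $\tfrac{1}{m+1}$ factor, which is exactly the mechanism behind the geometric decay of elimination levels / the $O(\log n)$ depth bound for random-greedy MIS. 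Summing these contributions over all vertices and path lengths and then integrating over $q\in[p,1]$ — the step that produces the logarithm, via an integral of $1/q$ type or, equivalently, the geometric tail of the elimination depth — yields $\E[|\mc{A}|\mid p]=O(\log n)$, i.e. the claimed bound uniformly in $\lambda$; the trivial cases of Step~1 contribute only $0$ or $1$.

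\textbf{Main obstacle.} Steps~1–2 are essentially bookkeeping; the real difficulty is making the first-moment computation of Step~3 quantitatively tight. In a dense graph the influence tree can branch heavily and the number of candidate length-$\ell$ paths out of $b$ can be as large as $\Delta^{\ell}$, so the bound survives only if the flipper/rank-window factors $\tfrac{1}{m+1}$ are harvested at \emph{every} internal vertex and propagated through the whole tree; carrying this through requires the structural fact that in random-greedy MIS a vertex is a pivot with probability inverse-proportional to one plus its number of lower-rank neighbours, together with the resulting geometric decay of elimination depth. The second delicate point is the conditioning: all path probabilities must be evaluated given the prefix fixed in Step~1, and the flips at different vertices are correlated, so the union bound has to be organized so that these dependencies only help (e.g. by revealing ranks in increasing order along the tree).
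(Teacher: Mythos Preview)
The paper does not prove this lemma at all: it is quoted verbatim as ``Lemma~5.1 of \cite{BehnezhadDHSS-FOCS19}'' and used as a black box in the dynamic implementation section, so there is nothing in this paper to compare your proposal against.

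That said, your outline is broadly the right shape for the argument in the cited source: the prefix coupling in Step~1 and the rank-monotone influence structure rooted at $b$ in Step~2 are exactly how \cite{BehnezhadDHSS-FOCS19} sets things up, and the $O(\log n)$ ultimately comes, as you say, from the elimination-forest depth of random greedy MIS. Where your sketch is thinnest is Step~3: the union bound over increasing-rank paths with a $\tfrac{1}{m+1}$ factor at each internal flipper is morally correct but, as you yourself flag, the flips at different vertices are correlated and the $\Delta^{\ell}$ path count has to be killed \emph{exactly} by those factors. The actual proof in \cite{BehnezhadDHSS-FOCS19} does not literally sum over paths; it instead bounds $\E[|\mc{A}|]$ by (a constant times) the expected number of vertices at or below $b$'s level in the elimination forest and then invokes the known $O(\log n)$ depth/level bound directly, which sidesteps the delicate path-by-path accounting. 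If you want a self-contained write-up, that is the cleaner route to take.
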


    These two lemmas combined, imply that the update-time is polylogarithmic in expectation.

    \paragraph{Needed modifications to maintain the output of \modifiedpivot{}.} Let us now discuss the needed modifications to maintain the output of \modifiedpivot{} also in polylogarithmic time. First, we start with the following useful claim.

    \begin{claim}\label{cl:neighborhood-counting-logtime}
        Take vertices $u$ and $v$ such that $v$ is a pivot and $\pi(elim(u)) \geq \pi(v)$. Having access to the data structures above stored by \cite{BehnezhadDHSS-FOCS19}, it is possible to determine the values of $|N(u) \cap C_v|$ and $|N(u) \Delta C_v|$ exactly in $O(\log n)$ time.
    \end{claim}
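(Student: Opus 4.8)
The plan is to compute $|N(u)\cap C_v|$ first, and then derive $|N(u)\Delta C_v|$ from it by a counting identity. Recall that $C_v = \{v\}\cup N(v)$ where $N(v)$ here means the neighbors of $v$ \emph{still in the graph} when $v$ is chosen as a pivot; since $v$ is a pivot, a vertex $w\in N(v)$ is still in the graph at that time precisely when $w$ was not clustered earlier, i.e. when $\pi(elim(w))\ge \pi(v)$. Hence $C_v = \{v\}\cup\{w\in N(v): \pi(elim(w))\ge\pi(v)\} = \{v\}\cup N^+(v)$, using the data structure $N^+(v)$ maintained by \cite{BehnezhadDHSS-FOCS19}. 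So $|C_v| = |N^+(v)|+1$, which is available in $O(\log n)$ time from the size annotation of the BST storing $N^+(v)$.

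Next I would compute $|N(u)\cap C_v|$. Write $|N(u)\cap C_v| = |N(u)\cap N^+(v)| + [\,u\ne v \text{ and } (u,v)\in E\,]$, where the bracket term accounts for whether $v\in N(u)$; this Boolean can be tested in $O(\log n)$ time by a lookup for $v$ in the BST storing $N(u)$ (or $N^+(u)/N^-(u)$). The main work is therefore $|N(u)\cap N^+(v)|$, a set-intersection-size query between two neighborhoods. The hypothesis $\pi(elim(u))\ge\pi(v)$ is what makes this tractable: it implies $v\in N^+(u)$-side reasoning is reversed — more usefully, for any common neighbor $w$ of $u$ and $v$, membership of $w$ in $N^+(v)$ is governed by $\pi(elim(w))\ge\pi(v)$. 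I would compute this by iterating over the ``short'' side, but a naive iteration is not $O(\log n)$; instead the right approach is a \emph{range count}: $N^-(v)$ is stored in a BST keyed by $\pi(elim(\cdot))$, and $N^+(v) = N(v)\setminus N^-(v)$, so $|N(u)\cap N^+(v)| = |N(u)\cap N(v)| - |N(u)\cap N^-(v)|$. For the latter two I would exploit that $N^-(v)$ contains only $O(\log n)$-expected many vertices in the relevant regime — but to get a \emph{worst-case deterministic} $O(\log n)$ bound the cleaner route is to observe that we only ever need $|N(u)\cap C_v|$ for $u$ that are candidates for $A_v$ or $D_v$; re-examining the algorithm, these queries are invoked in a context where $u$ itself has bounded-size relevant neighborhood data, or one can augment the BSTs of $N^+(v)$ with subtree-indexed hashing so intersection size against a single other neighborhood is computed by a merge bounded by $O(\log n \cdot (\text{output-sensitive term}))$. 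The honest statement is that the key identity is $|N(u)\Delta C_v| = |N(u)| + |C_v| - 2|N(u)\cap C_v|$, so once $|N(u)\cap C_v|$ is known (and $|N(u)|$, $|C_v|$ are read off BST sizes in $O(\log n)$), the symmetric difference is immediate.

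The step I expect to be the genuine obstacle is the intersection-size query $|N(u)\cap N^+(v)|$ in truly $O(\log n)$ time, since in general computing $|N(u)\cap N(v)|$ for arbitrary $u,v$ is not a logarithmic-time operation with only BSTs. I anticipate the resolution relies on the specific structure: because $v$ is a pivot and $u$ is being tested for membership in $A_v$ (where $|N(u)\Delta C_v|\le \epsilon|C_v|-1$) or $D_v$, the symmetric difference is \emph{small} relative to $|C_v|$ whenever the answer is "yes", so one runs an \emph{early-terminating} merge that walks the two sorted neighbor lists and aborts once more than $\epsilon|C_v|$ discrepancies are seen — this bounds the work by $O(\epsilon|C_v|\log n)$, and a separate amortization argument (of the kind already used in \cite{BehnezhadDHSS-FOCS19} to bound $|\mathcal{A}|$ and the per-update cost) shows the total such work is polylogarithmic in expectation. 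So the $O(\log n)$ in the claim should be read together with the global expected-time accounting; I would state the merge primitive, note its cost is $O((1+\text{discrepancy count})\log n)$, and defer the charging to the overall update-time analysis that follows the claim.
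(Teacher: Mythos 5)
There is a genuine gap: you never find the mechanism that makes $|N(u)\cap C_v|$ a logarithmic-time query, and you yourself flag this as ``the genuine obstacle.'' The resolution is not an intersection of two neighborhood BSTs at all. The key observation is that $w\in C_v$ if and only if $elim(w)=v$, so $N(u)\cap C_v=\{w\in N(u): \pi(elim(w))=\pi(v)\}$; moreover the hypothesis $\pi(elim(u))\ge\pi(v)$ guarantees every such $w$ lies in $N^-(u)$, and $N^-(u)$ is stored as a BST \emph{keyed by} $\pi(elim(\cdot))$. Hence $|N(u)\cap C_v|$ is a single equality/range count for the key $\pi(v)$ in that one BST, which is worst-case $O(\log n)$. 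Your proposal instead treats the task as computing $|N(u)\cap N^+(v)|$ for two arbitrary neighborhoods, correctly observes that this is not an $O(\log n)$ operation with plain BSTs, and then retreats to early-terminating merges with cost $O(\epsilon|C_v|\log n)$ plus a deferred amortization. That does not prove the claim as stated (a worst-case per-query bound that the later update-time analysis invokes repeatedly, e.g.\ once per vertex of $\hat{A}$), and the auxiliary devices you float (subtree-indexed hashing, output-sensitive merging) are not supported by the data structures of \cite{BehnezhadDHSS-FOCS19}.

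Two smaller points. First, your reduction $C_v=\{v\}\cup N^+(v)$ and the identity $|N(u)\Delta C_v|=|N(u)|+|C_v|-2|N(u)\cap C_v|$ are fine, but note that $|N(u)|$ here must mean the neighbors of $u$ still present when $v$ is pivoted, i.e.\ $d_u=|\{w\in N(u):\pi(elim(w))\ge\pi(v)\}|$; this is not a stored BST size but is itself obtained by a range count on $N^-(u)$ (for keys $\ge\pi(v)$) added to $|N^+(u)|$. Second, the claim is invoked for arbitrary candidates $u$ (e.g.\ all of $\hat{A}$), so you cannot lean on ``$u$ is a yes-instance for $A_v$ or $D_v$, hence the discrepancy is small'': for a no-instance your early-terminating merge still does $\Omega(\epsilon|C_v|)$ work before aborting, which is exactly the regime the single-key count avoids.
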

    \begin{proof}
        To see this, recall first that for each vertex $w \in C_v$, we have $elim(w) = v$. Therefore, for any edge $(u, w) \in E$, because of the assumption $\pi(elim(u)) \geq \pi(v)$, it holds that $w \in N^-(u)$. Recalling that $N^-(u)$ is indexed by the eliminator ranks, and noting that in a BST, we can count how many elements are indexed by the same value in $O(\log n)$ time, we get that we can immediately compute the value of $|N(u) \cap C_v|$ in $O(\log n)$ time. Also note that $|N(u) \Delta C_v| = d_u - |N(u) \cap C_v|$, where $d_u$ is the total number of neighbors of $u$ whose eliminator rank is at least $\pi(v)$. Such neighbors of $u$ can be both in $N^-(u)$ and $N^+(u)$. We can count the ones in $N^-(u)$ by simply using the properly indexed BST in $O(\log n)$ time, and can simply sum it up to $|N^+(u)|$ since all neighbors of $u$ in $N^+(u)$ contribute to $d_u$. This concludes the proof.
    \end{proof}

    In addition to the data structures maintained by the algorithm of \cite{BehnezhadDHSS-FOCS19}, for each vertex $u$ and each $S \in \{C, D, D', A, A'\}$, we store a pointer $I_S(u)$ which takes the value of either a vertex $v$ or $\perp$. If $I_S(u) = v$, this implies that $u \in S_v$. If $I_S(u) = \perp$, then $u \not\in S_v$ for any $v$. For instance, if $I_{D}(u) = v$, we get that $u \in D'_v$. Note that by having these pointers, we can also immediately maintain the sets $C_v, D_v, D'_v, A_v, A'_v$ for each pivot $v$. To do so, whenever $I_S(u)$ changes from $v$ to $v'$, we delete $u$ from $S_v$ and insert it to $S_{v'}$. This can be done in $O(\log n)$ time by storing these sets as BSTs.

    Below, we discuss how these data structures can be maintained in the same time as \cref{lem:parametrized-ut}.

    \begin{itemize}
        \item $I_C(u)$: Note that $I_C(u)$ is equivalent to $elim(u)$, which is already maintained by \cite{BehnezhadDHSS-FOCS19}.
        \item $I_D(u)$: Suppose that $I_D(u) = v$, i.e., $u \in D_v$. An update may change the value of $I_D(u)$ under one of these events: $(i)$ the pivot of $u$ changes, $(ii)$ some vertices leave or are added to $C_v$, changing the criteria $|N(u) \cap C_v| \leq \delta|C_v| - 1$ for $u$, or $(iii)$ an edge is inserted or deleted from $u$ to some other vertex in $C_v$. We discuss how to efficiently update $I_D(u)$ in each of these scenarios. 
        \begin{itemize}
            \item[$(i)$] Suppose that a vertex $v$ is now marked as a pivot after some update. We argue that we can identify $D_v$ in $O(|C_v| \log n)$ time. To do so, we go over all vertices of $C_v$ one by one, and apply the algorithm of \cref{cl:neighborhood-counting-logtime} on each to check whether they belong to $D_v$.  Since, from our earlier discussion, we already explicitly maintain $C_v$ which requires $\Omega(|C_v|)$ time when $v$ is marked as a pivot, this only increases the update-time by a $O(\log n)$ factor.
            \item[$(ii)$] Now suppose that a vertex $w$ is added to $C_v$. In this case, we go over all vertices of $N^+(w)$, and for each one $u$, recompute the value of $|N(u) \cap C_v|$ in $O(\log n)$ time as discussed to decide whether $I_D(u) = v$. Note that $w$ must belong to set $\mc{A}$ (defined in \cref{lem:parametrized-ut}), and its neighborhood $N^+(w)$ has size at most $O(\log n/\pi(w))$ (see Proposition~3.1 of \cite{BehnezhadDHSS-FOCS19}). Since $\pi(w) \geq \min\{\pi(a), \pi(b)\}$ where $(a, b)$ is the edge update causing this change, the total running time of this step is upper bounded by $$
            \widetilde{O}\left(|\mc{A}| \cdot \min \left\{\Delta, \frac{1}{\min\{\pi(a), \pi(b)\}} \right\}\right),
            $$
            which is also spent by the algorithm of \cite{BehnezhadDHSS-FOCS19} (\cref{lem:parametrized-ut}). The process for when a vertex $w$ is removed from $C_v$ is similar.
            \item[$(iii)$] In this case, we simply re-evaluate $|N(u) \cap C_v|$, which can be done in $O(\log n)$ using \cref{cl:neighborhood-counting-logtime}.
        \end{itemize}
        \item $I_A(u)$: To maintain $I_A(u)$, we maintain another pointer $I_A(u, v)$ for every pair of vertices $u$ and $v$ which is 1 iff $v$ is a pivot, $\pi(elim(u)) > \pi(v)$, and $|N(u) \Delta C_v| \leq \epsilon|C_v| - 1$ (where with a slight abuse of notation, $N(u)$ is the neighbors of $u$ remained in the graph at the time that $v$ is chosen as a pivot). This way, $I_A(u)$ is exactly the vertex $v$ minimizing $\pi(v)$ such that $I_A(u, v) = 1$. So let us see how we maintain $I_A(u, v)$ efficiently.

        An update may change the value of $I_A(u, v)$ under one of these events: $(i)$ whether $v$ is a pivot changes, $(ii)$ some vertices leave or are added to $C_v$, changing the criteria $|N(u) \Delta C_v| \leq \epsilon|C_v| - 1$ for $u$, or $(iii)$ an edge is inserted or deleted from $u$ to some other vertex in $C_v$. We discuss how to efficiently update $I_A(u, v)$ in each of these scenarios. 
        \begin{itemize}
            \item[$(i)$] Suppose that $v$ is marked as a pivot after an edge update. We will show how to find all vertices $w$ that satisfy $|N(u) \Delta C_v| \leq \epsilon|C_v| - 1$ in total time $O((\log^3 n)/\pi(v))$. Since we can afford to spend this much time for every vertex in $\mc{A}$ due to \cref{lem:parametrized-ut}, this will keep the update-time polylogarithmic.
            
            To do so, we subsample $\Theta(\log n)$ vertices in $C_v$ without replacement and call it $S_v$. We then take $\hat{A} = \cup_{x \in S_v} N^+(x)$. Note that we have $|N^+(x)| \leq O(\log n / \pi(x)) \leq O(\log n / \pi(v))$ for each $x \in C_v$ by (see Proposition~3.1 of \cite{BehnezhadDHSS-FOCS19}). Hence, $\hat{A}$ has size at most $O(\log^2 n / \pi(v))$. We go over all vertices $u$ in $\hat{A}$ and check, using \cref{cl:neighborhood-counting-logtime}, whether $I_A(u, v) = 1$ by spending $O(\log n)$ time.
            
            It remains to show that if $I_A(u, v) = 1$, then $u$ must belong to $\hat{A}$. Indeed, we show this holds with probability $1-1/\poly(n)$. To see this, note that $I_A(u, v) = 1$ iff $|N(u) \Delta C_v| \leq \epsilon |C_v| - 1.$ This means $u$ must be adjacent to at least a constant fraction of vertices in $C_v$. Since $S_v$ includes $\Theta(\log n)$ random samples from $C_v$, $u$ is adjacent to at least one with probability $1-1/\poly(n)$.

            \item[$(ii)$] Suppose a vertex $w$ is added to $C_v$. In this case, we go over all vertices $x$ of $N^+(w)$ and on each reevaluate whether $I_A(x, v) = 1$ in $O(\log n)$ time using \cref{cl:neighborhood-counting-logtime}. The needed running time is $O(\log n / \pi(w))$ for $w$. Similar to case $(ii)$ of updating $I_D(v)$, this, overall, takes the same time as in \cref{lem:parametrized-ut} keeping the update-time polylogarithmic.

            \item[$(iii)$] In this case, we just reevaluate $I_A(u, v)$ in $O(\log n)$ time using \cref{cl:neighborhood-counting-logtime}.
        \end{itemize} 
        \item $I_{D'}(u), I_{A'}(u)$: Note that $A'_v$ and $D'_v$ are simply random-subsamples of $A_v$ and $D_v$ respectively. Since we explicitly maintain $A_v$ and $D_v$, we can also explicitly maintain these random subsamples as efficiently, and thus can maintain $I_{D'}(u)$ and $I_{A'}(u)$ accordingly.
    \end{itemize}

    This wraps up the discussion on how we efficiently maintain our data structures. Having all the sets $C_v, D_v, D'_v, A_v, A'_v$ maintained explicitly, we can also maintain the cluster of each vertex formed by \cref{alg:modified-pivot} in polylogarithmic time, concluding the proof of \cref{thm:dynamic}.
\end{proof}

\bibliographystyle{plainnat}
\bibliography{references}
	
\end{document}